\providecommand{\corollaryname}{Corollary}
\providecommand{\factname}{Fact}
\providecommand{\lemmaname}{Lemma}
\providecommand{\theoremname}{Theorem}
\providecommand{\claimname}{Claim}
\DeclareMathOperator{\pe}{\mathrm{p}}
\DeclareMathOperator{\de}{\mathrm{d}}
\newcommand{\BinC}{\operatorname{Bin-Clique}_k^n}
\newcommand{\NP}{\operatorname{NP}}
\newcommand{\ILP}{\operatorname{ILP}}
\newcommand{\GI}{\operatorname{GI}}
\newcommand{\LOP}{\operatorname{LOP}_n}
\newcommand{\bLOP}{\operatorname{Bin-LOP}_n}
\newcommand{\OP}{\operatorname{OP}_n}
\newcommand{\bOP}{\operatorname{Bin-OP}_n}
\newcommand{\UnC}{\operatorname{Clique}_k^n}
\newcommand{\Cl}{\operatorname{Clique}}
\newcommand{\RES}{\operatorname{Res}}
\newcommand{\PHP}[0]{\ensuremath{\textsc{PHP}}}
\newcommand{\LS}[0]{\ensuremath{\textsc{LS}}}
\newcommand{\val}[0]{\ensuremath{\textsf{val}}}
\DeclareMathOperator{\pPHP}{\mathrm{PHP}}
\DeclareMathOperator{\BinPHP}{\mathrm{Bin-PHP}}
\newcommand{\tuple}[1]{\mathbf{#1}}
\newcommand{\SA}[0]{\ensuremath{\textsc{SA}}}
\title{Proof complexity and the binary encoding of combinatorial principles\footnotemark[1]}
\author{Stefan Dantchev  \footnotemark[2] \and Nicola Galesi  \footnotemark[3] \and Abdul Ghani \footnotemark[2] \and
 Barnaby Martin \footnotemark[2]
}
\begin{document}
\maketitle

\renewcommand{\thefootnote}{\fnsymbol{footnote}}

\footnotetext[1]{This paper is an expanded version of ``Resolution and the binary encoding of combinatorial principles'' from the 34th Computational Complexity Conference (CCC) 2019 and ``Sherali-Adams and the binary encoding of combinatorial principles'' from the 14th Latin American Theoretical Informatics Symposium (LATIN) 2020.}
\footnotetext[2]{Department of Computer Science, Durham University, U.K.}
\footnotetext[3]{Dipartimento di Informatica, Sapienza Universit\`a Roma.}

\begin{keywords}
Propositional proof complexity, Resolution, Lift-and-Project Methods, Sherali-Adams, Binary encoding
\end{keywords}

\begin{AMS}
  68Q25,
  03F20
\end{AMS}

\newtheorem{thm}{\protect\theoremname}
\newtheorem{lem}[thm]{\protect\lemmaname}
\newtheorem{fact}[thm]{\protect\factname}
\newtheorem{cor}[thm]{\protect\corollaryname}
\newtheorem{claim}[thm]{\protect\claimname}

\makeatother

\begin{abstract}
We consider proof complexity in light of the unusual \emph{binary} encoding of certain combinatorial principles. We contrast this proof complexity with the normal unary encoding in several refutation systems, based on Resolution and Integer Linear Programming.

We firstly consider $\RES(s)$, which is an extension of Resolution working on $s$-DNFs. We prove an exponential lower bound of $n^{\nicefrac{\Omega(k)}{\de(s)}}$ for the size of refutations of the binary version of the  $k$-Clique Principle in $\RES(s)$, where \textcolor{black}{$s=o((\log \log n)^{\nicefrac{1}{3}})$} and $\de(s)$ is a doubly exponential function. Our result improves that of Lauria \mbox{et al.} who proved a similar lower bound for $\RES(1)$, \mbox{i.e.} Resolution. For the $k$-Clique and other principles we study, we show how lower bounds in Resolution for the unary version follow from lower bounds in $\RES(\log n)$ for the binary version, so we start a systematic study of the complexity of proofs in Resolution-based systems for families of contradictions given in the binary encoding.

We go on to consider the binary version of the (weak) Pigeonhole Principle $\BinPHP^m_n$. We prove that  for any \textcolor{black}{$\delta,\epsilon>0$, $\BinPHP^m_n$ requires refutations of size  $2^{n^{1-\delta}}$ in $\RES(s)$ for $s=O(\log^{\frac{1}{2}-\epsilon}n)$}. Our lower bound {\color{black}cannot be improved substantially with the same method} since for $m \geq 2^{\sqrt{n\log n}}$ we can prove there are {\color{black} $2^{O(\sqrt{n \log n})}$  size refutations}  of $\BinPHP^m_n$  in $\RES(\log n)$. This is a consequence of the like upper bound for the unary weak Pigeonhole Principle of Buss and Pitassi.

We consider the Sherali-Adams (\SA) refutation system where we prove lower bounds for both rank and size. For the unary encoding of the Pigeonhole Principle and the Ordering Principle, it is known that linear rank is required for refutations in \SA, although both admit refutations of polynomial size. We prove that the binary encoding of the (weak) Pigeonhole Principle $\BinPHP^m_n$ requires exponentially-sized (in $n$) \SA\ refutations, whereas the binary encoding of the Ordering Principle admits logarithmic rank, polynomially-sized \SA\ refutations. 

We continue by considering a natural refutation system we call ``\SA+Squares'', intermediate between \SA\ and Lasserre (Sum-of-Squares). This has been studied under the name static-$\mathrm{LS}^\infty_+$ by Grigoriev et al. In this system, the unary encoding of the Linear Ordering Principle $\LOP$ requires {\color{black} $O(n)$} rank while the unary encoding of the Pigeonhole Principle becomes constant rank. Since Potechin has shown that the rank of $\LOP$ in Lasserre is $O(\sqrt{n}\log n)$, we uncover an almost quadratic separation between \SA+Squares and Lasserre in terms of rank. Grigoriev \mbox{et al.} noted that the unary Pigeonhole Principle has rank $2$ in \SA+Squares and therefore polynomial size. Since we show the same applies to the binary $\BinPHP^{n+1}_n$, we deduce an exponential separation for size between \SA\ and \SA+Squares.

\end{abstract}

\section{Introduction} Various fundamental combinatorial principles used in proof complexity may be given in first-order logic as sentences $\varphi$ with no finite models. Riis discusses in \cite{SorenGap} how to generate from $\varphi$ a family of CNFs, the $n$th of which encoding the claim that $\varphi$ has a model of size $n$, which are hence contradictions. 

{\color{black}Following Riis, it is typical to encode the existence of the witnesses in longhand with a big disjunction of the form $v_{\tuple{a},1} \vee \ldots \vee v_{\tuple{a},n}$,\footnote{Here $\tuple{a}$ is a sequence of universal variables preceding the single existential variable the disjunction is witnessing.} that we designate the \emph{unary encoding}. 
As can be observed, in the unary encoding a solitary true literal tells us which is the witness. However one can think to encode the existence of 
such witnesses {\em succinctly}  by using a \emph{binary encoding}: each witness $j$ in the model of size $n$ can be captured by
$\log n$ variables  $\omega_{\tuple{a},h}^{j_h}$ capturing the parity $j_h$ of each bit $h$ of the the binary encoding $\mathrm{bin}(j)$ of $j$. 
The binary encoding of combinatorial statements is a natural extension to propositional formulas of the notion of the {\em bit-graph representation} of functions. 

 One of the main aims of proof complexity is to find hard combinatorial properties whose propositional translation might  lead to hard-to-prove formulas. The complexity of proving  formulas in proof systems  is measured  as a function of the size (or other measures like, for instance, the maximal width in CNFs) of the formula to be proved.   Hence  combinatorial principles encoded in binary are interesting to study in proof complexity: on the one hand they  preserve the combinatorial structure of the  principle encoded, on the other hand  they give a more succinct  propositional representation  of the formula to be studied that could make easier the task of obtaining strong lower bounds. Additionally, the binary encoding is symmetric in true and false. This intuition was leading many recent works proving hardness results for the complexity of proofs in several distinct proof systems and for different proof complexity measures.

In light of this, Thapen and Skelley considered in \cite{TS11} the binary encoding of a combinatorial principle on $k$-turn games $\GI_3$ and proved an exponential lower bound for refuting $\GI_3$ in \emph{Resolution}.  Several other examples followed and  more recently  the binary encoding of the {\em Pigeonhole principle} has been considered in several works. In the work \cite{DBLP:conf/focs/HrubesP17}, it was used to prove new size lower bounds for Cutting Planes, by a new technique. In the work \cite{DBLP:journals/jsyml/AtseriasMO15}, it was used to prove lower bounds for $\RES(s)$ refutations (which involved the relativised version of the weak pigeonhole principle). In the very recent work \cite{DBLP:journals/eccc/GoosNPRSR20}, it is used for the generalisation and simplification of the  $\NP$-completeness of automatising Resolution \cite{DBLP:journals/jacm/AtseriasM20}. Finally, in another recent work \cite{DBLP:journals/eccc/ItsyksonR20}, where it is called the \emph{bit Pigeonhole Principle}, it is used in a proof of lower bounds for $k$-party communication complexity. However, binary encodings are meaningful to apply to other statements as well and also to other proof complexity measures. The work \cite{DBLP:journals/combinatorica/LauriaPRT17} solves an important open problem on  the complexity of proofs in Resolution of a combinatorial principle expressing the presence of a $k$-clique in graphs, in the case of a binary encoding.   
 Several techniques to  prove {\em space} proof complexity lower bounds were applied successfully on the binary encoding of principles \cite{DBLP:journals/siamcomp/FilmusLNRT15,DBLP:journals/jacm/BonacinaG15,DBLP:journals/siamcomp/BonacinaGT16}.   
 
  In all these cases, considering the binary encoding led to significant lower bounds in an easier way than for the unary case. 
Of course  the idea of considering succinct encodings is not new and is not limited to proof complexity. Use of the binary encoding in bounded arithmetic seems to predate its use in proof complexity.  Furthermore, since the succinctness of the encoding of the formulas might affect the running time of routines having formulas as input, it is no surprise that binary encodings  have been studied  systematically  in the ``dual'' applied area of SAT-solving \cite{Commander,JustynaBook}, where it is usual to try different encodings of the $1$-from-$n$ constraint to speed-up the running time of SAT-solvers both on satisfiable and unsatisfiable formulas. In \cite{JustynaBook,Walsh00}, what we call the binary encoding is referred to as \emph{logarithmic}. 

 Merging the results in  \cite{DantchevGalesiMartin,DBLP:conf/latin/DantchevGM20}, the central thrust of this work is to start a systematic study contrasting the proof complexity between the unary and binary encodings of natural combinatorial principles. 
To compare the complexity of proving propositional binary and unary encodings  we will consider several refutation systems,  three distinct combinatorial principles (and their variants) and different complexity measures.
One of our  main contributions is a  lower bound similar to that obtained in \cite{DBLP:journals/combinatorica/LauriaPRT17}  for the binary  principle expressing the presence of $k$-cliques in graphs,  for an extension of the Resolution system which allows  bounded conjunctions, $\RES(s)$. In obtaining this lower bound 
we devise a new technique to prove size lower bounds in $\RES(s)$ which is suitable for binary encodings and which we also successfully apply to the the case of the Pigeonhole Principle. 
\newline }

\section{Overview of the results}

We consider three main combinatorial principles to contrast binary and unary proof complexity: (1) the $k$-Clique Formulas,  $\UnC(G)$;
 (2) the (weak) Pigeonhole Principle $\pPHP^m_n$;   and (3)  the (Linear) Ordering Principle, ($\mathrm{L}$)$\mathrm{OP}_n$.  
 
 The \emph{$k$-Clique Formulas} introduced in  \cite{Beyersdorff:2013:TOCL,DBLP:journals/toct/BeyersdorffGLR12,DBLP:conf/coco/BeameIS01} are formulas stating that a given graph $G$ does have a $k$-clique and are therefore unsatisfiable when $G$ does not contain a $k$-clique.  The  Pigeonhole Principle states that a total mapping $f:[m] \rightarrow [n]$ has necessarily a collision when $m>n$. Its propositional formulation in the negation, $\pPHP^m_n$ is well-studied in proof complexity (see among others: \cite{Haken,DBLP:journals/siamcomp/SegerlindBI04,DantchevR01,DBLP:journals/jacm/Raz04,DBLP:journals/tcs/Razborov03,10.1007/3-540-46011-X_8,Ben-sasson99shortproofs,BussP97,BeyersdorffGL10,DBLP:conf/focs/BeameP96,DBLP:journals/iandc/AtseriasBE02,DBLP:journals/tcs/Atserias03,DBLP:journals/jcss/MacielPW02}). 
The $\mathrm{(L)OP}_n$ formulas encode the negation of the (Linear) Ordering Principle which asserts that each finite (linearly) ordered set has a maximal element and was introduced and studied, among others, in the works \cite{DBLP:journals/acta/Krishnamurthy85,DBLP:journals/acta/Stalmarck96,DBLP:journals/cc/BonetG01}.

\textcolor{black}{Our work spans different proof systems. In fact, they are all actually refutation systems, though we often use the terms interchangeably.}

\subsection{Resolution and Res(s)}
{\color{black}
$\RES(s)$ is a refutational proof system extending Resolution to $s$-bounded DNFs, introduced by Kraj\'{\i}\v{c}ek in \cite{kraresk}.  
As a generalisation of Resolution, the complexity of proofs in $\RES(s)$ for the unary encoding was largely analysed in several works
\cite{DBLP:journals/iandc/AtseriasBE02,DantchevR03,EGM,DBLP:journals/siamcomp/SegerlindBI04, DBLP:journals/cc/Alekhnovich11,RazborovAnnals}.    

A principal motivation for the present work is to approach  size lower bounds of refutations in Resolution for families of contradictions in the  usual unary  encoding,  by looking at the complexity of proofs in $\RES(s)$ for the corresponding families of contradictions where witnesses are given in the binary encoding.  This method is justified by our observation, specified in Lemmas~\ref{lem:reslog} and \ref{lem:php-un-bin}, that for a family of contradictions encoding a principle which is expressible as a $\Pi_2$ first-order  formula having no finite models, short $\RES(\log n)$ refutations of  their  binary encoding can be obtained from short Resolution refutations for the unary encoding.  
In light of this observation we begin with the study of the binary version of the $k$-Clique Formula. Indeed a significant  size lower bound for the unary version of the $k$-Clique  Formulas in full Resolution is a long-standing open problem.  At present such lower bounds  are known only for restrictions of  Resolution: in the  treelike case \cite{Beyersdorff:2013:TOCL}, and, in a recent major breakthrough, for the case of read-once (or regular) Resolution  \cite{DBLP:journals/corr/abs-2012-09476}. 
}


\subsection{Sherali-Adams}

It is well-known that questions on the satisfiability of propositional CNF formulas may be reduced to questions on feasible solutions for certain Integer Linear Programs (ILPs). In light of this, several ILP-based proof (more accurately, refutation) systems have been suggested for propositional CNF formulas, based on proving that the relevant ILP has no solutions. 
Typically, this is accomplished by relaxing an ILP to a continuous Linear Program (LP), which itself may have (non-integral) solutions, and then reconstraining this LP iteratively until it has a solution iff the original ILP had a solution \textcolor{black}{(which happens at the point the LP has no solution)}. 
Among the most popular ILP-based refutation systems are Cutting Planes \cite{Gomory1960,Chvatal73} and several others proposed by Lov\'asz and Schrijver \cite{LovaszS1991}.

Another method for solving ILPs was proposed by Sherali and Adams \cite{SheraliA90}, and was introduced as a propositional refutation system in \cite{StefanGap}. Since then it has been considered as a refutation system in the further works \cite{TCS2009,NarrowMaximallyLong}. The Sherali-Adams system (\SA) is of significant interest as a static variant of the Lov\'asz-Schrijver system without semidefinite cuts (\LS). It is proved in \cite{laurent01comparison} that the \SA\ \emph{rank} of a polytope, roughly the number of iterations that  must be reconstrained until it is empty, is less than or equal to its \LS\ rank; hence we may claim that with respect to rank \SA\  is at least as strong as \LS\ (though it is unclear whether it is strictly stronger).

The binary encoding implicitly enforces an at-most-one constraint on the witness at the same time as it does the at-least-one. \textcolor{black}{That is, it specifies a unique witness.} Another way to enforce this is with unary functional constraints of the form $v_{\tuple{a},1} + \ldots + v_{\tuple{a},n} = 1$ (\mbox{cf.} the unary functional encoding of Section~\ref{subsec:theory}), where $\tuple{a}$ comes from a sequence of universal variables preceding the single existential variable the sum is witnessing. \textcolor{black}{This contrasts with the standard unary encoding which would be of the form $v_{\tuple{a},1} + \ldots + v_{\tuple{a},n} \geq 1$. We paraphrase our new} variant as being (the unary) \emph{encoding with equalities} or ``\SA-with-equalities'' and study this variant explicitly.  

 \subsection{\SA+Squares}
 
 We continue by considering a refutation system we call \SA+Squares which is between \SA\ and Lasserre (Sum-of-Squares) \cite{Lasserre2001} (see also \cite{laurent01comparison} for comparison between these systems). \SA+Squares appears as Static-LS$^\infty_+$ in \cite{russians}, where \SA\ is denoted Static-LS$^\infty$. In this system one can always assume the non-negativity of (the linearisation of) any squared polynomial. In contrast to our system \SA-with-equalities, we will see that the rank of the unary encoding of the Pigeonhole Principle is 2, while the rank of the Ordering Principle is linear. We prove this by showing a certain moment matrix in positive semidefinite.
 
\subsection{Three combinatorial principles}

We will now delve more deeply into known and new results for our three combinatorial principles. These are depicted in a visually agreeable fashion in Tables~\ref{tab:1} and \ref{tab:2}. \textcolor{black}{The principles themselves will be introduced in the appropriate section, though there is a table at the end of the appendix in which they can be conveniently found together in both the unary and binary encodings.} Let us adopt the following convention, which we will exemplify with the Pigeonhole Principle. \PHP\ refers to the principle (independently of the coding of the witnesses), $\PHP^m_n$ refers to the unary encoding and $\BinPHP^m_n$ refers to the binary encoding.

\begin{table}[ht]
\begin{center}
\begin{tabular}{|c|c|c|c|}
\hline
$\RES(s)$ & unary  & binary \\
\hline
 &  &  \textcolor{black}{not fpt}  \\
$\mathrm{(Bin\mbox{-})Clique^k_n}$ & open   & $n^{\nicefrac{\Omega(k)}{\de(s)}}$ \\
&  &  Corollary~\ref{cor:BinCreal} \\
\hline
 & subexponential upper & almost exponential lower  \\
$\mathrm{(Bin\mbox{-})PHP}^m_n$ &  $2^{O(\sqrt{n\log n})}$ & $2^{n^{1-\delta}}$ \\
 & \cite{BussP97} & Theorem~\ref{thm:wphp} \\
\hline
 & polynomial upper & polynomial upper  \\
$\mathrm{(Bin\mbox{-})OP}_n$ &  $O(n^3)$ & $\textcolor{black}{O(n^3)}$ \\
 & \cite{DBLP:journals/acta/Stalmarck96} & Lemma~\ref{lem:bi-op-res} \\
\hline

\end{tabular}
\end{center}
\vspace{0.3cm}
\begin{center}
\begin{tabular}{|c|c|c|c|}
\hline
$\SA$ size & unary  & binary \\
\hline
 & quadratic upper  & exponential tight \\
$\mathrm{(Bin\mbox{-})PHP}^{n+1}_n$  & $O(n^2)$ & $2^{\Theta(n)}$ \\
 & \cite{Rhodes07} & Corollary \ref{cor:size-bound} \\
\hline
\end{tabular}
\end{center}
\vspace{0.3cm}

\begin{center}
\begin{tabular}{|c|c|c|c|}
\hline
$\SA$ rank & unary  & binary \\
\hline
 & linear tight & logarithmic upper \\
$\mathrm{(Bin\mbox{-})LOP}_n$ & $n-2$ & $2\log n$ \\
& \cite{TCS2009} & Corollary~\ref{corollary:9} \\
\hline
\end{tabular}
\end{center}
\vspace{0.3cm}
\caption{Comparison of proof complexity between unary and binary encodings. In the first table, $\de(s)$ is a doubly exponential function and \textcolor{black}{consider $m$ to be} exponential in $n$. \textcolor{black}{A fixed parameter tractable (fpt) complexity takes the form of $f(k)n^{O(1)}$ and is ruled out by our result for $\mathrm{Bin\mbox{-}Clique^k_n}$ in $\RES(s)$.}}
\label{tab:1}
\end{table}

\begin{table}[H]
\begin{center}
\begin{tabular}{|c|c|c|c|c|}
\hline
unary rank & \SA & \SA-with-equalities & \SA+Squares & Lasserre \\
\hline
& linear & linear & constant & constant \\
$\PHP^{n+1}_n$ & tight & tight & &  \\
& \cite{TCS2009} &\cite{TCS2009} & \cite{russians} & \cite{russians} \\
\hline 
& linear & constant & linear & square root \\
$\mathrm{LOP}_n$ & tight &  & tight & almost tight \\
 & \cite{TCS2009} & Theorem \ref{theorem:8} & Theorem \ref{theorem:11} & \cite{potechin2020sum} \\
\hline
\end{tabular}
\end{center}
\vspace{0.3cm}
\begin{center}
\begin{tabular}{|c|c|c|c|c|}
\hline
binary size & \SA& \SA+Squares & Lasserre \\
\hline
& exponential & polynomial & polynomial\\
$\mathrm{Bin\mbox{-}}\PHP^{n+1}_n$ & lower & upper & upper \\
 & Theorem \ref{theorem:size-bound} & Theorem~\ref{thm:bin-php-sa-squares} & a fortiori\\
\hline 
 & polynomial  & polynomial & polynomial \\
$\mathrm{Bin\mbox{-}LOP}_n$ & upper & upper & upper \\
& Corollary \ref{corollary:9}  & a fortiori & a fortiori \\
\hline
\end{tabular}
\end{center}
\vspace{0.3cm}
\caption{A comparison of rank/degree and size for our principles in Sherali-Adams and its relatives. \textcolor{black}{Here by, e.g., `linear' we mean in the parameter $n$ parameterising both families, and not the number of variables.} }
\label{tab:2}
\end{table}

\subsubsection{The $k$-Clique Formulas}

{\color{black} Deciding whether a graph has a $k$-clique is an important  computational problem considered within computer science and its applications.} It can be decided in time 
$n^{O(k)}$ by a brute force algorithm. It is then of the utmost  importance to understand whether given algorithmic primitives are sufficient to design algorithms solving the Clique problem more efficiently than the trivial upper bound.  
Resolution refutations for the formula $\UnC(G)$ (respectively any CNF $F$), 
can be thought \textcolor{black}{of} as the execution trace of an algorithm, whose primitives are defined by the rules of the Resolution system, 
searching for a $k$-clique inside $G$ (respectively deciding the satisfiability of $F$).  Hence  understanding whether there are $n^{\Omega(k)}$ size lower bounds in Resolution for refuting $\UnC(G)$ would then 
answer the above question for algorithms based on Resolution primitives.  
\textcolor{black}{This question was posed in \cite{Beyersdorff:2013:TOCL} where they  proved that for canonical graphs  not containing 
$k$-cliques, that is $k-1$-partite complete graphs,  $\UnC(G)$ can be refuted efficiently, that is in size $O(n^22^k)$. 
In looking for classes of graphs making hard the formula $\UnC(G)$ for Resolution, \cite{Beyersdorff:2013:TOCL} considered the case when  $G$ is a random graph obtained by the Erd\H{o}s-R\'enyi distribution on graphs.  
For graphs $G$ in this family, they proved that $\UnC(G)$  requires $n^{\Omega(k)}$ size refutations in treelike Resolution, obtaining the desired lower bound but only for refutations restricted to tree form. Whether the lower bound for $\UnC(G)$ holds for general DAG-like  Resolution when $G$ is a  Erd\H{o}s-R\'enyi random graph is a major open problem which motivates this paper and towards which we contribute. 
This specific problem acquired even more importance as a consequence of two more recent results. On the one hand very recently Atserias et al. in \cite{DBLP:conf/stoc/AtseriasBRLNR18} proved an $n^{\Omega(k)}$ lower bound for $\UnC(G)$ when $G$ is a  Erd\H{o}s-R\'enyi random graph for the case of {\em read-once} Resolution refutations, that is a restriction of DAG-like Resolution, where each variable can be resolved at most once along any path in the refutation.  On the other hand in  the work \cite{DBLP:journals/combinatorica/LauriaPRT17}, Lauria et al. consider the binary encoding of Ramsey-type propositional statements, having as a special case a binary version of $\UnC(G)$: $\BinC(G)$. For this binary $k$-Clique Formula they obtain optimal $n^{\Omega(k)}$ size lower bounds for unrestricted Resolution. }

\textbf{Our Results}. We prove (in Corollary~\ref{cor:BinCreal}) an  $n^{\nicefrac{\Omega(k)}{\de(s)}}$ lower bound for the size of refutations of $\BinC$ in $\RES(o((\log \log n)^{\nicefrac{1}{3}}))$, where $\de(s)$ is a doubly exponential function and $G$ is a random  graph as defined in \cite{Beyersdorff:2013:TOCL}. 

\subsubsection{The (weak) Pigeonhole Principle} 

Lower bounds for $\RES(s)$ have appeared variously in the literature for the (weak) Pigeonhole Principle. Of most interest to us are those for the (moderately weak) Pigeonhole Principle $\pPHP^{2n}_n$, for $\RES(\sqrt{\log n / \log \log n})$ in \cite{DBLP:journals/siamcomp/SegerlindBI04}, improved to $\RES(\epsilon \log n / \log \log n)$ in \cite{RazborovAnnals}. Additionally, Buss and Pitassi, in \cite{BussP97}, proved  an upper bound of $2^{O(\sqrt{n\log n})}$ for the size of refuting $\pPHP^m_n$ in $\RES(1)$ when $m \geq 2^{\sqrt{n\log n}}$. 

{\color{black}In \cite{DBLP:journals/jsyml/AtseriasMO15}, an optimal lower bound is proven for the binary encoding of a relativised version of the pigeon-hole principle in $\RES(\log )$. Their technique, however, heavily depends on the relativisation and the specific choice of the parameters: no set of $\alpha n$ pigeons out of $n^\beta$ in total can be consistently mapped onto $n$ holes for any $\alpha, \beta > 1$. Proving a similar lower bound for the standard, unrelativised, version is a big question that remains wide open.
} 

In \cite{TCS2009} Dantchev et al. have proved that the \SA\ rank of (the polytope associated with) $\pPHP^{n+1}_n$ is $n-2$ (where $n$ is the number of \textcolor{black}{holes}). That there is a polynomially-sized refutation in \SA\ of $\pPHP^{n+1}_n$ is noted in  \cite{Rhodes07}. Grigoriev \mbox{et al.} have noted in \cite{russians} that there is a rank 2 and polynomially sized refutation of $\pPHP^{n+1}_n$ in Lasserre, and it is straightforward to see that this may be implemented in \SA+Squares.


\textbf{Our Results}. We prove that in $\RES(s)$, for all $\epsilon >0$ and $s\leq \log^{\textcolor{black}{\frac{1}{2}-\epsilon}}(n)$, the shortest proofs of $\BinPHP^m_n$, require size  $2^{n^{1-\delta}}$, for any $\delta>0$ (Theorem \ref{thm:wphp}). This is the first size lower bound known for the  $\BinPHP^m_n$ in $\RES(s)$. As a by-product of this lower bound we prove a lower bound of the order $2^{\Omega(\frac{n}{\log n})}$ (Theorem \ref{thm:reswphp}) for the size of the shortest Resolution refutation of $\BinPHP^m_n$.  Our lower bound for $\RES(s)$ is obtained through a technique that merges together the random restriction method, an inductive argument on the $s$ of $\RES(s)$ and the notion of {\em minimal covering} of a $k$-DNF of \cite{DBLP:journals/siamcomp/SegerlindBI04}.  

Since we are not using any (even weak) form of Switching Lemma (as for instance in  \cite{DBLP:journals/siamcomp/SegerlindBI04,DBLP:journals/cc/Alekhnovich11}), we consider how tight is our lower bound in  $\RES(s)$. 
We prove that $\BinPHP^m_n$  (Theorem \ref{thm:phpubtreelike}) can be refuted in size $2^{O(n)}$  in treelike $\RES(1)$.  This upper bound contrasts with the unary case, $\pPHP^m_n$, which instead requires  treelike $\RES(1)$ refutations of size $2^{\Omega(n \log n)}$, as proved in \cite{BeyersdorffGL10, DantchevR01}.

For the Pigeonhole Principle, similarly to the $k$-Clique Principle, we can prove that short $\RES(\log n)$ refutations for $\BinPHP^m_n$ can be efficiently obtained from short $\RES(1)$ refutations of $\pPHP^m_n$ (Lemma \ref{lem:php-un-bin}). This allows us to prove that our lower bound is almost optimal: from the aforementioned result of Buss and Pitassi \cite{BussP97} we deduce an exponential lower bound is not possible for $\BinPHP^m_n$ in $\RES(\log n)$.

We prove that the binary encoding $\BinPHP^m_n$ requires exponential size in \SA\ (Theorem~\ref{theorem:size-bound}), contrasting with the mentioned polynomially-sized refutations of the unary $\pPHP^m_n$. Finally, we prove that  $\BinPHP^m_n$ has polynomially sized and rank 2 refutations in \SA+Squares (Theorem~\ref{thm:bin-php-sa-squares}), in line with the corresponding result for the unary Pigeonhole Principle from \cite{russians}.
 


\subsubsection{Ordering Principles}

The Linear ordering formulas $\LOP$ claim that a linear ordering of some domain has no minimal element. In the case of finite domains, it is false.
They were used in \cite{DBLP:journals/cc/BonetG01,DBLP:journals/tocl/GalesiL10} as families of formulas witnessing  the optimality of the 
size-width tradeoffs for Resolution (\cite{Ben-sasson99shortproofs}), so that they require high width to be refuted, but still admit polynomial size refutations in Resolution. {\color{black} If we  drop the stipulation that the order is linear (total), we call the the principle $\OP$}.

In \cite{TCS2009} we showed that the \SA\ rank of (the polytope associated with) $\LOP$ is $n-2$. Since it is known that \SA\ polynomially simulates Resolution (see \mbox{e.g.} \cite{TCS2009}), it follows there is a polynomially-sized refutation in \SA\ of $\LOP$. Potechin has proved that $\LOP$ has refutations in Lasserre of degree $O(\sqrt{n}\log n)$. Though he uses a different version of $\LOP$ from us, we will see that his upper bound still applies.

\textbf{Our Results}. 
Firstly, we prove that $\bOP$ is polynomially provable in Resolution. 
Secondly, and in the world of \SA, we prove that the (unary) encoding of the Ordering Principle with equalities has rank 2 and polynomial size. This allows us to prove that $\bLOP$ has \SA\ rank at most $2 \log n$ and polynomial size. We prove a rank lower bound in \SA+Squares for $\LOP$ of $\Omega(n)$, thus giving a quadratic separation in terms of rank between \SA+Squares and Lasserre.

{\color{black}
\subsection{Main technical contributions}
As observed, one of the principal contributions of this work is the  $n^{\nicefrac{\Omega(k)}{\de(s)}}$ size lower  bounds for $\RES(s)$ refutations of  
$\BinC(G)$ when $G$ is a random graph as, for example, defined in \cite{Beyersdorff:2013:TOCL}. The interest of this lower bound lies in the fact that the Resolution complexity of $\UnC(G)$  at present is unknown and, as we prove in this paper, this  lower bound would follow from
a meaningful lower bound for $\BinC(G)$  in $\RES(\log n)$. Our result for $\RES(s)$ for  $\BinC(G)$ hence contributes towards this goal.     
 
The main mathematical tool  used so far to prove size lower bounds in $\RES(s)$ is a simplified version of the H\r{a}stad Switching Lemma \cite{hastad} which was introduced in the work of Buss, Segerlind and Impagliazzo \cite{DBLP:journals/siamcomp/SegerlindBI04} and later used  (and  slightly improved  in \cite{RazborovAnnals}) 
in all other works proving size lower bounds for $\RES(s)$ \cite{DBLP:journals/cc/Alekhnovich11}. Only for $\RES(2)$, in the work \cite{DBLP:journals/iandc/AtseriasBE02}, 
there is an example of a size lower bound using a random restriction method inherited from Resolution.
 
In this work we devise a recursive method to prove size lower bounds  in $\RES(s)$, which is especially suitable for binary principles and runs by recursion from  $s$ to $1$. Contrary to previous methods, our method does not use any form of the H\r{a}stad Switching Lemma. The main ingredients of our approach are:  (1)  special classes of random restrictions, which are especially suited for binary principles and can be easily composed recursively; (2)  the notion {of \em covering number} for a DNF (that is the minimal number of literals covering all the terms of a DNF), which was introduced in  \cite{DBLP:journals/siamcomp/SegerlindBI04}. 
The high level idea of the lower bound proof is as follows. Setting the covering number in the proper way, the recursion process applied on an allegedly small refutation a binary principle in $\RES(s)$ ends with a small $\RES(1)$, that is Resolution, refutation of a simplification of the same principle defined on  a smaller  but still  meaningful domain. At this point it is sufficient to prove (or to use if known) a size  lower bound for the principle in Resolution.

The lower bound for the $k$-Clique Formulas in  $\RES(s)$ is obtained by capturing a hardness property for the $k$-Clique Formulas which closely follows those defined  in \cite{Beyersdorff:2013:TOCL} for the unary case and later used and extended in \cite{MassimoRamsey,DBLP:conf/stoc/AtseriasBRLNR18}. However, differently from previous lower bounds, we isolate the hardness property in a definition
(see Definition \ref{def:extension}) and a lemma called the {\em Extension Lemma} (see Lemma \ref{lem:extension}), whose aim is that of capturing the existence of non-trivial families of partial assignments that applied to the $k$-Clique Formula do not trivialise its Resolution refutations.  This is inspired  by the Atserias-Dalmau \cite{DBLP:journals/jcss/AtseriasD08} approach to prove width lower bounds (and hence size lower bounds) for Resolution. 
}

\subsection{Contrasting  unary and binary principles} 
\label{subsec:theory}

We go on to consider the relative properties of unary and binary encodings, especially for Resolution. We take the case in which the principle is binary and involves total comparison on all its relations. That is, where there are axioms of the form $v_{i,j} \oplus v_{j,i}$, where $\oplus$ indicates XOR, for each $i \neq j$. We argue that the proof complexity in Resolution of such principles will not increase significantly (by more than a polynomial factor) when shifting from the unary encoding to the binary encoding.

The \emph{unary functional} encoding of a combinatorial principle replaces the big disjunctive clauses of the form $v_{i,1} \vee \ldots \vee v_{i,n}$, with $v_{i,1} + \ldots + v_{i,n} = 1$, where addition is made on the natural numbers. We already met this in the context of \SA, but it is equivalent to augmenting the axioms $\neg v_{i,j} \vee \neg v_{i,k}$, for $j \neq k \in [n]$. One might argue that the unary functional encoding is the true unary analog to the binary encoding, since the binary encoding naturally enforces that there is a single witness alone. It is likely that the non-functional formulation was preferred for its simplicity (similarly as the Pigeonhole Principle is often given in its non-functional formulation).

In Subsection \ref{subsec:func}, we prove that the Resolution refutation size increases by only a quadratic factor when moving from the binary encoding to the unary functional encoding. This is interesting because the same does not happen for treelike Resolution, where the unary encoding of the Pigeonhole Principle has complexity $2^{\Theta(n \log n)}$  \cite{BeyersdorffGL10, DantchevR01}, while, as we prove in Subsection \ref{subsec:treephp} (Theorem \ref{thm:phpubtreelike}), the binary (functional) encoding is $2^{\Theta(n)}$. The unary encoding complexity is noted in \cite{DantchevR03} and remains true for the unary functional encoding with the same lower-bound proof. The binary encoding complexity is addressed directly in this paper.  


\subsection{Structure of the paper}

After the preliminaries in Section~\ref{sec:pre}, we move on to the $\RES(s)$ lower bounds for $\BinC$ in Section~\ref{sec:k-clique} and $\BinPHP^m_n$ in Section~\ref{sec:wphp}. In Section~\ref{sec:SAbinPHP} we prove our \SA\ size lower bound for $\BinPHP^m_n$ and in Section~\ref{sec:SAeqLOP} we prove our \SA\ size and rank upper bounds for the Linear Ordering Principle with equalities, which apply, as a corollary, also to  to $\bLOP$. In Section~\ref{sec:SA+AS}, we introduce \SA+Squares and discuss upper bounds for $\PHP$ and give a lower bound for $\LOP$. In Section~
\ref{sec:unbin}, we make further comments on the constrast between unary and binary encodings in general for Resolution. In Section~\ref{sec:fin-rem}, we make some final remarks.

\textcolor{black}{Two objects inhabit an appendix.} 
\textcolor{black}{Firstly,} an argument that Potechin's Lasserre upper bound for $\LOP$ from \cite{potechin2020sum} applies also to our encoding. \textcolor{black}{Secondly, a table recapping the unary and binary encodings of the main principles}.

\section{Preliminaries}
\label{sec:pre}

Let $[n]$ be the set $\{1,\ldots,n\}$.
Let us assume, without loss of much generality, that $n$ is a power of $2$. Cases where $n$ is not a power of $2$ are handled in the binary encoding by explicitly forbidding possibilities. \textcolor{black}{Let $\mathrm{bin}(a)$ be the sequence $a_1\ldots a_{\log n}$ which is $a$ written in binary, say from the most significant digit to the least.}

If $v$ is a propositional variable, then $v^0=\neg v$ indicates the negation of $v$, while $v^1$ indicates $v$. 
We denote by $\top$ and $\bot$ the Boolean values {}``true'' and
{}``false'', respectively. A \emph{literal} is either a propositional
variable or a negated variable. We will denote literals by small
letters, usually $l$'s. An $s$\emph{-conjunction} ($s$-\emph 
{disjunction})
is a conjunction (disjunction) of at most $s$ literals. A {\em clause} with $s$ literals is a $s$-disjunction. The width
$w(C)$ of a clause $C$ is the number of literals in $C$. A \emph{term}
($s$\emph{-term}) is either a conjunction ($s$-conjunction) or a
constant, $\top$ or $\bot$. An $s$-\emph{DNF} or $s$\emph{-clause}
($s$\emph{-CNF}) is a disjunction (conjunction) of an unbounded number
of $s$-conjunctions ($s$-disjunctions). We will use calligraphic
capital letters to denote $s$-CNFs or $s$-DNFs, usually ${\cal C}$s for CNFs, ${\cal D}$s for DNFs and ${\cal F}$s for both.
For example, $((v_1 \wedge \neg v_2)\vee (v_2 \wedge v_3) \vee (\neg v_1 \wedge v_3))$ is an example of a $2$-DNF and its negation $((\neg v_1 \vee  v_2)\wedge (\neg v_2 \vee \neg v_3) \wedge (v_1 \vee \neg v_3))$ is an example of a $2$-CNF.

\subsection{Res(s) and Resolution}

We can now describe the propositional refutation system $\RES\left(s\right)$ (\cite{krabook95}). It is used to \emph{
refute} (\mbox{i.e.} to prove inconsistency) of a given set of $s$-clauses
by deriving the empty clause from the initial clauses. There are four
derivation rules:

\begin{enumerate}
\item The $\wedge$-\emph{introduction rule} is\[
\frac{\mathcal{D}_{1}\vee\bigwedge_{j\in J_{1}}l_{j}\quad\mathcal{D}_ 
{2}\vee\bigwedge_{j\in J_{2}}l_{j}}{\mathcal{D}_{1}\vee\mathcal{D}_{2} 
\vee\bigwedge_{j\in J_{1}\cup J_{2}}l_{j}},\]
provided that $\left|J_{1}\cup J_{2}\right|\leq s$.
\item The \emph{cut (or resolution) rule} is\[
\frac{\mathcal{D}_{1}\vee\bigvee_{j\in J}l_{j}\quad\mathcal{D}_{2}\vee 
\bigwedge_{j\in J}\neg l_{j}}{\mathcal{D}_{1}\vee\mathcal{D}_{2}}.\]

\item The two \emph{weakening rules} are\[
\frac{\mathcal{D}}{\mathcal{D}\vee\bigwedge_{j\in J}l_{j}}\quad\textrm 
{and}\quad\frac{\mathcal{D}\vee\bigwedge_{j\in J_{1}\cup J_{2}}l_{j}} 
{\mathcal{D}\vee\bigwedge_{j\in J_{1}}l_{j}},\]
provided that $\left|J\right|\leq s$.
\end{enumerate}
A $\RES(s)$ refutation can be considered as a directed acyclic
graph (DAG), whose sources are the initial clauses, called also axioms,
and whose only sink is the empty clause. We shall define \emph{the
size of a proof} to be the number of internal nodes of the graph,
\mbox{i.e.} the number of applications of a derivation rule, thus ignoring
the size of the individual $s$-clauses in the refutation. In principle the $s$ from {}``$\RES(s)$'' could depend
on $n$ --- an important special case is $\RES(\log n)$.

Clearly, $\RES(1)$ is \emph{(ordinary) Resolution}, working
on clauses, and using only the cut rule, which becomes the usual
resolution rule, and the first weakening rule. Given an unsatisfiable CNF ${\cal C}$, 
and a $\RES(1)$ refutation $\pi$ of ${\cal C}$ the width of $\pi$, $w(\pi)$ is the  maximal width of a  clause in $\pi$.
The width  of refuting ${\cal C}$ in Res(1), $w(\vdash {\cal C})$, is the minimal width over all $\RES(1)$ refutations of ${\cal C}$.

A {\em covering set} for an $s$-DNF ${\cal D}$  is a set of literals $L$ such that each term of ${\cal D}$ has at least one literal in $L$.
The {\em covering number} $c({\cal D})$ of an $s$-DNF ${\cal D}$ is the minimal size of a covering set for ${\cal D}$.  \textcolor{black}{We extend
the definition of covering number to the case of $s$-CNFs: the covering number of a $s$-CNF $F$ is the covering 
number of the DNF obtained by applying De Morgan simplifications to $\neg F$.}   

Let ${\cal F}(v_1\ldots,v_n)$ be a boolean $s$-DNF (resp. $s$-CNF) defined over 
variables $V=\{v_1,\ldots,v_n\}$. A  {\em partial assignment} $\rho$ to ${\cal F}$  is a truth-value assignment to some 
of the variables of ${\cal F}$: $dom(\rho) \subseteq V$.  By ${\cal F}\!\!\!\upharpoonright_\rho$ we denote the formula ${\cal F}'$ over variables in $V\setminus dom(\rho)$ 
obtained from ${\cal F}$ after simplifying in it the variables in $\mathit{dom}(\rho)$ according to the usual boolean simplification rules of clauses and terms.

Similarly to what was done for treelike $\RES(s)$ refutations in \cite{EGM}, if we turn a $\RES(s)$ refutation of a given set
of $s$-clauses $\mathcal{F}$ upside-down, \mbox{i.e.} reverse the edges of the
underlying graph and negate the $s$-clauses on the vertices, we get
a special kind of restricted branching $s$-program whose nodes are labelled by $s$-CNFs and at each node some $s$-disjunction is questioned. The \textcolor{black}{restrictions placed on the branching program} are
as follows.

Each vertex is labelled by an $s$-CNF which partially represents the  
information
that can be obtained along any path from the source to the vertex (this is a \emph{record} in the parlance of \cite{proofs_as_games}).
Obviously, the (only) source is labelled with the constant $\top$.
There are two kinds of queries that can be made by a vertex:

\begin{enumerate}
\item Querying a new $s$-disjunction, and branching on the answer, which
can be depicted as follows.\begin{equation}
\begin{array}{ccccc}
  &  & \mathcal{C}\\
  &  & ?\bigvee_{j\in J}l_{j}\\
  & \top\swarrow &  & \searrow\bot\\
\mathcal{C}\wedge\bigvee_{j\in J}l_{j} &  &  &  & \mathcal{C}\wedge 
\bigwedge_{j\in J}\neg l_{j}\end{array}\label{eq:new-query}\end 
{equation}

\item Querying a known $s$-disjunction, and splitting it according to  
the answer:
\begin{equation}
\begin{array}{ccccc}
  &  & \mathcal{C}{\wedge\bigvee}_{j\in J_{1}\cup J_{2}}l_{j}\\
  &  & ?\bigvee_{j\in J_{1}}l_{j}\\
  & \top\swarrow &  & \searrow\bot\\
\mathcal{C}\wedge\bigvee_{j\in J_{1}}l_{j} &  &  &  & \mathcal{C} 
\wedge\bigvee_{j\in J_{2}}l_{j}\end{array}
\label{eq:split-query}
\end{equation}

\end{enumerate}

\noindent There are two ways of forgetting information,
{\begin{equation}
\begin{array}{c}
{\cal C}_{1}\wedge{\cal C}_{2}\\
\downarrow\\
{\cal C}_{1}\end{array}\qquad\textrm{and}\qquad\begin{array}{c}
\mathcal{C}\wedge\bigvee_{j\in J_{1}}l_{j}\\
\downarrow\\
\mathcal{C}\wedge\bigvee_{j\in J_{1}\cup J_{2}}l_{j}\end{array},\label 
{eq:forget}\end{equation}
the point being that forgetting allows us to equate the information
obtained along two different branches and thus to merge them into
a single new vertex.} \textcolor{black}{For simplicity when calculating the size of refutation subtrees, let us assume that a weakening may be integrated into either side of a query.} A sink of the branching $s$-program must be labelled
with the negation of an $s$-clause from $\mathcal{F}$. Thus the branching $s$-program
is supposed by default to solve the \emph{Search Problem for $\mathcal{F}$}:
given an assignment of the variables, find a clause which is falsified
under this assignment.

The equivalence between a $\RES(s)$
refutation of $\mathcal{F}$ and a branching $s$-program of the kind above is
obvious. Naturally, if we allow querying single variables only, we
get branching $1$-programs -- decision DAGs -- that correspond to Resolution. If we do not
allow the forgetting of information, we will not be able to merge distinct
branches, so what we get is a class of decision trees that correspond
precisely to the treelike version of these refutation systems. The queries of the form (\ref{eq:new-query})
and (\ref{eq:split-query}) as well as forget-rules of the form (\ref 
{eq:forget})
give rise to a Prover-Adversary game (see \cite{proofs_as_games}
where this game was introduced for Resolution). In short, Adversary
claims that $\mathcal{F}$ is satisfiable, and Prover tries to expose him.
Prover always wins if her strategy is kept as a branching program
of the form we have just explained, whilst a good (randomised)  
Adversary's
strategy would show a lower bound on the branching program, and thus
on any $\mbox{Res}\left(s\right)$ refutation of $\mathcal{F}$.

\begin{lemma}
If a CNF $\phi$ has a refutation in $\RES(k+1)$ of size $N$, whose corresponding branching $(k+1)$-program has no $(k+1)$-CNFs of covering number $\geq d$, then $\phi$ has a $\RES(k)$ refutation of size $2^{\textcolor{black}{d+1}}\cdot N$ (which is $\leq e^d \cdot N$ when $d>4$).
\label{lem:covering-number}
\end{lemma}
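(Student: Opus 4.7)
\emph{Plan.} To prove the lemma I will convert the given $(k+1)$-program $T$ into a $k$-program $T'$ by replacing each node $u$ of $T$ by a small gadget $G_u$. Fix, for each $u$, a covering set $L_u$ of its label $\mathcal{C}_u$ with $|L_u|\leq d-1$. The crucial property is that every clause $D$ of $\mathcal{C}_u$ contains some literal $\bar l$ with $l\in L_u$; hence for every assignment $\tau\colon L_u\to\{0,1\}$, each such clause $D$ either becomes satisfied (the $\bar l\in D$ is set true by $\tau$) or loses a literal (its $\bar l$ is set false), so that the restriction $\mathcal{C}_u|_\tau$ is automatically a $k$-CNF.

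\emph{The gadget and the counting.} For an internal node $u$ querying $D_u=l_1\vee\cdots\vee l_{k+1}$ with $\top$-child $v$ and $\bot$-child $w$, I will build $G_u$ in two phases. First, a depth-$|L_u|$ binary decision tree that branches on each literal of $L_u$ via a single-variable query (contributing at most $2^{|L_u|}-1$ internal nodes). Second, from each of the $2^{|L_u|}$ leaves I simulate the $(k+1)$-query $D_u$ with at most two $k$-disjunction queries: ask the $k$-disjunction $l_1\vee\cdots\vee l_k$; if the answer is $\top$ then $D_u=\top$; if $\bot$, ask the single literal $l_{k+1}$ and its answer decides $D_u$. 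Route the $D_u=\top$ outcomes to the entry of $G_v$ and the $D_u=\bot$ outcomes to $G_w$. Sinks of $T$ (carrying the negation of an axiom, of width $<d$ by the covering hypothesis applied to sinks) remain as sinks of $T'$. Each $G_u$ therefore contributes at most $(2^{|L_u|}-1)+2\cdot 2^{|L_u|}=3\cdot 2^{|L_u|}-1<2^{d+1}$ internal nodes; summing over the $N$ nodes of $T$ yields the claimed bound $2^{d+1}\cdot N$, which is at most $e^d\cdot N$ once $d>4$.

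\emph{Correctness and main obstacle.} For every total assignment to the variables the path through $T'$ visits the gadgets $G_u$ in the same order as $T$ visits the corresponding nodes: the $L_u$-branching fixes $\tau$ consistently with the assignment, and the two-query simulation correctly determines $D_u$ and routes to the appropriate child-gadget, so the computation reaches a sink of $T'$ labelled by the negation of the same falsified axiom that $T$ reaches. The main technical obstacle will be label-management in $T'$: every label must remain a $k$-CNF even though we are simulating a computation whose original labels are $(k+1)$-CNFs. I handle this by observing that every clause a query of $G_u$ actually writes into the label is either a single literal (from the $L_u$-branching, or from the $\bot$-outcomes inside the $D_u$-simulation) or a genuine $k$-disjunction (the $l_1\vee\cdots\vee l_k$ query itself), and by invoking the forget rules at gadget boundaries to discard any inherited $(k+1)$-clauses while keeping enough single literals to assemble the sink label $\neg A$ when a sink of $T$ is reached.
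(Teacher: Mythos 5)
Your proposal is correct and essentially follows the paper's approach: expand each node of the branching $(k+1)$-program into a mini-tree querying the covering variables, after which the record is a $k$-CNF by the covering argument, and then simulate the $(k+1)$-disjunction query with a $k$-disjunction query plus a single-literal query. The only cosmetic difference is that the paper places $l_{k+1}$ among the mini-tree's initial single-variable queries (so $2^d$ leaves) and conditionally follows up with $(l_1\vee\cdots\vee l_k)$ at half of them, whereas you query $(l_1\vee\cdots\vee l_k)$ first at each leaf and conditionally follow up with $l_{k+1}$; both orderings yield the same $2^{d+1}\cdot N$ bound.
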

\
\begin{proof}
\textcolor{black}{
In the branching program, consider a $(k+1)$-CNF $\phi$ whose covering number $< d$ is witnessed by variable set $V':=\{v_1,\ldots,v_{d-1}\}$. At this node some $(k+1)$-disjunction $(l_1 \vee \ldots \vee l_{k} \vee l_{k+1})$ is questioned.
}

\textcolor{black}{
Now in place of the CNF record $\phi$ in our original branching program we expand a mini-tree of size $2^{d+1}$ with $2^{d}$ leaves questioning all the variables of $V'$ as well as the literal $l_{k+1}$. Clearly, each evaluation of these reduces $\phi$ to a $k$-CNF that logically implies $\phi$. This may involve a weakening step in the corresponding $\RES(k)$ refutation. It remains to explain how to link the leaves of these mini-trees to the roots of other mini-trees. At each leaf we look to see whether we have the information $l_{k+1}$ or $\neg l_{k+1}$. If $l_{k+1}$ then we link immediately to the root of the mini-tree corresponding to the yes-answer to  $(l_1 \vee \ldots \vee l_{k} \vee l_{k+1})$ (without asking a question). If $\neg l_{k+1}$ then we question $(l_1 \vee \ldots \vee l_{k})$ and, if this is answered yes, link the yes-answer to  $(l_1 \vee \ldots \vee l_{k} \vee l_{k+1})$, otherwise to its no-answer.
}
\end{proof}


\subsection{Sherali-Adams via (integer) linear programming}
\color{black}

Following \cite{TCS2009} we define the $\SA$ proof system in a $\ILP$ form and  hence in terms of linear inequalities and we explain later the equivalence with an alternative definition by polynomials. 

Let $\mathcal{C}$ be a CNF $ C_1 \wedge \ldots \wedge C_m$ in variables $V=\{v_1,\ldots,v_n\}$.
Let $L_V =\{ v_1,\ldots,v_n, \neg v_1,\ldots,\neg v_n\}$ and adopt the convention that for $l \in L_V$, if 
$l = \neg v$ then $\bar l = v$ and if $l=v$, then $\bar l = \neg v$. 
First we  introduce a set of integer variables of the form $Z_D$,  where $D$ is a conjunction of {\em distinct} literals in $L_V$, 
with the meaning that  $Z_{\bigwedge_{i} \! l_i}$ is false if its subscript is false.\footnote{We are considering here $n$ new formal variables  $\bar V = \{\bar v_1,\ldots, \bar v_n\}$ such that $v =(1-\bar v)$. This allow us to compactly write a polynomial of the form $\prod_i(1-v_i)$ as a monomial $\prod_i \bar v_i$, modulo the set of polynomials stating that $v+\bar v=1$ taken for all variables $v$.  
}

We consider $Z_1=Z_\emptyset$, where $\emptyset$ is an empty conjunction, to be associated with the monomial equation $0=0$ and we assume that the {\em names} of the $Z$ variables 
fulfil the basic properties of the $\wedge$ operator such as commutativity and idempotence.  So, for instance,
 $Z_{D_1\wedge D_2}$ is the same variable as $Z_{D_2\wedge D_1}$, or $Z_{1 \wedge D}$ as well as $Z_{D \wedge D}$ are  both the variable $Z_D$.

\smallskip

For $0\leq r<2n$ let ${\cal D}_r$ be the set of the conjunctions of at most $r$ literals in $L_V$ (being $1$ the empty conjunction). 
We let  $\mathcal{P}_r^{\mathcal{C}}$ to be  the polytope specified by the following inequalities.

\begin{eqnarray}
0 \leq Z_{l \wedge D} \leq Z_D   \qquad  \qquad   \qquad     l  \in L_V, D \in {\cal D}_r \label{eq:LinSA1}\\
Z_{l \wedge D} +Z_{\bar l \wedge D} =Z_D    \qquad   \qquad  \qquad  l \in L_V, D \in {\cal D}_r \label{eq:LinSA2}\\
(Z_{\!D \wedge l_{1}} + \cdots + Z_{\!D \wedge l_{k}}) \geq Z_D \qquad (l_1\vee \ldots\vee  l_k) \in C,D \in {\cal D}_r \label{eq:LinSA3}
\end{eqnarray}

Observe that $\mathcal{P}_0^{\mathcal{C}}$, the polytope associated to $\mathcal{C}$, is specified by the inequalities 
$$
\left\{
\begin{array}{ll}
0 \leq Z_{l} \leq 1 &  \qquad l \in L_V \\
Z_{l} +Z_{\bar l} =1  & \qquad  l \in L_V\\
Z_{l_{1}} + \cdots + Z_{l_{k}} \geq 1 
& \qquad (l_1\vee \ldots \vee l_k) \in C
\end{array}
\right.
$$
\noindent It is clear that  ${\cal P}^{{\cal C}}_0$ contains integral $\{0,1\}$ points if and only if  $\mathcal{C}$ is satisfiable.   

\smallskip

Sherali-Adams (\SA) is  a static refutation method that takes the polytope $\mathcal{P}^{\mathcal{C}}_0$ whose dimension
is  $2n$ 
and \emph{$r$-lifts} it, by the definition of new variables and constraints, to another polytope $\mathcal{P}^{\mathcal{C}}_r$ whose dimension  is $\sum_{\lambda=0}^{r+1} {2n\choose \lambda}$. 
Observe that on unsatisfiable CNFs $\mathcal{C}$, $\mathcal{P}^{\mathcal{C}}_0$ does not contain integral points but it is not necessarily empty,  while necessarily $\mathcal{P}^{\mathcal{C}}_{2n}$ is the empty polytope (indeed, already $\mathcal{P}^{\mathcal{C}}_{n-1}$ is empty). Hence the following definition is meaningful.

\begin{definition}
The $\SA$-{\em rank} of an unsatisfiable  CNF ${\cal C}$ (we equivalently say the $\SA$-{\em rank} of $\mathcal{P}^{\mathcal{C}}_0$) 
is the minimal $r\leq 2n$ such that $\mathcal{P}^{\mathcal{C}}_r$
is the empty polytope.  A $\SA$-{\em refutation} of ${\cal C}$ is a subset of constraints in the definition of 
${\cal P}^{\cal C}_r$ that defines an empty polytope. 
\end{definition}
\noindent Note that \SA\ is polynomially verifiable due to the tractability of linear programming.

Let us point out some simple properties we use later.  It is easy to see that for $r' \leq r$, the defining inequalities of $\mathcal{P}^{\mathcal{C}}_{r'}$ are included in   those of $\mathcal{P}^{\mathcal{C}}_{r}$. Hence any solution to the inequalities of $\mathcal{P}^{\mathcal{C}}_{r}$ gives rise to solutions of the inequalities of $\mathcal{P}^{\mathcal{C}}_{r'}$, when projected onto its variables. If $D'$ is a conjunction of $r'$ literals, then $Z_{D \wedge D'} \leq Z_{D}$ follows by transitivity from $r'$ instances of $(\ref{eq:LinSA1})$. We refer to the property $Z_{D \wedge D'} \leq Z_{D}$ as \emph{monotonicity}. Finally, let us note that $Z_{v \wedge \neg v}=0$ holds in $\mathcal{P}^{\mathcal{C}}_1$ and follows from a single lift of an equality of negation.

\textcolor{black}{Our use of distinct literals $Z_v$ and $Z_{\neg v}$, with the axioms (\ref{sec:pre}.2), is not followed in all expositions of Sherali-Adams as a refutation system \SA. Indeed, in \cite{AtseriasLN14}, the use of these so-called \emph{twin variables} begets a new refutation system labelled \textsc{SAR} (in an apparent homage to the \textsc{PCR} of \cite{AlekhnovichBRW02}). Note that the rank measure is equivalent in both versions of \SA, and size lower bounds, for our version with twin variables, are at least as strong as with the alternative version.}

	\subsubsection{Sherali-Adams via polynomials}
	
	Here we give an alternative definition of Sherali-Adams and explain its relation to the one just given.
	
	\begin{definition}
		A \emph{Sherali-Adams} refutation of a set of linear inequalities $a_1 \geq 1 , \ldots, a_m \geq 1$ over a set of variables $V$ is a formal equality of the form
		
		\begin{equation} \label{eq:SApoly}
			c_0 + \sum_{i = 1}^m c_i a_i = -1
		\end{equation}
		
\noindent		where each $c_i$ is a polynomial over $V$ with non-negative coefficients, and the multiplication is carried out over the quotient ring $\mathbb{R}^V / \{v^2 - v : v \in V\}$ (that is, idempotently). The \emph{degree} of the refutation is the maximum degree of the polynomials $c_i a_i$. The \emph{size} of the refutation is the total number of monomials appearing with nonzero coefficient on the left hand side of \cref{eq:SApoly}

	\end{definition}

	It is clear that Sherali-Adams is sound, in the sense that if a set of linear inequalities admits a Sherali-Adams refutation then it has no 0/1 solutions. Once the degree is fixed, the search for the coefficients of the $c_i$ in \Cref{eq:SApoly} can be formulated as a linear program. It can be seen that the dual of this program is exactly the definition given first (see, e.g., \cite{laurent01comparison}). Imagine, for some CNF $C$ over the variables $V = \{v_1, \ldots, v_n\}$ and some rank $r$, that $\mathcal{P}_r^{\mathcal{C}}$ is nonempty. Then pick some $x \in \mathcal{P}_r^{\mathcal{C}}$ and define a linear operator $\lambda$ on monomials of degree at most $r + 1$ defined by $\lambda(v_{x_1} \cdot v_{x_2} \cdots v_{x_d}) = x(Z_{v_{x_1} \wedge \ldots \wedge v_{x_d}})$. Then the set of inequalities gotten from sending each clause $l_1 \vee \ldots \vee l_k$ in $\mathcal{C}$ to $\sum_{i = 1}^k l_i \geq 1$ has no Sherali-Adams refutation of degree at most $r$, because then  $\lambda$ when applied to both sides of \cref{eq:SApoly} would produce a contradiction.

\section{Res(s) and the binary encoding of $k$-Clique}
\label{sec:k-clique}
\color{black}

Consider a graph $G$ such that $G$ is formed from $k$ blocks of $n$ nodes each: $G =(\bigcup_{b\in[k]} V_b, E)$, where edges may only appear between distinct blocks. Thus, $G$ is a $k$-partite graph. Let the 
edges in $E$ be denoted as pairs of the form $E((i,a),(j,b))$, where $i\neq j \in[k]$ and $a,b\in [n]$.

The (unary) $k$-Clique CNF formulas $\UnC(G)$ has variables $v_{i,q}$ with $i \in [k], q \in [n]$, with clauses $\neg v_{i,a} \vee \neg v_{j,b}$ whenever $\neg E((i,a),(j,b))$ (\mbox{i.e.} there is no edge between node $a$ in block $i$ and node $b$ in block $j$), and clauses $\bigvee_{a \in [n]} v_{i,a}$, for each block $i$. This expresses that $G$ has a $k$-clique (with one vertex in each block), which we take to be a contradiction, since we will arrange for $G$ not to have a $k$-clique.  \textcolor{black}{ Notice that this formula 
encodes the fact that the graph contains a {\em transversal
$k$-clique}, that is, a $k$-clique in which each node belongs to a different block. 
 As noticed in \cite{Beyersdorff:2013:TOCL, DBLP:conf/stoc/AtseriasBRLNR18} a graph can contain a $k$-clique but  no transversal $k$-clique for a given partition.
Finding a transversal $k$-clique in a given graph is intuitively more difficult then finding a $k$-clique, hence 
proving that a graph does not contain a transversal $k$-clique should be easier
than proving it does not contain any $k$-clique. This was formally proved to hold even for treelike Resolution 
(see Lemma 2.2 in \cite{DBLP:conf/stoc/AtseriasBRLNR18}). }

\medskip
 $\BinC(G)$  variables $\omega_{i,j}$ range over $i \in [k], j \in [\log n]$. Let us assume for simplicity of our exposition that $n$ is a power of $2$, the general case \textcolor{black}{requires the explicit forbidding of certain combinations.}
Let $a \in [n]$ and let $a_1 \ldots  a_{\log n}$ be \textcolor{black}{$\mathrm{bin}(a)$}.  
Each (unary) variable $v_{i,a}$ semantically corresponds to the  conjunction $(\omega^{a_1}_{i,1}\wedge \ldots \wedge \omega^{a_{\log n}}_{i,\log n})$,
where
$$
\omega^{a_j}_{i,j}=
\left\{
\begin{array}{ll}
\omega_{i,j} & \mbox{ if $a_j=1$} \\
\neg \omega_{i,j} & \mbox{ if $a_j=0$}
\end{array}
\right.
$$
Hence in $\BinC(G)$ we encode the unary clauses $\neg v_{i,a} \vee \neg v_{j,b}$,
by the clauses
$$
(\omega^{1-a_1}_{i,1}\vee \ldots \vee \omega^{1-a_{\log n}}_{i,\log n}) \vee (\omega^{1-b_1}_{j,1}\vee \ldots \vee \omega^{1-b_{\log n}}_{j,{\log n}}).
$$
{\color{black} Notice that the wide clauses $\bigvee_{a \in [n]} v_{i,a}$ from the unary encoding automatically become true under the binary encoding.}



By the next lemma short Resolution refutations for  $\UnC(G)$ can be translated into short $\RES(\log n)$ refutations of $\BinC(G)$. 
Hence to obtain lower bounds for $\UnC(G)$ in Resolution, it suffices to obtain lower bounds for 
 $\BinC(G)$ in $\RES(\log n)$. 

\begin{lemma}
\label{lem:reslog}
Suppose there are Resolution refutations of $\UnC(G)$ of  size $S$. 
Then there are $\RES(\log n)$ refutations of $\BinC(G)$  of size $S$.
\end{lemma}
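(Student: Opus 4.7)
The plan is a direct clause-by-clause syntactic simulation: translate each unary literal into its $\log n$-ary binary equivalent and simulate each Resolution inference by one inference of $\RES(\log n)$. Concretely, I would interpret each positive literal $v_{i,a}$ as the $\log n$-conjunction $V_{i,a}:=\bigwedge_{j=1}^{\log n}\omega^{a_j}_{i,j}$ and each negative literal $\neg v_{i,a}$ as the $\log n$-disjunction $\bigvee_{j=1}^{\log n}\omega^{1-a_j}_{i,j}$. A unary clause $C$ then becomes a $\log n$-DNF $\tilde{C}$, and the empty clause maps to itself; the aim is to produce a $\RES(\log n)$ refutation $\tilde{\pi}$ of $\BinC(G)$ whose underlying DAG mirrors that of the given Resolution refutation $\pi$ of $\UnC(G)$, so that $|\tilde{\pi}|=|\pi|=S$.

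The heart of the simulation is the cut rule. A Resolution cut on $v_{i,a}$ from $C_1\vee v_{i,a}$ and $C_2\vee\neg v_{i,a}$ has translated premises $\tilde{C}_1\vee V_{i,a}$ and $\tilde{C}_2\vee\bigvee_{j=1}^{\log n}\omega^{1-a_j}_{i,j}$. Taking the index set $J=\{1,\ldots,\log n\}$ and literals $l_j:=\omega^{1-a_j}_{i,j}$, the $\log n$-disjunction $\bigvee_{j\in J}l_j$ of the second premise is exactly matched by the $\log n$-conjunction $\bigwedge_{j\in J}\neg l_j=V_{i,a}$ of the first, so a single application of the $\RES(\log n)$ cut rule produces $\tilde{C}_1\vee\tilde{C}_2$, which is the translation of the unary conclusion. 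This is precisely the place where conjunctions of width $\log n$ are needed, and is why the simulation sits in $\RES(\log n)$ rather than in Resolution. Weakening steps of $\pi$, if present, transfer step-for-step.

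The main obstacle is on the axiom side. The non-edge axioms $\neg v_{i,a}\vee\neg v_{j,b}$ translate on the nose into axioms of $\BinC(G)$ and appear as free leaves of $\tilde{\pi}$. The wide axioms $\bigvee_{a\in[n]}v_{i,a}$, on the other hand, translate to $\bigvee_a V_{i,a}$, a propositional tautology on the binary variables (the binary encoding always selects exactly one witness in block $i$) that is \emph{not} literally an axiom of $\BinC(G)$. My plan is to inject each of the $k$ block-tautologies into $\tilde{\pi}$ via a short sub-derivation grown off any existing $\BinC(G)$ axiom by repeated applications of the weakening rule that adds a $\log n$-conjunction, and then reuse these sub-derivations wherever the unary proof invokes the corresponding wide axiom. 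Showing that the cost of this introduction can be absorbed into the claimed size $S$ (rather than producing merely a polynomial-overhead simulation) is the delicate bookkeeping step; it leans on the fact that the wide axiom is semantically free in the binary encoding, so its syntactic realisation does not need to contribute to the DAG count of the main simulation.
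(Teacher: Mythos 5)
Your core simulation idea is the paper's idea, just phrased in the forward (derivation) rather than the backward (branching-program) presentation: each Resolution cut on $v_{i,a}$ does indeed translate to a single $\RES(\log n)$ cut of the $\log n$-disjunction $\bigvee_{j}\omega^{1-a_j}_{i,j}$ against its negation, the $\log n$-conjunction $V_{i,a}$, and this is exactly where width $\log n$ is needed and used. The paper says the same thing upside-down: replace each query ``$v_{i,a}$?'' in the unary decision DAG by the query ``$\bigvee_j\omega^{1-a_j}_{i,j}$?'' with the two out-edges swapped. You also correctly identify the one genuine subtlety, namely that the wide clauses $\bigvee_a v_{i,a}$ have no counterpart among the clauses of $\BinC(G)$, since they become tautologies under the bit-substitution, so the translated leaves $\bigvee_a V_{i,a}$ are not axioms of the target CNF.

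Your repair for that subtlety, however, cannot work. The two weakening rules of $\RES(s)$ either \emph{add} a new $s$-conjunction to a DNF or \emph{shrink} one existing conjunction; neither rule ever \emph{deletes} a term. Starting from any axiom $\bigvee_j\omega^{1-a_j}_{i,j}\vee\bigvee_j\omega^{1-b_j}_{j',j}$ of $\BinC(G)$, every weakening-descendant permanently carries those $2\log n$ literals (or nonempty pieces of them), so the best your sub-derivation can reach is $A\vee\bigvee_a V_{i,a}$ for the fixed clause $A$ you started from, never the pure block tautology $\bigvee_a V_{i,a}$. If you feed $A\vee\bigvee_a V_{i,a}$ into your main simulation in place of the tautology, the spurious literals of $A$ propagate to every descendant and you finish at $A$ rather than at the empty DNF. (More strongly, $\RES(s)$ is only refutation-complete, not deduction-complete, so $\bigvee_a V_{i,a}$ is not derivable from $\BinC(G)$ at all short of deriving $\bot$ first.) The correct move is not to \emph{derive} the tautology but to \emph{eliminate} it: in the paper's branching-program view, any sink whose record has become $\bigwedge_{a\in[n]}\bigvee_j\omega^{1-a_j}_{i,j}$ is unsatisfiable, so no assignment ever reaches it — the last query into that sink has a forced answer, and the offending branch is dead. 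Equivalently, in your derivation view, every cut that consumes a descendant of the translated wide clause has a conclusion already logically entailed by its other premise, and can be replaced by a weakening of that premise (or discarded outright) without increasing the step count. Making that observation precise is the missing piece of your argument; the weakening sub-derivation you sketch does not provide it.
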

\begin{proof}
Where the decision DAG for $\UnC(G)$ questions some variable $v_{i,a}$, the decision branching $\log n$-program questions instead 
$(\omega^{1-a_1}_{1,1}\vee \ldots \vee \omega^{1-a_{\log n}}_{1,{\log n}})$ where the out-edge marked true in the former becomes false in the latter, and vice versa. What results is indeed a decision branching $\log n$-program for $\BinC(G)$, and the result follows.
\end{proof}

Following  \cite{Beyersdorff:2013:TOCL,DBLP:conf/stoc/AtseriasBRLNR18,DBLP:journals/combinatorica/LauriaPRT17} we consider $\BinC(G)$ formulas  where $G$ is   a random graph distributed according to a variation of the Erd\H{o}s-R\'enyi distribution as defined in  \cite{Beyersdorff:2013:TOCL}. In the standard  model, random graphs on $n$ vertices are constructed by including every edge independently with probability $p$. {\color{black}It is known (see for example \cite{BOLLO80,Bollob_s_2001}) that $k$-cliques appear
at the threshold probability $p^*$ approximately equal to $n^{- \frac{2}{k-1}}$: If $p < p^*$, then with high probability there is no $k$-clique.} 
Following \cite{Beyersdorff:2013:TOCL,DBLP:conf/stoc/AtseriasBRLNR18,DBLP:journals/combinatorica/LauriaPRT17} we consider random graphs $G$ on $kn$ vertices where an edge is present between two vertices in distinct blocks 
with probability  $p=n^{- (1+\epsilon)\frac{2}{k-1}}$, for $\epsilon$ a constant.  We call this distribution $\mathcal{G}^n_{k,\epsilon}(p)$  and
we use the notation $G \sim \mathcal{G}^n_{k,\epsilon}(p)$ to say that $G$ is a graph drawn at random from  $\mathcal{G}^n_{k,\epsilon}(p)$.
In the next sections we explore lower bounds for  $\BinC(G)$ in Res($s$) for $s \geq 1$, when $G\sim \mathcal{G}^n_{k,\epsilon}(p)$.

\subsection{Isolating the properties of $G$}
 Let $\alpha$ be a constant such that $0<\alpha<1$. Define a set of vertices $U$ in $G$, $U \subseteq V$ to be an {\em $\alpha$-transversal} if: (1) $|U| = \alpha k$, and (2) for all $b\in [k]$,  $|V_b\cap U|\leq 1$.
 Let $B(U)\subseteq [k]$ be the set of blocks mentioned in $U$, and let $\overline{B(U)} = [k]\setminus B(U)$. 
 We say that $U$ is {\em extendable}  in a block $b \in \overline{B(U)}$ if there exists a vertex $a \in V_b$ that is a common
  neighbour of all nodes in $U$, i.e. $a \in N_c(U)$ 
 where $N_c(U)$ is the set of {\em common neighbours} of vertices in $U$: $N_c(U)=\{v \in V \;|\; v \in \bigcap_{u\in U}N(u)\}$.  
\medskip

Let $\sigma$ be a partial assignment (a restriction) to the variables of $\BinC(G)$ and $\beta$ a constant such that  $0<\beta<1$. We say  $\sigma$ is  {\em $\beta$-total} if $\sigma$ assigns precisely $\lfloor \beta \log n\rfloor $ bits in each block $b\in [k]$, \mbox{i.e.} $\lfloor \beta \log n\rfloor $ variables $\omega_{b,i}$  in each block $b$. \textcolor{black}{Note that in general we do not choose the same $\lfloor \beta \log n\rfloor $ bits in each block.}
Let $v=(i,a)$  be the $a$-th node in the $i$-th block in  $G$. 
 We say that a restriction $\sigma$ is {\em consistent} with $v$  if for all $j\in [\log n]$, $\sigma(\omega_{i,j})$ is either $a_j$ 
or not assigned.

\begin{definition} 
\label{def:extension}
Let $0<\alpha,\beta <1$.
An $\alpha$-transversal set of vertices $U$ is $\beta$-extendable, if for 
all $\beta$-total restrictions $\sigma$, there is a node $v^b$ in each block $b \in \overline{B(U)}$, such that 
$\sigma$ is consistent with $v^b$.
\end{definition}
\textcolor{black}{An $\alpha$-transversal is just a set of vertices $U$ comprised of a single vertex from each of $\alpha k$ blocks. It is $\beta$-extendable if, for any restriction assigning $\lfloor \beta \log n\rfloor$ bits in each block, there is a vertex adjacent to $U$ in each block outside of $U$.} 
 
\begin{lemma}[Extension Lemma]
\label{lem:extension}
Let $0<\epsilon <1$, let $k\leq \log n$.  Let $1>\alpha >0$ and $1>\beta>0$ such that \textcolor{black}{$1-\beta >4\alpha(1+\epsilon)$}.
Let $G \sim \mathcal{G}^n_{k,\epsilon}(p)$. \textcolor{black}{Over choices of the graph $\mathcal{G}$, with probability strictly greater than zero,} both the following properties hold:

\begin{enumerate}
\item all $\alpha$-transversal sets $U$ are $\beta$-extendable; 
\item $\mathcal{G}$ does not have a $k$-clique.
 \end{enumerate}
\label{lem:stefan}
\end{lemma}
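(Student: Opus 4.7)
The plan is a standard first-moment/union-bound argument: I will show that each of the two bad events (existence of a $k$-clique, failure of some $\alpha$-transversal to be $\beta$-extendable) has probability strictly less than $1/2$ over $G \sim \mathcal{G}^n_{k,\epsilon}(p)$, so the two good events hold simultaneously with positive probability.

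\textbf{Property (2).} I would bound the expected number of transversal $k$-cliques. Since there are at most $n^k$ ways to pick one vertex from each block, and each such tuple forms a clique with probability $p^{\binom{k}{2}}$, the expected number is
\[
n^k \cdot p^{\binom{k}{2}} \;=\; n^k \cdot n^{-(1+\epsilon)\frac{2}{k-1}\binom{k}{2}} \;=\; n^{-\epsilon k},
\]
which tends to $0$. By Markov's inequality, $\Pr[G \text{ has a } k\text{-clique}] \to 0$.

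\textbf{Property (1).} I would bound, for a fixed $\alpha$-transversal $U$, a fixed $\beta$-total restriction $\sigma$, and a fixed block $b \in \overline{B(U)}$, the probability that no vertex of $V_b$ consistent with $\sigma$ is a common neighbour of $U$. The number of vertices in $V_b$ consistent with $\sigma$ is exactly $2^{\log n - \lfloor\beta\log n\rfloor} \geq n^{1-\beta}$, and the events ``$v \in N_c(U)$'' for distinct such $v$ are mutually independent (they involve disjoint edge sets). Each has probability $p^{\alpha k}$, and since $k \leq \log n$ we may use $\frac{k}{k-1} \leq 2$ (for $k \geq 2$) to get
\[
p^{\alpha k} \;=\; n^{-2\alpha(1+\epsilon)\frac{k}{k-1}} \;\geq\; n^{-4\alpha(1+\epsilon)}.
\]
Thus the failure probability for this fixed triple is at most
\[
\bigl(1 - p^{\alpha k}\bigr)^{n^{1-\beta}} \;\leq\; \exp\!\bigl(-n^{1-\beta-4\alpha(1+\epsilon)}\bigr),
\]
and the hypothesis $1-\beta > 4\alpha(1+\epsilon)$ guarantees the exponent is $n^{\Omega(1)}$, so the bound is doubly-exponentially small in $\log n$.

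\textbf{The union bound.} I now count triples $(U,\sigma,b)$. The number of $\alpha$-transversals is at most $\binom{k}{\alpha k} n^{\alpha k} \leq 2^{O(\log^2 n)}$ since $k \leq \log n$. The number of $\beta$-total restrictions is at most $\bigl(\binom{\log n}{\beta\log n} 2^{\beta\log n}\bigr)^k \leq 2^{O(\log^2 n)}$. And $b$ ranges over at most $k \leq \log n$ blocks. The total count is therefore $2^{O(\log^2 n)}$, which is absorbed by the doubly-exponentially-small per-triple bound, so the probability that (1) fails tends to $0$. Combined with (2), both properties hold with probability tending to $1$, in particular strictly greater than $0$.

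The only genuine subtlety I anticipate is being careful that the independence really holds across the candidate vertices in a fixed block $V_b$ (the edges from $U$ to different $v \in V_b$ are disjoint, so this is fine), and that the counting bounds for restrictions and transversals are handled uniformly given the dependence on $k$; the regime $k \leq \log n$ precisely controls both. Everything else is routine.
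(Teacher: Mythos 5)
Your proposal follows essentially the same first-moment/union-bound argument as the paper: the same per-triple failure bound $(1-p^{\alpha k})^{n^{1-\beta}} \leq \exp(-n^{1-\beta-4\alpha(1+\epsilon)})$ using $k/(k-1)\leq 2$, the same expected-number-of-cliques computation $n^k p^{\binom{k}{2}} = n^{-\epsilon k}$ for property~(2), and the same counting-plus-union-bound conclusion. In fact your counting is slightly more careful than the paper's: you correctly include the $n^{\alpha k}$ factor for the choice of vertices within the $\alpha k$ blocks (the paper writes only $\binom{k}{\alpha k}$) and you make the union over the block $b \in \overline{B(U)}$ explicit; both refinements are still absorbed by the doubly-exponentially-small per-triple bound, so the conclusion is unaffected.
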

\begin{proof}
Let $U$ be an $\alpha$-transversal set and $\sigma$ be a $\beta$-total restriction. The probability that a vertex $w$ is  in $N_c(U)$ is $p^{\alpha k}$. Hence $w 	\not \in N_c(U)$  with probability  
$(1-p^{\alpha k})$. After $\sigma$ is applied, in each block $b \in \overline{B(U)}$ \textcolor{black}{there} remain $2^{\log n - \beta \log n}=
n^{1-\beta}$ available consistent vertices. Hence the probability that we cannot extend $U$ in each block of  $\overline{B(U)}$ after $\sigma$ is applied is $(1-p^{\alpha k})^{n^{1-\beta}}$.  Fix $c=2+\textcolor{black}{2}\epsilon$ and $\delta=1-\beta- \textcolor{black}{2}\alpha c$. Notice that $\delta>0$ by our choice of $\alpha$ and $\beta$.  Since $p=\frac{1}{n^{\frac{c}{\textcolor{black}{k-1}}}}$, the previous probability is \textcolor{black}{$(1-1/n^{\alpha c (\nicefrac{k}{k-1})})^{n^{1-\beta}}$, which is at most $(1-1/n^{2\alpha c})^{n^{1-\beta}}$, which in turn is at most $e^{-\frac{n^{1-\beta}}{n^{2\alpha c}}}= e^{-n^{\delta}}$ (since $e^{-x} = \lim_{m\to\infty} (1 - x/m)^m$ and indeed $e^{-x} \geq (1 - x/m)^m$ when $x,m \geq 1$)}.

There are ${k \choose \alpha k}$ possible $\alpha$-transversal sets $U$ and  $({\log n \choose \beta \log n}\cdot 2^{\beta\log n})^k$ possible 
$\beta$-total restrictions $\sigma$. \textcolor{black}{Let us count the combinations of these:}

$$
\begin{array}{lll}
{k \choose \alpha k} \cdot ({\log n \choose \beta \log n}\cdot 2^{\beta\log n})^k & \leq k^{\alpha k} \cdot (\log n)^{\beta k \log n} \cdot 2^{\beta k \log n} \\
& \leq  2^{\alpha k \log k + \beta k \log n \log \log n + \beta k \log n} \\
& \leq 2^{\log^3 n}.
\end{array}
$$
Note that the last inequality uses $k\leq \log n$. 
Hence the probability  that there is in $G$ no $\alpha$-transversal set $U$ which is $\beta$-extendable is \textcolor{black}{at most $e^{-n^{\delta}} \cdot 2^{\log^3 n}$ which is tending to zero as 
$n$ tends to infinity}.

\medskip

To bound the probability that $\mathcal{G}$ contains a $k$-clique, notice that the expected number of  $k$-cliques \textcolor{black}{can be calculated from the potential maximal number of $k$-cliques multiplied by the probability that each of these forms a $k$-clique, that is} $\textcolor{black}{n^k} \cdot p^{{k \choose 2}} \textcolor{black}{=} n^k \cdot p^{(k(k-1)/2)}$. Recalling $p=1/n^{c/\textcolor{black}{k-1}}$, we get 
that the 
\textcolor{black}{expected number of $k$-cliques is $n^k \cdot n^{-ck/2} = n^{k-ck/2}$. Since $c=2+2\epsilon$,  
$k-ck/2= -\epsilon k$.  Hence $n^k \cdot n^{-ck/2} = n^{-\epsilon k}\leq n^{-\epsilon}$, which is tending to zero as $n$ tends to infinity.}

So the probability that either property (1) or (2) does not hold is bounded above by  \textcolor{black}{$2^{\log^{3}n}\cdot e^{-n^{\delta}} + n^{-\epsilon}$} which is \textcolor{black}{strictly less than one} for sufficiently large $n$. 
\end{proof}

\subsection{$\RES(s)$ lower bounds for $\BinC$}

Let $s\geq 1$ be an integer. Call a $\frac{1}{2^{s+1}}$-total assignment  to the variables of $\BinC(G)$ an {\em $s$-restriction}.   
A  \emph{random $s$-restriction} for $\BinC(G)$ is an $s$-restriction obtained by choosing independently in each block $i$, $ \lfloor \frac{1} {2^{s+1}} \log n\rfloor $ variables among $\omega_{i,1},\ldots,\omega_{i,\log n}$, and setting these uniformly at random to $0$ or $1$.  


\medskip
Let $s,k\in \mathbb N $, $s,k\geq1$ and let $G\textcolor{black}{ \sim \mathcal{G}^n_{k,\epsilon}(p)}$ be a graph over $nk$ nodes and $k$ blocks which does not contain a $k$-clique. Fix $\delta=\textcolor{black}{\frac{1}{2\cdot  96^2}}$ and $\pe(s)= {\color{black} 2^{s^2 + 3s}}$ and $\de(s)=(\pe(s)s)^s$.


\textcolor{black}{Let $\BinC(G)\!\!\!\upharpoonright_\rho$ denote $\BinC(G)$ restricted by $\rho$.} Consider the following property.
\begin{definition}[Property $\Cl(G, s,k)$] \label{def:cliqueProp}
	For any \textcolor{black}{$s$-restriction} $\rho$, there are no Res($s$) refutations of 
	$\BinC(G)\!\!\!\upharpoonright_\rho$ 
	of size less than $n^{\frac{\delta(k-1)}{\de(s)}}$.
\end{definition}
\noindent If property $\Cl(G,s,k)$ holds, we immediately  have a $n^{\Omega(k)}$ \sloppy size lower bound for refuting $\BinC(G)$ in $\RES(s)$.
\begin{corollary}
	\label{cor:BinC}
	Let $s,k$ be integers, $s\geq1,k>1$. Let $G$ be a graph and assume that  $\Cl(G,s,k)$ holds. 
	Then there are no Res($s$) refutations of   $\BinC(G)$ of size smaller than $n^{\delta \frac{k-1}{\de(s)}}$. 
\end{corollary}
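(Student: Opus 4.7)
The plan is a short contrapositive argument, since the statement is essentially a direct consequence of the closure of $\RES(s)$ refutations under restriction. Suppose, for contradiction, that $\BinC(G)$ admits a $\RES(s)$ refutation $\pi$ of size $N < n^{\delta(k-1)/\de(s)}$. The goal is to produce a single $s$-restriction $\rho$ for which $\BinC(G)\!\!\!\upharpoonright_\rho$ has a $\RES(s)$ refutation of size strictly less than $n^{\delta(k-1)/\de(s)}$, contradicting $\Cl(G,s,k)$.

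First I would fix an arbitrary $s$-restriction $\rho$; such a $\rho$ exists in the regime we care about, where $\lfloor \tfrac{1}{2^{s+1}}\log n\rfloor \geq 1$, since there are $\log n$ variables available in every block. The only content-bearing step is then the observation that applying $\rho$ clause-by-clause to $\pi$ produces a valid $\RES(s)$ refutation of $\BinC(G)\!\!\!\upharpoonright_\rho$: literals set to true trivialise the terms (respectively clauses) in which they appear, literals set to false are removed, and each inference rule of $\RES(s)$ (conjunction introduction, cut, the two weakenings) remains sound after this simplification. Nodes that become tautological can be pruned, and the empty clause at the sink is preserved. Consequently the restricted refutation $\pi\!\!\!\upharpoonright_\rho$ has size at most $N$.

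This immediately yields a $\RES(s)$ refutation of $\BinC(G)\!\!\!\upharpoonright_\rho$ of size $\leq N < n^{\delta(k-1)/\de(s)}$, contradicting property $\Cl(G,s,k)$ applied to this particular $\rho$. Hence no such $\pi$ exists, and $\BinC(G)$ requires $\RES(s)$ refutations of size at least $n^{\delta(k-1)/\de(s)}$.

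No step here presents a genuine obstacle: the result is a bookkeeping corollary whose entire force is packaged into the inductive/random-restriction machinery that will be used later to establish $\Cl(G,s,k)$ itself (via Lemma~\ref{lem:extension} and the covering-number argument of Lemma~\ref{lem:covering-number}). The proof of the corollary only requires the elementary fact that restrictions preserve $\RES(s)$ proofs without increasing their size.
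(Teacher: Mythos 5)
Your proof is correct and takes essentially the same approach as the paper: the paper likewise fixes an arbitrary $s$-restriction $\rho$ and invokes the fact that restricting a $\RES(s)$ refutation cannot increase its size, so that $\Cl(G,s,k)$ applied to $\rho$ yields the bound. You spell out the closure-under-restriction step in slightly more detail, but there is no substantive difference.
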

\begin{proof}
	Choose $\rho$ to be any $s$-restriction. The result follows from the previous definition since the shortest refutation of a restricted principle can never be larger than the shortest refutation of the unrestricted principle.
\end{proof}
\noindent We use the previous corollary to prove lower bounds for $\BinC(G)$ in $\RES(s)$ as long as $s =  o((\log\log n)^\frac{1}{3})$.
\begin{theorem}
	\label{thm:main-BinC}
	Let $0<\epsilon<1$ be given.
	Let $k$ be an integer with $k > 1$, and $s$ be an integer with  \textcolor{black}{$1< s \leq \left(\frac{1}{5} \log \log n \right)^{\frac{1}{3}}$}.
	Then there exists a graph $G$ such that all Res($s$) refutations of $\BinC(G)$ have size  at least \textcolor{black}{$n^{\nicefrac{\Omega(k)}{\de(s)}}$}. 
\end{theorem}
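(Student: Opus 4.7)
The plan is to apply Corollary~\ref{cor:BinC} to a graph $G$ produced by the Extension Lemma, after establishing property $\Cl(G,s,k)$ by induction on $s$. First I fix constants $\alpha,\beta\in(0,1)$ with $\beta=\tfrac14$ and $1-\beta>4\alpha(1+\epsilon)$; since every $s$-restriction (for $s\geq 1$) is $\tfrac{1}{2^{s+1}}$-total with $\tfrac{1}{2^{s+1}}\leq\beta$, and any vertex consistent with a $\beta$-total extension of such a restriction is a fortiori consistent with the restriction itself, $\beta$-extendability implies extendability against every $s$-restriction used in the argument. Lemma~\ref{lem:extension} then produces a graph $G\sim\mathcal{G}^n_{k,\epsilon}(p)$ that contains no $k$-clique and all of whose $\alpha$-transversals are $\beta$-extendable; this will be the $G$ promised by the theorem.

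For the base case $s=1$, I would run a Prover--Adversary game on the decision DAG underlying a putative Resolution refutation of $\BinC(G)\!\!\!\upharpoonright_\rho$ for an arbitrary $1$-restriction $\rho$. The Adversary maintains a partial $\alpha$-transversal $U$ together with all answers given so far: if Prover queries a bit in a block $b\in B(U)$, the Adversary answers according to the vertex of $U$ in block $b$; otherwise the Adversary uses Definition~\ref{def:extension} to pick a node in block $b$ consistent with $\rho$ and with the current partial assignment, thereby extending $U$. Because the Adversary can survive until $U$ reaches size $\alpha k$, the decision DAG is forced to carry records of width $\Omega(\alpha k\log n)$, and a direct counting argument on distinct records (or a Ben-Sasson--Wigderson-style width-to-size conversion against the $k\log n$ variables of the formula) delivers the bound $n^{\delta(k-1)/\de(1)}$ since $\delta$ is chosen small enough to absorb the constants.

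For the inductive step, suppose $\Cl(G,s,k)$ holds and, for contradiction, that some $(s+1)$-restriction $\rho_{s+1}$ admits a $\RES(s+1)$ refutation $\pi$ of $\BinC(G)\!\!\!\upharpoonright_{\rho_{s+1}}$ of size $N<n^{\delta(k-1)/\de(s+1)}$. Extend $\rho_{s+1}$ to a random $s$-restriction $\rho_s$ by sampling uniformly $\lfloor\tfrac{1}{2^{s+2}}\log n\rfloor$ additional bits per block and setting each uniformly to $0/1$. Pick a covering-number threshold $d=d(s)$ scaling with $\pe(s)=2^{s^2+3s}$ so that (a) every record $(s+1)$-CNF $F$ in $\pi$ with $c(F)\geq d$ satisfies $c(F\!\!\upharpoonright_{\rho_s})\geq d$ with probability at most $1/N^2$, and (b) $2^{d+1}N<n^{\delta(k-1)/\de(s)}$. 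A union bound over the at most $N$ records of $\pi$ yields a realisation of $\rho_s$ under which every record has covering number strictly less than $d$; applying Lemma~\ref{lem:covering-number} converts $\pi\!\!\!\upharpoonright_{\rho_s}$ into a $\RES(s)$ refutation of $\BinC(G)\!\!\!\upharpoonright_{\rho_s}$ of size at most $2^{d+1}N<n^{\delta(k-1)/\de(s)}$, contradicting $\Cl(G,s,k)$.

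The chief technical obstacle is item (a): showing that the random extension typically drives the covering number of every high-covering $(s+1)$-CNF below $d$. Following the general approach of \cite{DBLP:journals/siamcomp/SegerlindBI04}, from $c(F)\geq d$ one extracts a family of $\Omega(d/(s+1))$ pairwise-disjoint $(s+1)$-terms in the associated DNF and argues that the random extension makes sufficiently many of them false (thus satisfying the corresponding clauses of $F$) without making any of them simultaneously true, so that the surviving DNF admits a small covering set. The delicate interplay between $d$, the blow-up $2^{d+1}$ of Lemma~\ref{lem:covering-number}, and the ratio $\de(s+1)/\de(s)$ is what forces the shape $\de(s)=(\pe(s)s)^s$ and caps the induction at $s=o((\log\log n)^{1/3})$. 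Once $\Cl(G,s,k)$ is proved throughout the permitted range, Corollary~\ref{cor:BinC} yields the claimed $n^{\Omega(k)/\de(s)}$ lower bound.
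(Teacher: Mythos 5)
Your overall architecture is the same as the paper's: use the Extension Lemma to pin down a good graph $G$, prove $\Cl(G,s,k)$ by induction on $s$ (a width argument at $s=1$, a covering-number/random-restriction argument for the step), then finish with Corollary~\ref{cor:BinC}. The inductive step you sketch is in the right spirit: restrict by a fresh random $\tfrac{1}{2^{s+2}}$-fraction of bits per block, kill bottleneck records (those of large covering number) by a union bound, then convert via Lemma~\ref{lem:covering-number}, balancing the blow-up $2^{d+1}$ against the ratio $L(s)/L(s+1)$; this is indeed how the paper proves Lemma~\ref{lem:ind}, with the probabilistic heart being Claim~\ref{claim:main} and Lemma~\ref{lem:prefect}. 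You are right that this is the delicate part and that it dictates $\de(s)=(s\,\pe(s))^s$ and the $(\log\log n)^{1/3}$ cap.

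However, there is a concrete gap in your base case, and it traces back to your choice $\beta=\tfrac14$. The Adversary in the Prover--Adversary game does not only face the $1$-restriction $\rho$ (which is $\tfrac14$-total); it must also survive the bits set \emph{during the game}. Your Adversary commits a vertex to a block the moment its first bit is questioned, and this commitment is determined only by $\rho$'s $\tfrac14\log n$ fixed bits in that block. With this eager strategy, $U$ fills up to size $\alpha k$ after only about $\alpha k$ queries, so the widest record reached is $O(\alpha k)$ literals, far short of the $\Omega(k\log n)$ needed to feed into the Ben-Sasson--Wigderson conversion; and once $|U|$ is at the transversal cap, the Extension Lemma offers no further guarantee, so the Adversary may simply lose with a narrow record. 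The paper instead takes $\beta=\tfrac34$ and has the Adversary give $\tfrac12\log n$ \emph{free-choice} bits in each block before committing a vertex (Lemma~\ref{lem:Res1-bin-k-clique}); the committed vertex must be consistent with a $\tfrac34$-total restriction (the $\tfrac14$ from $\rho$ plus the $\tfrac12$ freely chosen), which is exactly why the Extension Lemma is invoked with $\beta=\tfrac34$. Delaying commitment is what forces each of the $\alpha k$ blocks to contribute $\tfrac12\log n$ literals to some record, yielding width $\Omega(k\log n)$ and hence size $n^{\Omega(k)}$. So you need $\beta$ large enough to accommodate both the initial restriction and the free choices; $\beta=\tfrac14$ leaves no room for the game, and the claimed width bound does not follow.

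A second, minor imprecision: in your item (a) of the inductive step you demand that the \emph{covering number} of a restricted record stay above $d$ with small probability, but the probabilistic event the paper controls is that a bottleneck \emph{survives} (none of its $r=c/s$ pairwise-disjoint $s$-tuples is set entirely to $0$). These are not the same event, and the phrase ``without making any of them simultaneously true'' has no counterpart in the paper's argument. It is fine for a proposal to gesture at this, but as written it mislocates where the probability calculation actually bites.
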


\begin{proof}
	\textcolor{black}{Let $\beta=\frac{3}{4}$ and $\alpha=\frac{1}{16(1+2\epsilon)}$}. \textcolor{black}{Note that as $\alpha < \frac{1}{16}$, $1-\beta>4\alpha(1+\epsilon)$ holds.}
	
	By Lemma~\ref{lem:extension}, we can fix $G\sim \mathcal{G}^n_{k,\epsilon}$ such that: 
	\begin{enumerate}
		\item all $\alpha$-transversal sets $U$ are $\beta$-extendable; 
		\item $\mathcal{G}$ does not have a $k$-clique.
	\end{enumerate}
	We will prove, by induction on \textcolor{black}{$s \leq \left(\frac{1}{5} \log \log n\right)^{\frac{1}{3}}$}, that property $\Cl(G,s,k)$ does hold. \textcolor{black}{Lemma \ref{lem:Res1-bin-k-clique} is the base case and Lemma \ref{lem:ind} the inductive case. The result then follows by Corollary \ref{cor:BinC}}. 
\end{proof}


\begin{lemma}[Base Case]
	\label{lem:Res1-bin-k-clique}
	$\Cl(G,1,k)$ does hold.
	\label{cor:Res1-bin-k-clique-bis}
\end{lemma}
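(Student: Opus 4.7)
The plan is to establish a Resolution \emph{width} lower bound of $\Omega(k\log n)$ for $\BinC(G)\!\!\!\upharpoonright_\rho$ by exhibiting a winning Adversary strategy in the Atserias--Dalmau Boolean pebble game, and then to apply the Ben-Sasson--Wigderson size-width tradeoff to derive the claimed size bound. The combinatorial muscle is provided by the Extension Lemma (Lemma~\ref{lem:extension}), instantiated with $\alpha = 1/(16(1+2\epsilon))$ and $\beta = 3/4$ exactly as in the proof of Theorem~\ref{thm:main-BinC}. The $1$-restriction $\rho$ assigns $\lfloor\frac{1}{4}\log n\rfloor$ variables per block, well within the $\beta\log n$ budget permitted by the Extension Lemma, so the Adversary has ample headroom in each block.

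Concretely, the Adversary maintains two pieces of state: the currently pebbled partial assignment $\sigma$ (extending $\rho$), and an $\alpha$-transversal $U$ of size at most $\alpha k$ representing blocks whose vertex has been \emph{committed}. The maintained invariant is: (i) for each $b\in B(U)$, every bit of $v^b\in U$ already queried in $\sigma$ agrees with $v^b$; (ii) for each $b\notin B(U)$, block $b$ contributes at most $\lfloor\beta\log n\rfloor$ variables to $\sigma$, and $\sigma$ restricted to block $b$ can be completed to a $\beta$-total assignment consistent with some vertex of block $b$ adjacent to all of $U$. The Extension Lemma guarantees the invariant is extendable: when Prover places a pebble on $\omega_{i,j}$, if $i\in B(U)$ the response is dictated by $v^i$, and otherwise at least one of the two values of $\omega_{i,j}$ is consistent with some common neighbour of $U$ in block $i$, so the Adversary picks such a value. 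If this pushes block $i$ past the $\lfloor\beta\log n\rfloor$ threshold the Adversary \emph{commits} the block to a suitable common neighbour and adds it to $U$; when Prover removes pebbles the Adversary \emph{decommits} any block that falls back under the threshold. Under this strategy no edge axiom of $\BinC(G)$ is ever falsified, since committed vertices are mutual neighbours and uncommitted blocks remain consistent with common neighbours of $U$.

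The Adversary loses only when forced to commit an $(\alpha k+1)$-th block; that requires Prover to have $\alpha k+1$ distinct blocks, each simultaneously holding more than $\lfloor\beta\log n\rfloor$ pebbles, i.e.\ strictly more than $\alpha\beta k\log n$ pebbles on the board at once. Hence for every $w \leq \alpha\beta k\log n$ the Adversary wins the $w$-pebble game, so by the Atserias--Dalmau characterisation the Resolution refutation width of $\BinC(G)\!\!\!\upharpoonright_\rho$ exceeds $\alpha\beta k\log n = \Theta(k\log n)$. Plugging this into the Ben-Sasson--Wigderson size-width tradeoff, applied to $\BinC(G)\!\!\!\upharpoonright_\rho$ (which has $N = O(k\log n)$ free variables and maximum axiom width $w_0 = 2\log n$), yields
\[
S \;\geq\; \exp\!\Bigl(\Omega\bigl((w - w_0)^2/N\bigr)\Bigr) \;=\; \exp(\Omega(k\log n)) \;=\; n^{\Omega(k)},
\]
which comfortably exceeds the required bound $n^{\delta(k-1)/16}$ (recall $\de(1)=16$ and $\delta = 1/(2\cdot 96^2)$). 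The finitely many small $k$ where $w - w_0$ is not significant are handled by observing that the target bound is then sub-constant and so trivially satisfied.

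The main technical obstacle is the bookkeeping for the Adversary's strategy under Prover's interleaved place-and-remove moves: one must verify that the commit/decommit protocol preserves the invariant throughout any legal game, and that the ``pebble-budget'' argument bounding Prover's ability to force $(\alpha k + 1)$ simultaneous commitments survives adversarial reordering of moves (in particular, Prover cycling through blocks to build up and drop commitments cannot circumvent the lower bound because, at the moment Prover actually tries to force the fatal $(\alpha k+1)$-th commitment, all $\alpha k+1$ committed blocks must be simultaneously ``loaded'' above threshold).
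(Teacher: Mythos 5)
Your proof follows essentially the same route as the paper's: invoke the Extension Lemma to run an Adversary strategy forcing Resolution width $\Omega(k\log n)$ on $\BinC(G)\!\!\upharpoonright_\rho$, then apply Ben-Sasson--Wigderson, with the only presentational difference being that the paper compresses the Adversary strategy to two sentences whereas you spell out the Atserias--Dalmau game and the commit/decommit bookkeeping under forgetting moves. One small slip worth flagging: your claim that the target bound is ``sub-constant'' for small $k$ is false, since $n^{\delta(k-1)/16}>1$ whenever $k>1$ and $n\geq 2$ --- but the paper's calculation also only explicitly treats $k>2\cdot 16\cdot 96^2$, so this is a shared elision rather than a defect peculiar to your write-up.
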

\begin{proof}
	Fix $\beta=\frac{3}{4}$ and $\alpha=\textcolor{black}{\frac{1}{16(1+2\epsilon)}}$. \textcolor{black}{Note that  $\frac{1}{16} > \alpha >\frac{1}{48}$ and} $\de(1)=16$. \textcolor{black}{Notice also that $1-\beta>4\alpha(1+\epsilon)$ holds.}
	
	Let $\rho$ be a $1$-restriction, that is, a $\frac{1}{4}$-total assignment. 
	We claim that any Resolution refutation of $\BinC(G)\!\!\!\upharpoonright_\rho$ must have width at least $\frac{k\log n}{{\color{black} 96}}$. This is a consequence of \textcolor{black}{Property 1 of the Extension Lemma (\ref{lem:extension}), which we henceforth abbreviate as the \emph{extension property}, which allows Adversary to play against Prover with the following strategy. For each block, while fewer than $\frac{\log n}{2}$ bits are known, Adversary offers Prover a free choice. Once $\frac{\log n}{2}$ bits are set, then Adversary chooses an assignment for the remaining bits according to the extension property. 
		Summing up the $\frac{1}{4}$ (proportion of bits in the $\frac{1}{4}$-total assignment) with a potential further $\frac{1}{2}$ of the bits set in the game, we obtain no more than  $\frac{3}{4}=\beta$ proportion of bits set, in each block (though the bits set in each block need not be the same). Using the extension property separately in each block, we can guarantee that an appropriate assignment to the remaining bits also exists. This allows the game to continue until some CNF record has width at least $\frac{\log n}{2} \cdot \frac{k}{48}=\frac{k\log n}{96}$.} Size-width tradeoffs for Resolution  \cite{Ben-sasson99shortproofs} tell us that minimal size to refute any unsatisfiable CNF $F$ is lower bounded by  $2^{\frac{(\mathit{w(\vdash F)-w(F))}^2}{16 \mathit{V(F)}}}$\footnote{According to \cite{DBLP:journals/eccc/ECCC-TR99-041} Th 8.11 }. In our case $w(F)= 2\log n $ and $\mathit{V(F)}=k\log n$, hence 
	the minimal size required is $\geq 2^{\frac{(\frac{k \log n}{96} - 2 \log n)^2}{16 k \log n}}= 
	2^{\frac{\log n (\frac{k}{96}-2)^2}{16 k}}= n^{\frac{(\frac{k}{96} -2)^2}{16 k}}$.  It is not difficult to see that $\frac{(\frac{k}{96} -2)^2}{16k}$ \textcolor{black}{$>
		\frac{(k-1)}{2\cdot 16 \cdot 96^2}$ when $k>2 \cdot 16 \cdot 96^2$. Since $\delta=\frac{1}{2 \cdot 96^2}$} and $\de(1)=16$ the result is proved.
	
\end{proof}

{\color{black} For short, let $L(s) := n^{\frac{\delta(k-1)}{\de(s)}}$ denote the size bound from \cref{def:cliqueProp}.}

\begin{lemma}[Inductive Case]
\label{lem:ind}
$\Cl(G,s-1,k) \mbox{ implies } \Cl(G,s,k).$
\end{lemma}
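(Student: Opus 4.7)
The plan is to argue by contradiction: assume there exists an $s$-restriction $\rho$ and a $\RES(s)$ refutation $\pi$ of $\BinC(G)\!\!\!\upharpoonright_\rho$ of size strictly less than $L(s) = n^{\delta(k-1)/\de(s)}$. The goal is to construct from $\pi$ a $\RES(s-1)$ refutation of $\BinC(G)\!\!\!\upharpoonright_\tau$ of size less than $L(s-1)$ for some $(s-1)$-restriction $\tau$ extending $\rho$, contradicting the inductive hypothesis $\Cl(G,s-1,k)$.

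To this end, I would extend $\rho$ by a random restriction $\sigma$ that, in each block independently, picks uniformly at random an additional $\lfloor \log n/2^{s}\rfloor - \lfloor \log n/2^{s+1}\rfloor$ free bit positions and assigns each to $0$ or $1$ uniformly and independently. The composite $\tau = \rho \cup \sigma$ is then an $(s-1)$-restriction by construction. The central technical claim is that, with strictly positive probability over the choice of $\sigma$, every $s$-CNF appearing as a record in the branching-$s$-program corresponding to $\pi\!\!\!\upharpoonright_\sigma$ has covering number strictly less than $\pe(s) = 2^{s^2+3s}$.

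To establish the per-record estimate, I would fix a single $s$-CNF $\mathcal{D}$ of covering number at least $\pe(s)$ and use a standard greedy argument to extract from (the $s$-DNF dual of) $\mathcal{D}$ a family of at least $\pe(s)/s$ pairwise variable-disjoint $s$-terms. Since variables in distinct terms belong to distinct positions (and, most often, distinct blocks) the restriction acts independently on each term, so the events ``term $T_i$ fails to be killed by $\sigma$'' are independent. Each such event has probability bounded strictly below one by a quantity depending only on $s$ and the per-block restriction density, yielding a per-record survival probability that is exponentially small in $\pe(s)/s$. A union bound over the at most $L(s)$ records of $\pi$ then produces the desired positive probability that no surviving record has covering number at least $\pe(s)$.

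Fixing any $\sigma$ witnessing this property, Lemma~\ref{lem:covering-number} applied to $\pi\!\!\!\upharpoonright_\sigma$ yields a $\RES(s-1)$ refutation of $\BinC(G)\!\!\!\upharpoonright_\tau$ of size at most $e^{\pe(s)} \cdot L(s)$. The main obstacle I expect is the tight parameter bookkeeping required to close the argument: I must verify simultaneously that the exponentially small per-record failure probability beats the union bound over $L(s)$ records, and that $e^{\pe(s)}\cdot L(s) < L(s-1)$. This is precisely where the doubly exponential choice $\de(s) = (\pe(s)s)^s$ enters the argument: the gap between the exponents $\delta(k-1)/\de(s-1)$ and $\delta(k-1)/\de(s)$ must comfortably absorb both the $e^{\pe(s)}$ blow-up from the covering-number reduction and the union-bound loss. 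The hypothesis $s \leq (\tfrac{1}{5}\log\log n)^{1/3}$ keeps $\pe(s)$ small enough relative to $n$ for this accounting to go through.
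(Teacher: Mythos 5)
Your high-level architecture is the right one and matches the paper's: apply a random restriction on top of $\rho$ to get an $(s-1)$-restriction, kill all high-covering-number records by a union bound, invoke Lemma~\ref{lem:covering-number} to collapse to $\RES(s-1)$, and check that the size still falls below $L(s-1)$. However, there is a genuine gap in the choice of bottleneck threshold, and it is fatal to the union bound as you set it up.

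You set the covering-number threshold at $\pe(s)=2^{s^2+3s}$, a constant in $n$. With that threshold, the greedy extraction yields about $\pe(s)/s$ pairwise-disjoint $s$-terms per bottleneck, and each term is killed with probability only about $1/\pe(s)$ (for all $s$ of its literals to be set to $0$, each literal must be both selected, with probability roughly $1/2^{s+1}$ per position, and flipped, with probability $1/2$ — this is exactly the content of Lemma~\ref{lem:survival-maximised-k-clique}, which gives kill probability at least $1/\pe(s)$ and essentially no more). Hence the per-record survival probability is about $(1-1/\pe(s))^{\pe(s)/s}\approx e^{-1/s}$, a quantity bounded away from $0$ by a constant, not one that decays as $\pe(s)/s$ grows; the two factors cancel. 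The union bound then asks for $L(s)\cdot e^{-1/s}<1$, i.e.\ $\log L(s)<1/s$, which fails badly since $L(s)=n^{\Theta(1)}$. The paper avoids this by taking the threshold $c$ so that $2^{c+2}=L(s-1)/L(s)$, which makes $c=\Theta(\log n)$; the per-record survival becomes $(1-1/\pe(s))^{c/s}\leq e^{-c/(s\pe(s))}$, and the doubly-exponential form $\de(s)=(s\pe(s))^s$ is precisely what makes $c/(s\pe(s))>\log L(s)$ hold, while still keeping $2^{c+2}L(s)=L(s-1)$ for the size bookkeeping. Your size accounting $2^{\pe(s)}L(s)<L(s-1)$ does happen to hold under $s\leq(\frac{1}{5}\log\log n)^{1/3}$, so the size side of your proposal is fine; it is only the probabilistic side that breaks.

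A second, more minor issue: the claim that the events ``$T_i$ survives'' are independent because the $T_i$ use distinct variables is not correct. Distinct terms may still lie in the same block, and since the restriction selects a fixed number of positions per block, the events are correlated. Fortunately the correlation runs in the favourable direction (conditioning on $T_j$ being killed only increases the survival chance of $T_i$), and this is what Claim~\ref{claim:main} together with Lemma~\ref{lem:probcond} formalizes; you would need an argument of that shape rather than a bare assertion of independence.
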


\begin{proof} 
{\color{black}

Assume (towards a contradiction) the opposite -- that $\Cl(G,s-1,k)$ holds but there is some $s$-restriction $\rho$ such that $\BinC(G)\!\!\!\upharpoonright_\rho$ has a refutation $\pi$ of size strictly less than $L(s)$. Fix $c$ to be such that

\[
2^{c + 2} = \frac{L(s - 1)}{L(s)}.
\]

Define $r= \frac{c}{s}$ and let us call a \emph{bottleneck} a CNF record $R$ in $\pi$ whose covering number is $\geq c$. 
Hence in such a CNF record it is always possible to find $r$   pairwise disjoint $s$-tuples of literals $T_1=(\ell^1_1,\ldots,\ell^s_1),\ldots,T_r=(\ell^1_r,\ldots,\ell^s_r)$ such that the $\bigwedge T_i$'s are among the terms of the $s$-DNF forming the CNF record $R$.

Let $\sigma$ be a \emph{random $s$-restriction} on the variables of $\BinC(G)\!\!\!\upharpoonright_\rho$. 
Let us say that   {\em $\sigma$ kills a tuple $T$} if it sets to $0$ all literals in $T$ (remember that a record $s$-CNF is the negation of a $s$-DNF)
and that {\em $T$ survives $\sigma$} otherwise,  
and let us say that {\em $\sigma$ kills $R$} if it kills at least one of the  tuples in $R$. 
Let $\Sigma_i$ be the event that $T_i$ survives $\sigma$ and $\Sigma_R$ the event that $R$ survives $\sigma$. 
We claim (postponing the proof) that 
\begin{claim}
	\label{claim:main}
	If $R$ is a bottleneck, then $\Pr[\Sigma_R] \leq (1-\frac{1}{\pe(s)})^{r}$.
\end{claim}

Consider now   the restriction $\tau= \rho \sigma$. This  is a $(s-1)$-restriction on the variables of $\BinC(G)$.  We argue that in $\pi\!\!\restriction_\tau$, with probability more than zero, there is no bottleneck.  Notice that by the union bound the probability that there exists such a bottleneck CNF record $R$ that survives in $\pi\!\!\restriction_\tau$, is bounded by
$$
\Pr[ \exists R \in \pi\!\!\restriction_\rho : \Sigma_R] \leq |\pi\!\!\restriction_\tau\!\!| \left(1-\frac{1}{\pe(s)}\right)^{r}.
$$
{\color{black} (Recall that the probabilistic aspect here comes from $\sigma$ being a random $s$-restriction.)} We claim that this probability is $<1$. 
Notice that $(1-\frac{1}{\pe(s)})^{r}\leq e^{-\frac{c}{s\pe(s)}}$ using the definition of $r$.
So to prove the claim it is sufficient to prove that $|\pi\!\!\restriction_\tau\!\!|<e^{\frac{c}{\pe(s)s}}$. As $|\pi\!\!\restriction_\tau\!\!| \leq |\pi\!\!\restriction_\rho\!\!|$ and as by assumption $|\pi\!\!\restriction_\rho\!\!|  \leq L(s)$ we can show instead that

\[
e^{\frac{c}{s \cdot \pe(s)}} > L(s)
\]
\noindent or equivalently that $e^c \geq L(s)^{s \cdot \pe(s)}$. Now, as $c$ is increasing (in $n$ - see the discussion following the conclusion of this proof) we have, for $n$ large enough,
\[
e^c > 2^{c + 2} = \frac{L(s - 1)}{L(s)}
\]

so what we will show instead is that

\begin{gather}
	L(s - 1) \geq L(s)^{s \cdot \pe(s) + 1} \\
	\Leftrightarrow n^{\frac{\delta(k - 1)}{((s-1) \cdot \pe(s - 1)))^{s - 1}}} \geq \left( n^{\frac{\delta (k - 1)}{\left(s \cdot \pe(s) \right)^s}}  \right) ^{ s \cdot \pe(s) + 1} \\
	\Leftrightarrow \frac{1}{((s-1) \cdot \pe(s - 1)))^{s - 1}} \geq  \frac{s \cdot \pe(s) + 1}{\left(s \cdot \pe(s) \right)^s} \\
	\Leftrightarrow \left( s \cdot \pe(s) \right)^s \geq \left( s \cdot \pe(s) + 1 \right) ((s-1) \cdot \pe(s - 1)))^{s - 1}.
\end{gather}

Now, as $(s \cdot \pe(s) + 1) \leq 2 s \cdot \pe(s)$ it would suffice to show that $ s \cdot \pe(s) \geq 2^{(s - 1)^{-1}} (s-1) \cdot \pe(s - 1)$. But this is clear:

\begin{gather}
	 2^{(s - 1)^{-1}} (s-1) \cdot \pe(s - 1) \leq 2s \pe(s - 1) = 2s 2^{(s - 1)^2 + 3(s - 1)} = 2s 2^{s^2 + s - 2} \\= s 2^{s^2 + s - 1} \leq s 2^{s^2 + 3s} = s \cdot \pe(s).
\end{gather}

So there exists a specific $(s-1)$-restriction $\tau$ where $\pi\!\!\restriction_\tau$ contains no bottlenecks. Therefore, by \cref{lem:covering-number}, there is a $\RES(s-1)$ refutation of size strictly less than
\[
2^{c + 2} \cdot L(s) = L(s - 1)
\]
in direct contradiction with our inductive assumption.

}

\end{proof}

Let us ponder what lower bound we have discovered. Due to the definition of $L(s)$ the proof can be carried as long as $n^{\frac{\delta}{\de(s)}}$ (where $\de(s)=(s\pe(s))^s$ and $\pe(s)= {\color{black} 2^{s^2 + 3s}}$) is non-constant, whereupon $n^{\frac{\delta(k-1)}{\de(s)}}$ grows significantly in $k$. This holds while $(s\pe(s))^s < \log n$ {\color{black} which simplifies as

\begin{gather}
   \log \log n >  s \log ( s \pe(s)) = s \log ( s 2^{s^2 + 3s}) = s \log s + s^3 + 3 s^2.
\end{gather}

Clearly this holds if $s\leq \left(\frac{1}{5} \log \log n\right)^{\frac{1}{3}}$}. Hence we can deduce the following from Corollary~\ref{cor:BinC}.
\begin{corollary}
\label{cor:BinCreal}
Let \textcolor{black}{$s\leq \left(\frac{1}{5} \log \log n\right)^{\frac{1}{3}}$} and $k\leq \log n$ be integers. Choose $G$ so that $\Cl(G,s,k)$ holds (knowing that such exists). Then there are no Res($s$) refutations of  $\BinC(G)$ of size smaller than $n^{\delta \frac{k-1}{\de(s)}}$, which is of the form $g(n)^k$ for some strictly increasing function $g$. 
\end{corollary}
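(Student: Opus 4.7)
The plan is to simply assemble the pieces that are already in place: a graph witnessing the property $\Cl(G,s,k)$ (by Theorem~\ref{thm:main-BinC}), a size lower bound from the property (by Corollary~\ref{cor:BinC}), and a verification that the stated range of $s$ keeps this bound non-trivial and that it takes the advertised form $g(n)^k$.

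First I would observe that the upper bound \textcolor{black}{$s \leq (\tfrac{1}{5}\log\log n)^{1/3}$} is exactly the quantitative condition derived in the paragraph preceding the corollary: from $\de(s) = (s\pe(s))^s$ and $\pe(s) = 2^{s^2+3s}$, the requirement $\de(s) < \log n$ becomes $\log\log n > s\log s + s^3 + 3s^2$, which is implied by the stated bound on $s$. In particular, for any such $s$ the function $\de(s) < \log n$, so Theorem~\ref{thm:main-BinC} applies and produces a graph $G$ (drawn from $\mathcal{G}^{n}_{k,\epsilon}(p)$ via the Extension Lemma) for which property $\Cl(G,s,k)$ holds. Corollary~\ref{cor:BinC} then immediately yields the size lower bound $n^{\delta(k-1)/\de(s)}$ for any $\RES(s)$ refutation of $\BinC(G)$.

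Second, I would rewrite the bound to expose the $k$-dependence. Setting $g(n) := n^{\delta/(2\de(s))}$, one has for $k \geq 2$
\[
n^{\delta(k-1)/\de(s)} \;=\; \bigl(n^{\delta/\de(s)}\bigr)^{k-1} \;\geq\; \bigl(n^{\delta/\de(s)}\bigr)^{k/2} \;=\; g(n)^k.
\]
Under the stated range of $s$ we have $\de(s) < \log n$, hence $g(n) > 1$ (in fact $g(n) \geq 2^{\delta/2}\cdot n^{\Omega(1/\de(s))}$ grows strictly with $n$ for each fixed $s$, and remains strictly above $1$ even when $s$ is allowed to vary with $n$ at the maximal rate). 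This gives the claimed form $g(n)^k$ with $g$ strictly increasing.

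There is no real obstacle here: every hard step has already been done, either in the Extension Lemma (Lemma~\ref{lem:extension}), the base case (Lemma~\ref{lem:Res1-bin-k-clique}), or the inductive case (Lemma~\ref{lem:ind}). The only thing worth checking carefully is the arithmetic on the admissible range of $s$, to be sure that $\de(s) = (s\pe(s))^s < \log n$ really does follow from $s \leq (\tfrac{1}{5}\log\log n)^{1/3}$; this amounts to the inequality $s^3 + 3s^2 + s\log s \leq \tfrac{1}{5}\log\log n \cdot (1 + o(1))$, which holds comfortably for $n$ sufficiently large.
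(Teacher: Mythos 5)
Your proposal is correct and matches the paper's approach, which is simply to invoke Corollary~\ref{cor:BinC} for a $G$ guaranteed by Theorem~\ref{thm:main-BinC}, after checking that the admissible range of $s$ keeps $\de(s)<\log n$ (this verification appears verbatim in the paragraph preceding the corollary). Your explicit extraction of $g(n) := n^{\delta/(2\de(s))}$ via $k-1\geq k/2$ (valid since Theorem~\ref{thm:main-BinC} already imposes $k>1$) is a reasonable way to realise the `` $g(n)^k$'' form; note only that your side remark $g(n)\geq 2^{\delta/2}\cdot n^{\Omega(1/\de(s))}$ is mis-stated (the correct consequence of $\de(s)<\log n$ is just $g(n)>2^{\delta/2}$, a constant), and the strict monotonicity of $g$ when $s$ itself varies with $n$ is left as a hand-wave — but the paper supplies no more rigour on that point either.
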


\bigskip

\begin{proof}(of Claim \ref{claim:main})
Since $T_1,\ldots,T_r$ are tuples in $R$, then
$\Pr[\Sigma_R] \leq  \Pr[\Sigma_1 \wedge \ldots \wedge \Sigma_r]$. Moreover 
$\Pr[\Sigma_1 \wedge \ldots \wedge \Sigma_r]=\prod_{i=1}^r\Pr[\Sigma_i | \Sigma_1 \wedge \ldots \wedge \Sigma_{i-1}]$.
We will prove that for all $i=1,\ldots, r$,

\begin{eqnarray}
\Pr[\Sigma_i | \Sigma_1 \wedge \ldots \wedge \Sigma_{i-1}]\leq \Pr[\Sigma_i]. \label{eq:main}
\end{eqnarray} 

Hence the result follows from  Lemma \ref{lem:prefect} which is proving that  
$\Pr[\Sigma_i] \leq 1-\frac{1}{\pe(s)}$. 

By Lemma \ref{lem:probcond} (i), to prove that Equation \ref{eq:main} holds, we show that
 $\Pr[\Sigma_i | \neg \Sigma_1 \vee \ldots \vee \neg \Sigma_{i-1}] \geq \Pr[\Sigma_i]$.
We claim that  for $j \in [r], i\not = j$: 
\begin{eqnarray}
\Pr[\Sigma_i | \neg \Sigma_j] \geq \Pr[\Sigma_i] \label{Eq}
\end{eqnarray}
Hence repeated applications of Lemma \ref{lem:probcond} (ii), prove that
$\Pr[\Sigma_i | \neg \Sigma_1 \vee \ldots \vee \neg \Sigma_{i-1}] \geq \Pr[\Sigma_i].$

To prove Equation \ref{Eq}, let $B(T_i)$ be the set of blocks mentioned in $T_i$. 
If $B(T_i)$ and $B(T_j)$ are disjoint, then clearly $\Pr[\Sigma_i | \neg \Sigma_j] = \Pr[\Sigma_i]$. 
When $B(T_i)$ and $B(T_j)$ are not disjoint,  we reason as follows:
For each $\ell \in B(T_i)$, let $T_i^\ell$ be the set of variables in $T_i$ mentioning block $\ell$.
$T_i$ is hence partitioned into $\bigcup_{\ell\in B(T_i)}T^\ell_i$ and hence the event ``$T_i$ surviving  $\sigma$'', 
can be partitioned into the events that $T^\ell_i$ survives $\sigma$, for  $\ell \in B(T_i)$.
Denote by $\Sigma^\ell_i$  the event ``$T^\ell_i$ survives $\sigma$'' and
let A=$B(T_i)\cap B(T_j)$ and $B=B(T_i)\setminus (B(T_i)\cap B(T_j))$.  The following equalities hold:

\begin{eqnarray}
\Pr[\Sigma_i |\neg \Sigma_j] &=&\Pr[\exists \ell \in B(T_i): \Sigma_i^\ell | \neg \Sigma_j] \\
& =& \sum_{\ell \in B(T_i)} \Pr[\Sigma_i^\ell | \neg \Sigma_j]\\
&=&\sum_{\ell \in A} \Pr[\Sigma_i^\ell  | \neg \Sigma_j] +\sum_{\ell \in B} \Pr[\Sigma_i^\ell  | \neg \Sigma_j].\\
\end{eqnarray}

Since $B$ is disjoint from $B(T_j)$, as for the case above for each $\ell \in B$, $\Pr[\Sigma_i^\ell  | \neg \Sigma_j]= \Pr[\Sigma_i^\ell]$. Then:
\begin{eqnarray}
\sum_{\ell \in B} \Pr[\Sigma_i^\ell  | \neg \Sigma_j] = \sum_{\ell \in B} \Pr[\Sigma_i^\ell]. \\
\end{eqnarray}

Notice that  $T_i$ and $T_j$ are disjoint, hence knowing that some indices in blocks $\ell \in A$ are already chosen to 
kill $T_j$, only increase the chances of $T_i$ 
to survive (since less positions are left in the blocks  $\ell \in A$ to potentially kill $T_i$). 

Hence:
\begin{eqnarray}
\sum_{\ell \in A} \Pr[\Sigma_i^\ell  | \neg \Sigma_j] \geq \sum_{\ell \in A} \Pr[\Sigma_i^\ell]. \\
\end{eqnarray}

Which proves the claim since:
\begin{eqnarray}
\sum_{\ell \in A} \Pr[\Sigma_i^\ell ] + \sum_{\ell \in B} \Pr[\Sigma_i^\ell]= \Pr[\Sigma_i]. 
\end{eqnarray}

\end{proof}

\textcolor{black}{Let $T=(\ell_{i_1,j_1},\ldots, \ell_{i_s,j_s})$ be an $s$-tuple made of disjoint literals of 
$\BinC(G)$. We say that $T$ is {\em perfect} if all literals are bits of a same block.}
\begin{lemma}
\label{lem:prefect}
Let  $\rho$ be a $s$-random restriction \textcolor{black}{and let $s\leq \left(\frac{1}{5} \log \log n\right)^{\frac{1}{3}}$}. \textcolor{black}{Let $T$ be a perfect $s$-tuple of literals from $\BinC(G)$.}
For all $s$-tuples $S$:
\[
\Pr[\mbox{$S$ survives $\rho$}] \leq \textcolor{black}{\Pr[\mbox{$T$ survives $\rho$}],}
\]
\textcolor{black}{and so 
\[
\Pr[\mbox{$S$ survives $\rho$}] \leq  1-\frac{1}{\pe(s)} = 1-\frac{1}{2^{s^2+3s}}.
\]}
\label{lem:survival-maximised-k-clique}
\end{lemma}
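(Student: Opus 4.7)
The plan is to write the kill probability as a product over blocks, show that the perfect (one-block) configuration minimises it, and then lower-bound the kill probability of a perfect tuple explicitly.

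First, set $k_s := \lfloor \log n / 2^{s+1} \rfloor$, the number of bits the restriction fixes per block. Since $\rho$ picks its variables independently across blocks and then assigns each with an independent fair coin flip, for any $s$-tuple $S$ whose literals distribute into the blocks with multiplicities $s_1, \ldots, s_m$ (summing to $s$) the kill probability factorises as
\[
\Pr[S\text{ killed}] \;=\; 2^{-s} \prod_{i=1}^m \frac{(k_s)_{s_i}}{(\log n)_{s_i}},
\]
where $(x)_t := x(x-1)\cdots(x-t+1)$. The $2^{-s}$ accounts for all coin flips landing on the killing polarity, while each hypergeometric ratio is the probability that the $s_i$ relevant variables of block $i$ end up amongst the $k_s$ selected ones. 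For a perfect $T$ this collapses to $\Pr[T \text{ killed}] = 2^{-s}(k_s)_s/(\log n)_s$.

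Second, I would prove by induction on $m$ that the above product is minimised by a single-block (perfect) configuration; it suffices to check one ``merge'' move, namely that for positive $s_1, s_2$,
\[
\frac{(k_s)_{s_1+s_2}}{(\log n)_{s_1+s_2}} \;\leq\; \frac{(k_s)_{s_1}}{(\log n)_{s_1}} \cdot \frac{(k_s)_{s_2}}{(\log n)_{s_2}}.
\]
After cancelling $(k_s)_{s_1}/(\log n)_{s_1}$ this reduces to the factor-by-factor inequality $\prod_{j=0}^{s_2-1}\frac{k_s - s_1 - j}{\log n - s_1 - j} \leq \prod_{j=0}^{s_2-1}\frac{k_s - j}{\log n - j}$, which holds because subtracting the same positive $s_1$ from both the numerator and denominator of a ratio in $(0,1]$ only decreases it (using $k_s \leq \log n$). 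This gives $\Pr[T \text{ killed}] \leq \Pr[S \text{ killed}]$, equivalently $\Pr[S \text{ survives}] \leq \Pr[T \text{ survives}]$.

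Third, the hypothesis $s \leq (\tfrac{1}{5}\log\log n)^{1/3}$ forces $s \ll k_s$ for large $n$, so $k_s - j \geq k_s/2$ for all $j < s$, and therefore $(k_s)_s/(\log n)_s \geq (k_s/(2\log n))^s \geq 2^{-s(s+2)}$; multiplying by $2^{-s}$ yields $\Pr[T\text{ killed}] \geq 2^{-(s^2+3s)} = 1/\pe(s)$, hence $\Pr[S \text{ survives}] \leq 1 - 1/\pe(s)$. The main obstacle is the combinatorial merge inequality, whose content is that sampling without replacement inside one block is more negatively correlated than the independent sampling across distinct blocks; once the ratios are paired off factor-by-factor, the comparison becomes transparent.
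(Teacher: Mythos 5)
Your proof is correct and rests on the same block-product decomposition of the kill probability (hypergeometric selection factor times the $2^{-r}$ coin-flip factor per block) that the paper uses. The one added value is your explicit merge inequality $\frac{(k_s)_{s_1+s_2}}{(\log n)_{s_1+s_2}} \le \frac{(k_s)_{s_1}}{(\log n)_{s_1}}\cdot\frac{(k_s)_{s_2}}{(\log n)_{s_2}}$, which actually proves the first inequality of the lemma ($\Pr[S\text{ survives}]\le\Pr[T\text{ survives}]$ for $T$ perfect); the paper asserts this but its computation only directly lower-bounds the kill probability of an arbitrary tuple factor-by-factor (giving $\prod_i \gamma^{r_i}/4^{r_i} = 2^{-(s^2+3s)}$), so your version is the more complete account of the statement as written. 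Both arguments gloss over the same negligible rounding issue in $\lfloor\log n / 2^{s+1}\rfloor$.
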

\begin{proof}


Let $\gamma=\frac{1}{2^{s+1}}$. A block with $r$ distinct bits contributes a factor of
\[ \frac{{\gamma \log n \choose r}}{{\log n \choose r}} \cdot \frac{1}{2^r} \]
to the probability that the $s$-tuple \textbf{does not} survive. Expanding the left-hand part of this we obtain
\[ \frac{\gamma \log n \cdot \gamma \log n \, -1 \cdots \gamma \log n \, -r+1}{\log n \cdot \log n \, -1 \cdots \log n \, -r+1} = \gamma \frac{\log n}{\log n} \cdot \gamma \frac{\log n \, -\frac{1}{\gamma}}{\log n \, -1} \cdots \gamma \frac{\log n \, -\frac{r}{\gamma}+\frac{1}{\gamma}}{\log n \, -r+1}. \]
Next, let us note that
\[  1 = \frac{\log n}{\log n} > \frac{\log n \, -\frac{1}{\gamma}}{\log n \, -1} > \cdots > \frac{\log n \, -\frac{r}{\gamma}+\frac{1}{\gamma}}{\log n \, -r+1} \textcolor{black}{>\frac{1}{2}} \]
\textcolor{black}{while $r\leq s$. This is because $2(\log n - 2^{s+1}s+2^{s+1}) \geq \log n-s+1$ reduces to $\log n \geq 2^{s+2}s-2^{s+2}-s+1$ which holds while $s\leq \left(\frac{1}{5} \log \log n\right)^{\frac{1}{3}}$}.

\textcolor{black}{Calculating $\gamma^s\cdot \nicefrac{1}{2^s} \cdot \nicefrac{1}{2^s}= \nicefrac{1}{2^{s^2+3s}}$,}
the result now follows when we recall that the probability of surviving is maximised when the probability of not surviving is minimised.
\end{proof}
\begin{lemma}
\label{lem:probcond} Let $A,B,C$ three events such that $\Pr[A],\Pr[B],\Pr[C]>0$:
\begin{enumerate}
\item[$(i)$] If $\Pr[A|\neg B]\geq \Pr[A]$ then $\Pr[A | B]\leq \Pr[A]$,
\item[$(ii)$] 
If $\Pr[A|B] \geq \Pr[A]$ and   $\Pr[A|C] \geq \Pr[A]$, then
 $\Pr[A | B \vee C] \geq \Pr[A]$.
\end{enumerate}
\end{lemma}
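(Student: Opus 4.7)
The plan for part (i) is a one-step calculation with the law of total probability. Expanding,
\[
\Pr[A] \;=\; \Pr[A\mid B]\Pr[B] \;+\; \Pr[A\mid \neg B]\Pr[\neg B],
\]
and substituting the hypothesis $\Pr[A\mid \neg B]\ge \Pr[A]$ gives
\[
\Pr[A] \;\ge\; \Pr[A\mid B]\Pr[B] \;+\; \Pr[A]\Pr[\neg B],
\]
which rearranges to $\Pr[A]\Pr[B] \ge \Pr[A\mid B]\Pr[B]$. Since $\Pr[B]>0$, dividing through yields $\Pr[A\mid B] \le \Pr[A]$.

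For part (ii), I would try to express $\Pr[A\mid B\vee C]$ as a convex combination over a disjoint decomposition of $B\cup C$. Writing $B\cup C = B \sqcup (C\setminus B)$ gives
\[
\Pr[A\mid B\vee C] \;=\; \frac{\Pr[B]}{\Pr[B\vee C]}\,\Pr[A\mid B] \;+\; \frac{\Pr[C\setminus B]}{\Pr[B\vee C]}\,\Pr[A\mid C\setminus B].
\]
The first conditional probability is $\ge \Pr[A]$ by hypothesis, so if the second one is also $\ge \Pr[A]$ then the convex combination is $\ge \Pr[A]$ and we are done. To try to get $\Pr[A\mid C\setminus B]\ge \Pr[A]$ from the remaining hypothesis $\Pr[A\mid C]\ge \Pr[A]$, I would condition once more on $B$ inside $C$:
\[
\Pr[A\mid C] \;=\; \Pr[A\mid B\cap C]\Pr[B\mid C] \;+\; \Pr[A\mid C\setminus B]\Pr[\neg B\mid C],
\]
and attempt to bound $\Pr[A\mid B\cap C]$ from above using part (i) applied within the sample space restricted to $C$ (noting that $\Pr[A\mid B]\ge\Pr[A]$ converts, by the contrapositive form of (i), to $\Pr[A\mid \neg B]\le \Pr[A]$).

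The hard step is exactly this last one: the two hypotheses do not, on their own, control $\Pr[A\mid B\cap C]$, so a purely unconditional argument does not close the gap. I would therefore expect to need either an additional monotonicity/negative-correlation property inherited from the way the events $\neg\Sigma_j$ are built from a product random restriction over disjoint coordinates in the intended application (so that analogues of an FKG-style inequality are available), or to proceed by a direct case analysis exploiting that the events in question are indicator events on a product space. Either way, (i) is essentially immediate and the work is concentrated in part (ii).
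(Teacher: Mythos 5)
Your part (i) matches the paper's proof exactly. On part (ii), however, your suspicion is well founded and in fact sharper than you stated: the claim of (ii) is false in general. A concrete counterexample: take a probability space with four atoms of masses $0.4$, $0.1$, $0.25$, $0.25$, and set $A$ to be the union of the first two, $B$ the union of the first and third, $C$ the union of the first and fourth. Then $\Pr[A]=0.5$, $\Pr[A\mid B]=\Pr[A\mid C]=0.4/0.65>0.5$, yet $\Pr[A\mid B\vee C]=0.4/0.9<0.5$. The paper's own derivation of (ii) is flawed at its first step, which asserts $\Pr[A\wedge(B\vee C)]\geq \Pr[A\wedge B]+\Pr[A\wedge C]$; by inclusion--exclusion the true identity is $\Pr[A\wedge(B\vee C)]=\Pr[A\wedge B]+\Pr[A\wedge C]-\Pr[A\wedge B\wedge C]$, so the asserted inequality is reversed whenever $\Pr[A\wedge B\wedge C]>0$. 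That triple-intersection term is precisely the quantity your decomposition identified as uncontrolled by the hypotheses. Your conclusion that additional structure is needed --- some joint-independence or negative-association property of the events $\neg\Sigma_j$ arising from the product random restriction, rather than just the pairwise conditions $\Pr[A\mid B]\geq\Pr[A]$ and $\Pr[A\mid C]\geq\Pr[A]$ --- is the correct diagnosis. As stated, the lemma cannot serve as the abstract tool invoked in the proof of Claim~\ref{claim:main}; that argument needs to be repaired by exploiting the structure of the $\neg\Sigma_j$ directly, or by adding to (ii) a hypothesis that rules out the counterexample above.
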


\begin{proof}\color{black}
For part (i) consider the following equivalences:
$$
\begin{array}{lll}
\Pr[A]&=&\Pr[A|B]\Pr[B]+\Pr[A|\neg B]\Pr[\neg B] \\
   \Pr[A]       &=&\Pr[A|B]\Pr[B]+\Pr[A|\neg B](1-\Pr[B]) \\
   \Pr[A]       &\geq& \Pr[A|B]\Pr[B]+\Pr[A](1-\Pr[B])\\
   \Pr[A]\Pr[B]       &\geq&\Pr[A|B]\Pr[B]\\
   \Pr[A]       &\geq&\Pr[A|B] \\
   \end{array}
$$
For part (ii) consider the following inequalities: 
$$
\begin{array}{lll}
\Pr[A | B \vee C] &=& \frac{\Pr[A \wedge (B \vee C)]}{\Pr[B\vee C]} \\
 &\geq & \frac{\Pr[A \wedge B]}{\Pr[B\vee C]} +\frac{\Pr[A \wedge C]}{\Pr[B\vee C]}  \\
 & = & \frac{\Pr[A \wedge B]}{\Pr[B]} \cdot \frac{\Pr[B]}{\Pr[B\vee C]}+\frac{\Pr[A \wedge C]}{\Pr[C]} \cdot \frac{\Pr[C]}{\Pr[B\vee C]} \\
 &= & \Pr[A|B] \cdot \frac{\Pr[B]}{\Pr[B\vee C]}+\Pr[A|C] \cdot \frac{\Pr[C]}{\Pr[B\vee C]} \\
 &\geq &\Pr[A] \cdot (\frac{\Pr[B]+\Pr[C]}{\Pr[B\vee C]}) \\
 & \geq &\Pr[A].
 \end{array}
$$
\end{proof}


\section{Res(s) and the weak Pigeonhole Principle}
\label{sec:wphp}
 
For $n<m$, let $\BinPHP^m_n$ be the binary encoding of the (weak) Pigeonhole Principle. This involves variables $\omega_{i,j}$ that range over $i \in [m], j \in [\log n]$, where we assume for simplicity of our exposition that $n$ is a power of $2$. 
\textcolor{black}{
Its clauses are just $(\bigvee_{\ell=1}^{\log n} \omega^{1-a_\ell}_{i,\ell}\vee \bigvee_{\ell=1}^{\log n} \omega^{1-a_\ell}_{j,\ell})$, for $i \neq j$ and $a \in [n]$, where $\mathrm{bin}(a)$ is $a_1 \ldots  a_{\log n}$.
}
For a comparison with the unary version see Section \ref{sec:unbin}.  First notice that an analog of 
Lemma \ref{lem:reslog} holds for the Pigeonhole Principle too.
\begin{lemma}
\label{lem:php-un-bin}
Suppose there are Resolution refutations of $\pPHP^m_n$ of  size $S$. 
Then there are $\RES(\log n)$ refutations of $\BinPHP^m_n$  of size $S$.
\end{lemma}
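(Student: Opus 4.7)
The plan is to adapt the translation used in the proof of Lemma \ref{lem:reslog} to the setting of the Pigeonhole Principle, since both $\pPHP^m_n$ and $\UnC(G)$ have the same structural shape: a ``big disjunction'' axiom witnessing an existential (the pigeon axiom $\bigvee_{a \in [n]} p_{i,a}$, respectively the block axiom $\bigvee_{a \in [n]} v_{i,a}$), together with pairwise ``forbidding'' axioms (the hole axioms, respectively the non-edge axioms). I would first view the given Resolution refutation of $\pPHP^m_n$ of size $S$ in its upside-down form as a decision DAG whose queries are on the unary variables $p_{i,a}$ and whose sinks are labelled by negations of axioms of $\pPHP^m_n$.

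Next I would perform the natural translation of queries: wherever the decision DAG queries a variable $p_{i,a}$, with $\mathrm{bin}(a) = a_1 \ldots a_{\log n}$, the new branching $\log n$-program questions the $\log n$-disjunction $(\omega^{1-a_1}_{i,1} \vee \ldots \vee \omega^{1-a_{\log n}}_{i,\log n})$, with the edges marked true and false swapped. This is sound because $p_{i,a}$ being true corresponds exactly to pigeon $i$'s bit-vector being equal to $a$, and this disjunction is false precisely in that case. Since this is a one-to-one correspondence between internal nodes of the two DAGs, the internal node count is preserved.

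What remains is to check that sinks translate correctly to negations of $\BinPHP^m_n$ axioms. Sinks labelled with the negation of a hole axiom, $p_{i_1,j} \wedge p_{i_2,j}$, translate to the $2\log n$-term asserting that both pigeon $i_1$ and pigeon $i_2$ have bit-vectors matching hole $j$; this is precisely the negation of the corresponding binary hole axiom of $\BinPHP^m_n$. The main (minor) obstacle will be sinks labelled with negations of pigeon axioms, i.e.\ $\bigwedge_{a \in [n]} \neg p_{i,a}$, whose translation $\bigwedge_{a \in [n]} (\omega^{1-a_1}_{i,1} \vee \ldots \vee \omega^{1-a_{\log n}}_{i,\log n})$ is not itself the negation of a $\BinPHP^m_n$ axiom. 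However, that $\log n$-CNF is unsatisfiable already over just the $\log n$ bits of pigeon $i$ (any assignment to those bits picks out some hole $a$ whose associated clause is falsified), so such a sink is unreachable under any consistent extension of the record and can be removed by pruning the dead branch back to its nearest surviving ancestor in the branching program. This pruning does not increase the size. The resulting object is a valid branching $\log n$-program refuting $\BinPHP^m_n$, equivalently a $\RES(\log n)$ refutation of $\BinPHP^m_n$ of size at most $S$.
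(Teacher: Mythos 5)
Your translation of queries and of the hole-axiom sinks is exactly the one the paper has in mind: the paper gives no explicit proof for this lemma, and the intended argument is the one-line proof of Lemma~\ref{lem:reslog} for the $k$-Clique case, which is the same query substitution with true/false edges swapped. You were also right to flag the sinks coming from the pigeon axioms. In the decision DAG for $\pPHP^m_n$ these sinks are labelled $\bigwedge_{a\in[n]}\neg p_{i,a}$, and after the substitution their records become the unsatisfiable $\log n$-CNF $\bigwedge_{a\in[n]}\bigl(\omega^{1-a_1}_{i,1}\vee\cdots\vee\omega^{1-a_{\log n}}_{i,\log n}\bigr)$, which is not the negation of any clause of $\BinPHP^m_n$ since the binary encoding has no pigeon-existence axiom. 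The paper's own proof of Lemma~\ref{lem:reslog} silently glosses over this same point.

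Where your write-up has a genuine gap is in the pruning claim. The record only becomes unsatisfiable at the sink itself: at its parent the record contains only $n-1$ of the $n$ disjunctions over pigeon $i$, which is still satisfiable (it pins pigeon $i$'s bits to the one remaining hole). So the ``dead branch'' is a single node, and deleting it leaves the parent, which was a query node, with a single successor. That edge would then have to be read as a forget step, but the forget rules only weaken the record (drop a conjunct, or widen a clause), while the surviving child's record is strictly stronger than the parent's. You cannot simply pass through, and ``this pruning does not increase the size'' is not justified. Equivalently, in the $\RES(\log n)$ reading this leaf is the tautology $\bigvee_{a\in[n]}\bigl(\omega^{a_1}_{i,1}\wedge\cdots\wedge\omega^{a_{\log n}}_{i,\log n}\bigr)$, which is neither a clause of $\BinPHP^m_n$ nor obtainable from one by the syntactic weakening rules alone, so it needs a separate derivation or a preprocessing of the original refutation to eliminate such sinks. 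This is a fixable technicality (and one the paper itself does not address), but as written the pruning step is the one place your argument does not go through.
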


Let $\rho$ be a partial assignment (a restriction) to the variables of $\BinPHP^m_n$. We call $\rho$ a {\em $t$-bit} restriction if $\rho$ assigns $t$ bits of each pigeon $b\in [m]$, i.e. $t$ variables $\omega_{b,i}$  for each pigeon $b$. Let $v=(i,a)$  be an assignment meaning that pigeon $i$ is assigned to hole $a$ and let
$a_1\dots a_{\log n}$ be the binary representation of $a$.  We say that a restriction $\rho$ is {\em consistent} with $v$  if for all $j\in [\log n]$, $\sigma(\omega_{i,j})$ is either $a_j$ or not assigned. We denote by $\BinPHP^m_n\!\!\!\upharpoonright_\rho$, $\BinPHP^m_n$ restricted by $\rho$. We will also consider the situation in which an $s$-bit restriction is applied to some $\BinPHP^m_n\!\!\!\upharpoonright_\rho$, creating $\BinPHP^m_n\!\!\!\upharpoonright_\tau$, where $\tau$ is an $s+t$-bit restriction.

Throughout this section, let \textcolor{black}{$u=u(n,t):=2((\log n) - t)$ and $u':=(\log n) - t$. We do not use these shorthands universally}, but sometimes where otherwise the notation would look cluttered. We also occasionally write $(\log n) - t$ as $\log n\, - t$ (note the extra space).  \textcolor{black}{We say that a pigeon is \emph{mentioned} in a CNF if some literal involving that pigeon appears in the CNF.}
 
\begin{lemma}
\label{lem:unwritten}
Let $\rho$ be a $t$-bit restriction for $\BinPHP^m_n$. Any decision DAG for $\BinPHP^m_n\!\!\!\upharpoonright_\rho$ must contain a \textcolor{black}{1-CNF} record which mentions $\frac{n}{2^{t}}$ pigeons.
\end{lemma}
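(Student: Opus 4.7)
Plan: I will prove the lemma by exhibiting an adversary strategy in the Prover--Adversary game against any decision DAG that refutes $\BinPHP^m_n\!\!\!\upharpoonright_\rho$, forcing the path from the source to pass through a node whose record mentions at least $n/2^t$ pigeons. For each pigeon $i$, write $H_i\subseteq[n]$ for the set of holes whose binary encoding is consistent with the $t$ bits of pigeon $i$ set by $\rho$, so $|H_i|=n/2^t$.

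The adversary maintains a partial injection $\pi\colon P\to[n]$, where $P$ is the set of pigeons currently mentioned in the record, such that for every $i\in P$, $\pi(i)\in H_i$ and $\pi(i)$ is consistent with all bits of pigeon $i$ set in the record. On a query of bit $j$ of pigeon $i$: if $i\in P$, the adversary answers with the $j$-th bit of $\pi(i)$; otherwise (so $i$ now enters $P$), the adversary chooses some $h\in H_i\setminus\pi(P)$, commits $\pi(i):=h$, and answers with the $j$-th bit of $h$. Forgetting operations shrink $P$ accordingly. The key observation is that $H_i\setminus\pi(P)$ is nonempty whenever $|P|<n/2^t$, because $|H_i|=n/2^t$ while $|\pi(P)|=|P|$; hence the adversary can always respond under the constraint $|P|<n/2^t$.

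While this invariant is maintained, the record is satisfied by the partial injection $\pi$, so no axiom of the form $\bigvee_\ell\omega^{1-a_\ell}_{i,\ell}\vee\bigvee_\ell\omega^{1-a_\ell}_{j,\ell}$ can be falsified: such a falsification would force two distinct pigeons in $P$ to be committed to the same hole $a$, contradicting injectivity of $\pi$. Therefore no sink of the DAG can be reached under the adversary's play while $|P|<n/2^t$. Since the DAG is a sound refutation of the unsatisfiable $\BinPHP^m_n\!\!\!\upharpoonright_\rho$, the adversary's path must terminate at a sink, so at some node on the path we must have $|P|\geq n/2^t$; the $1$-CNF record at that node mentions at least $n/2^t$ pigeons, as required.

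The main delicate point is handling the forgetting rules: when a pigeon $i$ is removed from $P$ through forgetting and subsequently re-queried, the adversary's new choice of $\pi(i)$ must remain consistent with any earlier answers on $i$ that are still implicit on the path. This is handled by remembering the earlier commitment $\pi(i)$ and reusing it upon re-entry (so that $\pi$ is cumulative along the path), or equivalently by restricting the ``otherwise'' choice of $h$ to holes in $H_i$ additionally compatible with the adversary's prior answers on pigeon $i$; a routine bookkeeping check shows that neither variant weakens the pool enough to break the bound $|P|<n/2^t$, so the argument goes through unchanged.
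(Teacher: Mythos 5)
Your proof is correct and follows essentially the same adversary-strategy approach as the paper (maintain a partial matching $\pi$ on the mentioned pigeons, answer bits consistently with it, and observe that $|H_i|=n/2^t$ always leaves an available hole while fewer than $n/2^t$ pigeons are mentioned), with the minor refinement that you commit a hole to a newly mentioned pigeon \emph{before} answering its first bit rather than after a free choice. The ``delicate point'' in your last paragraph is in fact a non-issue: in the decision-DAG model the CNF record at a node is the entire state, so once a pigeon's literals are all forgotten there is no residual constraint on the path, and the adversary is genuinely free to choose a fresh hole from $H_i\setminus\pi(P)$ on re-query.
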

\begin{proof}
Let Adversary play in the following fashion. While some pigeon is not mentioned \textcolor{black}{in the current record}, let him give Prover a free choice to answer any one of its bits as true or false. Once a pigeon is mentioned once, then let Adversary choose a hole for that pigeon by choosing some assignment for the remaining unset bits (we will later need to prove this is always possible). Whenever another bit of an already mentioned pigeon is queried, then Adversary will answer consistently with the hole he has chosen for it. Only once all of a pigeon's bits are forgotten (not including those set by $\rho$), will Adversary forget the hole he assigned it.

It remains to argue that Adversary must force Prover to produce a \textcolor{black}{1-CNF} record \textcolor{black}{mentioning at least $\frac{n}{2^{t}}$ pigeons} and for this it suffices to argue that Adversary can remain consistent with $\BinPHP^m_n\!\!\!\upharpoonright_\rho$ up until the point that such a \textcolor{black}{1-CNF} record exists. For that it is enough to show that there is always a hole available for a pigeon for which Adversary gave its only currently questioned bit as a free choice (but for which $\rho$ has already assigned some bits).

The current \textcolor{black}{1-CNF} record is assumed to have fewer than $\frac{n}{2^{t}}$ literals and therefore must mention fewer than $\frac{n}{2^{t}}$ pigeons, each of which Adversary already assigned a hole. Each hitherto unmentioned pigeon that has just been given a free choice has $\log n \ - t$ bits which corresponds to $\frac{n}{2^t}$ holes. Since we have assigned fewer than $\frac{n}{2^t}$ pigeons  to holes, one of these must be available, and the result follows. 
\end{proof}

\noindent Let $\xi(s)$ satisfy $\xi(1)=1$ and $\xi(s)=\xi(s-1)+1+s$. Note that $\xi(s)=\Theta(s^2)$.


\begin{definition}[Property $\pPHP(s,t)$]
Let $s,t\geq 1$. For any $t$-bit restriction $\rho$ to $\BinPHP^m_n$, there are no $\RES(s)$ refutations of $\BinPHP^m_n\!\!\!\upharpoonright_\rho$ of size smaller than $e^{\frac{n}{4^{\xi(s)+1} s! 2^t u^{\xi(s)}}}\textcolor{black}{=\exp (\frac{n}{4^{\xi(s)+1} s! 2^t u^{\xi(s)}})}$.
\end{definition}

\begin{theorem}
\label{thm:reswphp}
Let $\rho$ be a $t$-bit restriction for $\BinPHP^m_n$. Any decision DAG for $\BinPHP^m_n\!\!\!\upharpoonright_\rho$ is
\textcolor{black}{of size $\geq e^{\frac{n}{2^{t+1} u}}$ (which is $2^{\Omega(\frac{n}{\log n})}$ at $t=0$).}
\end{theorem}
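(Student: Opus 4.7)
The plan is to prove this via a random restriction argument, leveraging Lemma~\ref{lem:unwritten} as the source of structural information. Fix a decision DAG $D$ of size $N$ refuting $\BinPHP^m_n\!\!\!\upharpoonright_\rho$; the goal is to show $N \geq e^{n/(2^{t+1}u)}$.

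First, I would introduce a uniformly random $(t+1)$-bit extension $\sigma$ of $\rho$, sampled by picking, independently for each pigeon, one of the $u'= \log n - t$ bits not yet set by $\rho$ and assigning it a uniformly random value in $\{0,1\}$. Then $\rho\sigma$ is a $(t+1)$-bit restriction, so Lemma~\ref{lem:unwritten} applied to $\rho\sigma$ forces the restricted DAG $D\!\!\restriction_{\rho\sigma}$ to contain a 1-CNF record mentioning at least $n/2^{t+1}$ pigeons. This record is the image of some node $v$ of the original $D$ whose record $R_v$ (a) survives $\sigma$, i.e.\ no literal of $R_v$ is falsified by $\sigma$, and (b) still mentions at least $n/2^{t+1}$ pigeons after restriction. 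In particular, only nodes $v$ with $S_v \geq n/2^{t+1}$ originally mentioned pigeons can play this role, since restriction never introduces new mentions. Thus for every outcome of $\sigma$, such a node $v$ exists.

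Next I would estimate, for a fixed such $v$, the probability over $\sigma$ that $v$ survives. Let $k_i\ge 0$ be the number of literals in $R_v$ mentioning pigeon $i$, with $k_i\geq 1$ precisely for mentioned pigeons. Because $\sigma$ acts independently on distinct pigeons, and for pigeon $i$ the single chosen bit lies inside the $k_i$ forbidden positions with probability $k_i/u'$ and in that case is falsifying with probability $1/2$, we have
\[
\Pr_\sigma[R_v \text{ survives}] \;=\; \prod_{i}\Bigl(1 - \tfrac{k_i}{2u'}\Bigr)
\;\leq\; \exp\!\Bigl(-\tfrac{1}{2u'}\sum_i k_i\Bigr) \;\leq\; \exp\!\Bigl(-\tfrac{S_v}{2u'}\Bigr)
\;\leq\; \exp\!\Bigl(-\tfrac{n}{2^{t+1}u}\Bigr),
\]
using $\sum_i k_i \geq S_v \geq n/2^{t+1}$ and $u=2u'$. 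The event ``$v$ is the good node for $\sigma$'' implies ``$R_v$ survives $\sigma$'', so the same bound applies to it.

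Finally, a union bound over the $N$ nodes of $D$ and the certainty of a good node existing yield $1 \leq N\cdot e^{-n/(2^{t+1}u)}$, and hence $N \geq e^{n/(2^{t+1}u)}$, as required. The only delicate steps are checking that a surviving good restricted record really lifts back to a surviving node of $D$ with $\geq n/2^{t+1}$ mentioned pigeons, and the per-pigeon independence used to multiply the survival factors; both are straightforward from the random choice of $\sigma$ and the combinatorics of bits versus literals in a 1-CNF.
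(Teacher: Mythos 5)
Your proof is correct and follows essentially the same strategy as the paper: a random one-extra-bit-per-pigeon restriction, the survival probability bound $\left(1-\tfrac{1}{2u'}\right)^{n/2^{t+1}}\leq e^{-n/(2^{t+1}u)}$ for any "bottleneck" record mentioning $\geq n/2^{t+1}$ pigeons, a union bound over them, and Lemma~\ref{lem:unwritten} applied at the $(t+1)$-bit level to guarantee such a record must exist after restriction. The only cosmetic difference is that you phrase the union bound as guaranteeing a surviving bottleneck node for every $\sigma$ (and track multiplicities $k_i$ of literals per pigeon), whereas the paper phrases it as a contradiction (too few bottlenecks means some restriction kills all of them); these are equivalent.
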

\begin{proof}
Call a \emph{bottleneck} a \textcolor{black}{1-CNF} record in the decision DAG that mentions $\frac{n}{2^{t+1}}$ pigeons. Now consider a random restriction that picks for each pigeon one bit uniformly at random and sets this to $0$ or $1$ with equal probability. The probability that a bottleneck survives (is not falsified by) the random restriction is no more than 
\[\textcolor{black}{\left( \frac{u'-1}{u'} + \frac{1}{2 u'} \right)^{\frac{n}{2^{t+1}}} = \left( 1 - \frac{1}{2 u'} \right)^{u' \cdot \frac{n}{2^{t+1} u'}} = \left( 1 - \frac{1}{u} \right)^{u \cdot \frac{n}{2^{t+1} u}}\leq \frac{1}{e^{\frac{n}{2^{t+1} u}}},} \]
since $e^{-x} = \lim_{m\to\infty} (1 - x/m)^m$ and indeed $e^{-x} \geq (1 - x/m)^m$ when $x,m \geq 1$.

Now suppose for contradiction that we have fewer than \textcolor{black}{$e^{\frac{n}{2^{t+1} u}}$} bottlenecks in a decision DAG for $\BinPHP^m_n\!\!\!\upharpoonright_\rho$. By the union bound there is a random restriction that kills all bottlenecks and this leaves a decision DAG for some $\BinPHP^m_n\!\!\!\upharpoonright_\sigma$, where $\sigma$ is a $(t+1)$-bit restriction for $\BinPHP^m_n$.
However, we know from Lemma~\ref{lem:unwritten} that such a refutation must involve a \textcolor{black}{1-CNF} record mentioning $\frac{n}{2^{t+1}}$ pigeons. This is now the desired contradiction.
\end{proof}
\textcolor{black}{While $m$ is linear in $n$}, the previous theorem could have been proved, like Lemma~\ref{lem:Res1-bin-k-clique}, by the size-width trade-off. However, the method of random restrictions used here could not be easily applied there, due to the randomness of $G$.
\begin{corollary}
Property $\pPHP(1,t)$ holds, for each $t<\log n$.
\end{corollary}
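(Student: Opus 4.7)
The plan is to derive this corollary directly from Theorem~\ref{thm:reswphp} by unpacking the definition of property $\pPHP(1,t)$ and noting the equivalence between $\RES(1)$ refutations and decision DAGs.

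First I would substitute $s=1$ into the size bound in the definition of $\pPHP(s,t)$. Using $\xi(1)=1$ and $1!=1$, the required lower bound becomes $\exp\!\bigl(\tfrac{n}{4^{2}\cdot 2^{t}\cdot u}\bigr) = \exp\!\bigl(\tfrac{n}{16\cdot 2^{t}\cdot u}\bigr)$. Next I would recall that $\RES(1)$ is precisely Resolution, and, via the correspondence described in the preliminaries, Resolution refutations of a CNF are in one-to-one correspondence with decision DAGs for its search problem, with sizes matching up to the conventions chosen here. Hence any $\RES(1)$ refutation of $\BinPHP^m_n\!\!\!\upharpoonright_\rho$ yields a decision DAG of the same size.

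Then Theorem~\ref{thm:reswphp}, applied to the given $t$-bit restriction $\rho$, immediately yields that every decision DAG for $\BinPHP^m_n\!\!\!\upharpoonright_\rho$ has size at least $\exp\!\bigl(\tfrac{n}{2^{t+1}u}\bigr) = \exp\!\bigl(\tfrac{n}{2\cdot 2^{t}\cdot u}\bigr)$. To conclude, I would simply observe that $\tfrac{1}{2} > \tfrac{1}{16}$, so $\exp\!\bigl(\tfrac{n}{2\cdot 2^{t}\cdot u}\bigr) \geq \exp\!\bigl(\tfrac{n}{16\cdot 2^{t}\cdot u}\bigr)$, which is exactly the bound demanded by $\pPHP(1,t)$.

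There is no real obstacle here; the corollary is essentially a repackaging of Theorem~\ref{thm:reswphp} into the notation that will drive the inductive statement $\pPHP(s,t)$ in the next section. The only thing to be slightly careful about is matching constants: the definition of $\pPHP(s,t)$ builds in slack (a factor of $4^{\xi(s)+1}s!$ in the denominator) precisely so that the base case $s=1$ falls out trivially from the random restriction argument of Theorem~\ref{thm:reswphp}, leaving room for the inductive step to absorb the loss incurred when reducing $\RES(s)$ to $\RES(s-1)$ via the covering-number lemma.
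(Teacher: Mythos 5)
Your proof is correct and matches the paper's intent: the paper states this corollary immediately after Theorem~\ref{thm:reswphp} with no explicit argument, precisely because it falls out by substituting $s=1$, $\xi(1)=1$, $1!=1$ into the bound in the definition of $\pPHP(s,t)$ and comparing with the $e^{n/(2^{t+1}u)}$ bound of the theorem, exactly as you do. The constant comparison $\tfrac{1}{2} > \tfrac{1}{16}$ is the only thing to check, and you have checked it.
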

\noindent Note that, $\pPHP(1,t)$ yields only trivial bounds as $t$ approaches $\log n$.

Let $(\ell_{i_1,j_1},\ldots, \ell_{i_s,j_s})$ be an $s$-tuple made of disjoint literals of 
$\BinPHP^m_n\upharpoonright_\rho$. We say that a tuple is \textcolor{black}{{\em anti-perfect}} if all literals come from \textcolor{black}{different pigeons}.

\begin{lemma}
Let $s$ be an integer, $s \geq 1$ and $s+t<\log n$. Let $\sigma$ be a random $s$-bit restriction over $\BinPHP^m_n \!\!\!\upharpoonright_\rho$ where $\rho$ is itself some $t$-bit restriction over $\BinPHP^m_n$. Let $T$ be \textcolor{black}{an anti-perfect} $s$-tuple of $\BinPHP^m_n\!\!\!\upharpoonright_\rho$. Then for all $s$-tuples S:
$$\Pr[\mbox{$T$ survives $\sigma$}] \geq \Pr[\mbox{$S$ survives $\sigma$}].$$
and so $\Pr[\mbox{$S$ survives $\sigma$}] \leq 1 - \frac{1}{u^s}$.
\label{lem:survival-maximised-PHP}
\end{lemma}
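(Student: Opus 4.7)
The plan is to follow the structure of the analogous $k$-Clique lemma (Lemma~\ref{lem:prefect}), exploiting the pigeon-wise independence of $\sigma$. Writing $u' := (\log n) - t$ for the number of free bits remaining per pigeon after $\rho$, the random $s$-bit restriction $\sigma$ picks $s$ of those $u'$ bits uniformly and assigns them uniform values, independently across pigeons. Consequently, the event that $\sigma$ sets every literal of $S$ to $0$ factors as a product of per-pigeon contributions, one for each pigeon mentioned by $S$.

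The first step is to compute each per-pigeon contribution: a pigeon containing $r$ literals of $S$ contributes the hypergeometric probability $\binom{s}{r}/\binom{u'}{r}$ that all $r$ specified bits are among $\sigma$'s $s$ uniform choices, multiplied by $2^{-r}$ for the required coin-flips landing on the kill side. Hence the total kill probability depends only on the partition of $S$'s literals across pigeons, not on which specific bits appear.

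The heart of the argument is extremality: I would show that the survival probability is maximised precisely when $S$ is anti-perfect, i.e.\ when each pigeon touched by $S$ carries exactly one literal. The natural route is a swap-style argument: given a tuple with two literals sharing a pigeon, move one of them to a fresh pigeon and compare the new and old products of per-pigeon factors. This collapses to an elementary inequality involving the ratio $\binom{s}{r}/\binom{u'}{r}$ against $\binom{s}{r-1}/\binom{u'}{r-1} \cdot s/(2u')$, analogous to (though complementary in direction to) the without-replacement estimate underlying Lemma~\ref{lem:prefect}. This monotonicity is the main obstacle, as it requires a careful manipulation of the hypergeometric factors and does not reduce to a one-line comparison.

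Once extremality is established, the final bound follows by direct calculation. For anti-perfect $T$, the $s$ pigeons it touches each contribute kill probability $s/(2u') = s/u$ independently, so $\Pr[\sigma \text{ kills } T] = (s/u)^s$ and $\Pr[T \text{ survives } \sigma] = 1 - (s/u)^s$. Since $s^s \geq 1$ for $s \geq 1$, we have $(s/u)^s \geq 1/u^s$, and hence $\Pr[T \text{ survives}] \leq 1 - 1/u^s$; combined with the extremality step, this gives the claimed bound for every $s$-tuple $S$.
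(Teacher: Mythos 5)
Your per-pigeon factorization is correct: the contribution of a pigeon carrying $r$ of the $s$ literals to the kill probability is $\binom{s}{r}/\binom{u'}{r}\cdot 2^{-r}$, which matches the falling-factorial expression in the paper. The gap is in your extremality step, which --- if you carried the swap argument through --- would come out the wrong way round. Moving one literal off a pigeon with $r$ literals onto a fresh pigeon multiplies the kill factor by
$\displaystyle \frac{\binom{s}{r-1}/\binom{u'}{r-1}\cdot \frac{s}{2u'}\cdot 2^{-(r-1)}}{\binom{s}{r}/\binom{u'}{r}\cdot 2^{-r}} = \frac{s(u'-r+1)}{u'(s-r+1)} \geq 1$ whenever $s\leq u'$, so spreading literals across pigeons \emph{increases} the kill probability and \emph{decreases} survival. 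A numerical check confirms this: with $s=2$, $u'=10$, two literals on one pigeon are killed with probability $\frac{1}{\binom{10}{2}}\cdot\frac14=\frac{1}{180}$, whereas two literals on two distinct pigeons are killed with probability $\left(\frac{2}{10}\cdot\frac12\right)^2=\frac{1}{100}$. Thus the anti-perfect tuple \emph{minimizes} survival rather than maximizing it, and the inequality $\Pr[T\text{ survives}]\geq\Pr[S\text{ survives}]$ in the statement cannot be proved as you propose (indeed, as written it is not usable); happily, nothing downstream depends on that clause.

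The paper's actual route to the bound $\Pr[S\text{ survives}]\leq 1-\frac{1}{u^s}$ bypasses extremality entirely. It observes that each per-pigeon factor satisfies $\binom{s}{r}/\binom{u'}{r}\cdot 2^{-r}\geq \frac{1}{(2u')^r}=\frac{1}{u^r}$, because $s(s-1)\cdots(s-r+1)\geq 1\geq\prod_{i=0}^{r-1}(1-i/u')$ whenever $r\leq s$. Multiplying these lower bounds over pigeons gives kill probability $\geq u^{-s}$ for \emph{every} $s$-tuple $S$, and hence the survival bound directly, which is all Theorem~\ref{thm:new-wphp-main} invokes. Your final computation for anti-perfect $T$ (kill $(s/u)^s\geq u^{-s}$) only recovers this bound for that one tuple; to conclude for general $S$ you should replace the swap argument by the uniform per-pigeon factor bound.
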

\begin{proof}
A pigeon with $r$ distinct bits contributes to not surviving a factor of
\[ \frac{s }{\log n \, -t} \cdot \frac{s-1}{\log n \, -t -1} \cdots \frac{s-r+1}{\log n -t -r+1} \cdot \frac{1}{2^r}.\]
Noting that
\[ \frac{s }{\log n \, -t} \cdot \frac{s-1}{\log n \, -t -1} \cdots \frac{s-r+1}{\log n -t -r+1} \cdot \frac{1}{2^r} > \textcolor{black}{\frac{1}{(2u')^r} = \frac{1}{u^r}} \]
the result now follows when we recall that the probability of surviving is maximised when the probability of not surviving is minimised.
\end{proof}

\begin{theorem}
Let $s>1$ and $s+t<\log n$. Then, $\pPHP(s-1,s+t)$ implies $\pPHP(s,t)$.
\label{thm:new-wphp-main}
\end{theorem}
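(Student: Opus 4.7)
The plan is to follow the template of Lemma~\ref{lem:ind}, the analogous inductive step for $\BinC$. Assume for contradiction that $\pPHP(s,t)$ fails: there is some $t$-bit restriction $\rho$ and a $\RES(s)$ refutation $\pi$ of $\BinPHP^m_n\!\!\!\upharpoonright_\rho$ of size strictly less than $N_s := \exp\!\left(n/(4^{\xi(s)+1}\, s!\, 2^t u^{\xi(s)})\right)$. Fix a covering-number threshold $d$ (to be tuned below) and call a node of the branching $s$-program associated to $\pi$ a \emph{bottleneck} if its $s$-CNF record has covering number at least $d$. A greedy argument extracts from the $s$-DNF obtained via De~Morgan from any bottleneck a collection of $r := \lfloor d/s \rfloor$ pairwise disjoint $s$-tuples $T_1,\ldots,T_r$ whose conjunctions are among the DNF's terms.

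Next, sample a random $s$-bit restriction $\sigma$ on top of $\rho$: in each pigeon, independently pick $s$ further bits uniformly at random and assign them uniformly at random. Then $\tau := \rho\sigma$ is an $(s+t)$-bit restriction and $\pi\!\!\!\upharpoonright_\tau$ remains a $\RES(s)$ refutation of $\BinPHP^m_n\!\!\!\upharpoonright_\tau$ of size at most $|\pi| < N_s$. Lemma~\ref{lem:survival-maximised-PHP} gives $\Pr[T_i \text{ survives } \sigma] \leq 1 - 1/u^s$ for each $i$. To bound the joint survival probability I would adapt Claim~\ref{claim:main} directly; its only setting-specific step is verifying $\Pr[\Sigma_i \mid \neg\Sigma_j] \geq \Pr[\Sigma_i]$ for $i \neq j$, which in the PHP setting holds because if $T_i$ and $T_j$ involve disjoint pigeons the events are independent, while if they share pigeons then conditioning on $\neg\Sigma_j$ commits some of the randomly chosen bits in the shared pigeons to values that kill $T_j$, leaving strictly fewer opportunities to kill $T_i$. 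Lemma~\ref{lem:probcond} then yields $\Pr[R \text{ survives } \sigma] \leq (1 - 1/u^s)^{r} \leq e^{-d/(s u^s)}$ for any bottleneck $R$.

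A union bound over the fewer than $N_s$ nodes of $\pi$ shows that with positive probability no bottleneck survives $\sigma$, provided $d > s u^s \ln N_s$. Fix such a $\tau$; then Lemma~\ref{lem:covering-number} converts $\pi\!\!\!\upharpoonright_\tau$ into a $\RES(s-1)$ refutation of $\BinPHP^m_n\!\!\!\upharpoonright_\tau$ of size at most $2^{d+1} N_s$. Using $\xi(s) = \xi(s-1) + 1 + s$, a direct calculation shows $\ln N_{s-1} = 2^{s+2}\, s\, u^{s+1} (u/u(n,s+t))^{\xi(s-1)} \cdot \ln N_s$, which comfortably exceeds the target $\ln N_s + (d+1)\ln 2$ for, say, $d = 2 s u^s \ln N_s$. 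Hence $d$ can be chosen to simultaneously enforce $d > s u^s \ln N_s$ and $2^{d+1} N_s < N_{s-1}$, producing a $\RES(s-1)$ refutation of $\BinPHP^m_n\!\!\!\upharpoonright_\tau$ below the $\pPHP(s-1,s+t)$ threshold and contradicting the inductive hypothesis.

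The principal technical hurdle I anticipate is the PHP analog of Claim~\ref{claim:main}: the ``killing one tuple burns randomness that might kill another'' intuition transfers cleanly, but the conditional-probability inequality must be checked carefully against the combinatorics of selecting $s$ bits per pigeon together with their random values. Once that is in place, the remaining bookkeeping with the constants $\xi(s)$ and $4^{\xi(s)+1}$ is a routine adaptation of the $\BinC$ argument.
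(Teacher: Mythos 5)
Your approach is correct in outline but genuinely diverges from the paper's proof at exactly the point you flag as a hurdle. The paper never adapts Claim~\ref{claim:main} to the PHP setting. Instead, when extracting $s$-tuples from a bottleneck it divides the covering number by $su$ rather than by $s$: starting from a CNF record with covering number at least $\frac{n}{4^{\xi(s)}(s-1)!\,2^t u^{\xi(s-1)}}$, it greedily picks an $s$-term and deletes \emph{all} terms touching any pigeon of that term (which costs at most $su$ from the covering number, since each pigeon contributes at most $u$ literals), yielding $r = \frac{n}{4^{\xi(s)} s!\,2^t u^{\xi(s-1)+1}}$ $s$-tuples that are \emph{pigeon-disjoint}, not merely literal-disjoint. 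Because the random restriction acts independently on each pigeon, pigeon-disjointness makes the survival events $\Sigma_1,\ldots,\Sigma_r$ mutually independent, so $\Pr[\bigwedge_i \Sigma_i] = \prod_i \Pr[\Sigma_i] \leq (1-1/u^s)^r$ follows without any conditional-probability lemma. Your route retains the larger tuple count $r = \lfloor d/s\rfloor$ at the price of needing the PHP analogue of Claim~\ref{claim:main}; your sketch of why $\Pr[\Sigma_i \mid \neg\Sigma_j] \geq \Pr[\Sigma_i]$ (commitment of chosen bit positions in shared pigeons either uses up slots or, in the opposite-polarity case, outright satisfies $T_i$) is plausible and, I believe, can be made rigorous, but it is precisely the machinery the paper elegantly avoids. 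Both routes close the numerical gap to $N_{s-1}$ comfortably; the paper's is shorter because mutual independence is free once the tuples are pigeon-disjoint, while yours buys a marginally larger $r$ that the calculation does not actually require.
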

\begin{proof}
We proceed by contraposition. Assume there is some $t$-bit restriction $\rho$ so that there exists a $\RES(s)$ refutation $\pi$ of $\BinPHP^m_n\!\!\!\upharpoonright_\rho$ with size less than $e^{\frac{n}{4^{\xi(s)+1} \cdot s! 2^t u^{\xi(s)}}}\textcolor{black}{=\exp(\frac{n}{4^{\xi(s)+1} \cdot s! 2^t u^{\xi(s)}})}$.

Call a \emph{bottleneck} a CNF record that has covering number $\geq \frac{n}{4^{\xi(s)} \cdot (s-1)! 2^t u^{\xi(s-1)}}$.  In such a CNF record, by dividing by $s$ and $u$, it is always possible to find $r:=\frac{n}{4^{\xi(s)} s! 2^t u^{\xi(s-1)+1} }$ $s$-tuples \sloppy of literals $(\ell^1_1,\ldots,\ell^s_1),\ldots,(\ell^1_r,\ldots,\ell^s_r)$ so that each $s$-tuple is a clause in the CNF record and no pigeon appearing in the $i$th $s$-tuple also appears in the $j$th $s$-tuple (when $i \neq j$). This important independence condition plays a key role. Now consider a random restriction that, for each pigeon, picks uniformly at random $s$ bit positions and sets these to $0$ or $1$ with equal probability. The probability that the $i$th of the $r$ $s$-tuples survives the restriction is maximised when each \textcolor{black}{each variable among the $s$ describes} a different pigeon (by Lemma~\ref{lem:survival-maximised-PHP}) and is therefore \textcolor{black}{bounded} above by
\[ \left( 1 - \frac{1}{u^s} \right) \]
whereupon
\[ \left( 1 - \frac{1}{u^s} \right)^{\frac{n}{4^{\xi(s)} s! 2^t u^{\xi(s-1)+1}}} = \left( 1 - \frac{1}{u^s} \right)^{\frac{n u^s}{4^{\xi(s)}s! 2^t u^{(\xi(s-1)+1+s)}}} \]
which is $\textcolor{black}{\leq 1/e^{\frac{n}{4^{\xi(s)}s! \cdot 2^{t} u^{\xi(s)}}}} \leq 1/e^{\frac{n}{4^{\xi(s)+1}s! \cdot 2^{t} u^{\xi(s)}}}$.
Supposing therefore that there are fewer than $e^{\frac{n}{4^{\xi(s)+1}s! \cdot 2^{t} u^{\xi(s)}}}$ bottlenecks, one can deduce a random restriction that kills all bottlenecks. What remains after doing this is a $\RES(s)$ refutation of some $\BinPHP^m_n\!\!\!\upharpoonright_\sigma$, where $\sigma$ is a $s+t$-bit restriction, which moreover has covering number $< \frac{n}{4^{\xi(s)} \cdot (s-1)! 2^t u^{\xi(s-1)}}$. But if the remaining $\RES(s)$ refutation is of size $<e^{\frac{n}{4^{\xi(s)+1}s! \cdot 2^{t} u^{\xi(s)}}}$ then,  from Lemma~\ref{lem:covering-number}, it would give a $\RES(s-1)$ refutation of size
\[ <e^{\frac{n}{4^{\xi(s)} \cdot (s-1)! 2^t u^{\xi(s-1)}}} \cdot e^{\frac{n}{4^{\xi(s)+1}s! \cdot 2^{t} u^{\xi(s)}}} = e^{\frac{n}{4^{\xi(s)} \cdot (s-1)! 2^t u^{\xi(s-1)}} (1 + \frac{1}{4 s u^{s+1}})} \]
\[< e^{\frac{2n}{4^{\xi(s)} \cdot (s-1)! 2^t u^{\xi(s-1)}}} < e^{\frac{n}{4^{\xi(s)} \cdot (s-1)! 2^{t-1} u^{\xi(s-1)}}} <  
e^{\frac{n}{4^{\xi(s)-s} \cdot (s-1)! 2^{s+t} u^{\xi(s-1)} } },\]
since $4^{s} > 2^{s+1}$, which equals $e^{\frac{n}{4^{\xi(s-1)+1} \cdot (s-1)! 2^{s+t} u^{\xi(s-1)}}}$ in contradiction to the inductive hypothesis.
\end{proof}

\begin{theorem}
\label{thm:wphp}
Fix $\lambda,\mu>0$. Any refutation of $\BinPHP^m_n$ in $\RES(\sqrt{2}\log^{\frac{1}{2}-\lambda}n)$ is of size $2^{\Omega(n^{1-\mu})}$.
\end{theorem}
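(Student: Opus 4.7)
The plan is to apply Theorem~\ref{thm:new-wphp-main} iteratively, starting from the base case provided by Theorem~\ref{thm:reswphp}, in order to establish property $\pPHP(s, 0)$ for $s = \sqrt{2}\log^{1/2-\lambda} n$. Since $\pPHP(s,0)$ is a statement about refutations of the unrestricted $\BinPHP^m_n$, it yields the desired size lower bound directly.

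First I would trace the chain of implications backward. From $\pPHP(s, 0)$ one invokes $\pPHP(s-1, s)$, then $\pPHP(s-2, 2s-1)$, then $\pPHP(s-3, 3s-3)$, and so on. Denoting by $T_i$ the $t$-parameter when the first parameter has dropped to $s-i$, the recursion $T_0 = 0$, $T_{i+1} = (s-i) + T_i$ solves to $T_i = i s - \binom{i}{2}$, so that the base case application is $\pPHP(1, T_{s-1}) = \pPHP(1, (s-1)(s+2)/2)$, which holds by Theorem~\ref{thm:reswphp}. Each application of the inductive step $\pPHP(s'-1, s'+t') \Rightarrow \pPHP(s', t')$ requires $s' + t' < \log n$; the tightest instance is the final one, namely $T_{s-1} = (s-1)(s+2)/2 < \log n$. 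For $s = \sqrt{2}\log^{1/2-\lambda} n$ we have $s^2 = 2\log^{1-2\lambda} n = o(\log n)$, so this constraint is comfortably satisfied for large $n$.

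Second I would verify the resulting numerical lower bound. Property $\pPHP(s, 0)$ guarantees refutation size at least $\exp\bigl(n / (4^{\xi(s)+1}\, s!\, u^{\xi(s)})\bigr)$ with $u = 2\log n$ at $t=0$ and $\xi(s) = \Theta(s^2)$. Taking the logarithm of the denominator yields $O(s^2) + O(s \log s) + O(s^2 \log \log n) = O(s^2 \log \log n)$. Substituting $s^2 = 2\log^{1-2\lambda} n$, this becomes $O(\log^{1-2\lambda} n \cdot \log \log n)$, which is $o(\log n)$ for any $\lambda > 0$. Hence for any fixed $\mu > 0$ and sufficiently large $n$, the denominator is at most $n^{\mu}$, so the size lower bound is $\exp(\Omega(n^{1-\mu})) = 2^{\Omega(n^{1-\mu})}$.

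The main delicate point is not an obstacle but a careful bookkeeping requirement: one must ensure that the cumulative $t$-parameter $T_i$ stays below $\log n$ throughout the chain, and simultaneously that the accumulated denominator $4^{\xi(s)+1} s!\, u^{\xi(s)}$ remains subpolynomial in $n$. Both demands are controlled by the same quantity $s^2 \log \log n$, and are both met precisely because the choice $s \leq \sqrt{2}\log^{1/2-\lambda} n$ leaves a $\log^{\lambda} n$ factor of slack below the nominal threshold $\sqrt{2\log n}$; this slack absorbs both the $\binom{i}{2}$ correction in $T_i$ and the $\log \log n$ factor in the size bound.
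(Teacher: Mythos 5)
Your proof is correct and follows essentially the same route as the paper's: iterate Theorem~\ref{thm:new-wphp-main} down from $\pPHP(s,0)$ to a base case covered by Theorem~\ref{thm:reswphp}, track the accumulated $t$-parameter to confirm it stays below $\log n$, and then check that the denominator $4^{\xi(s)+1}s!\,u^{\xi(s)}$ is $o(n^\mu)$. Your bookkeeping of the recursion via $T_i = is - \binom{i}{2}$ and the explicit $O(s^2\log\log n)$ estimate for the log of the denominator are a bit more detailed than the paper's terse treatment, but the argument is the same.
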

\begin{proof}
First, let us claim that $\pPHP(\sqrt{2}\log^{\frac{1}{2}-\lambda}n,0)$ holds (and this would hold also at $\lambda=0$). \textcolor{black}{Repeated application of Theorem~\ref{thm:new-wphp-main} gives $s$ such that $\sum_{i=1}^s i = \frac{s(s+1)}{2}<\log n$. Noting $\frac{s^2}{2}<\frac{s(s+1)}{2}$, the claim follows}.

Now let us look at the bound we obtain by plugging in to $e^{\frac{n}{4^{\xi(s)+1} \cdot s! 2^t u^{\xi(s)}}}$ at $s=\sqrt{2} \log^{\frac{1}{2} - \lambda}n$ and $t=0$. We recall $\xi(s)=\Theta(s^2)$. Note that, when $\lambda>0$, each of $4^{\xi(s)+1}$, $s!$ and $\log^{\xi(s)}n$ is $o(n^{\mu})$. The result follows.
\end{proof}

\subsection{The treelike case}
\label{subsec:treephp}
Concerning the Pigeonhole Principle,  we can prove that the relationship between $\PHP^{n+1}_n$ and $\BinPHP^{n+1}_n$ is different for treelike Resolution from general Resolution. In particular, for very weak Pigeonhole Principles, we know the binary encoding is harder to refute in general Resolution; whereas for treelike Resolution it is the unary encoding which is the harder.

\begin{theorem}
The treelike Resolution complexity of $\BinPHP^m_n$ is $2^{\Theta(n)}$.
\label{thm:phpubtreelike}
\end{theorem}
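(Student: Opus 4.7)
The plan is to establish the two bounds separately, and it suffices to treat $m=n+1$: the upper bound construction ignores any extra pigeons, and a Delayer strategy designed for $n+1$ pigeons continues to earn the same number of points when more are available.

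For the upper bound $2^{O(n)}$ I describe a decision-DAG refutation of depth $O(n)$. It proceeds in $\log n$ rounds and maintains a set $S$ of ``active'' pigeons that agree on every bit position queried so far. Initially $S=[n+1]$. During round $r$ the tree sequentially queries bit $r$ of each pigeon currently in $S$; after all $|S|$ such queries (along any branch) it identifies the majority bit value and restricts $S$ to the at least $\lceil |S|/2 \rceil$ pigeons matching it. After $\log n$ rounds $|S|\geq\lceil (n+1)/n\rceil\geq 2$, and every pigeon in $S$ has the same bit string, so they all lie in the same hole; the corresponding leaf falsifies a binary clause $\neg v_{i,a}\vee\neg v_{j,a}$. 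The number of queries on any root-to-leaf path is bounded by $\sum_{r=0}^{\log n-1}\lceil (n+1)/2^r\rceil=O(n)$, so the tree has depth $O(n)$ and size $2^{O(n)}$.

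For the lower bound $2^{\Omega(n)}$ I would apply the Pudl\'ak--Impagliazzo Prover-Delayer game for treelike Resolution, which gives size $\geq 2^p$ when Delayer can guarantee $p$ ``you choose'' points against every Prover. Delayer plays a \emph{commitment strategy}: she maintains an injective partial map $\phi$ from committed pigeons to distinct holes, consistent with all the bits queried so far. When Prover asks bit $j$ of pigeon $i$: (1) if $i$ is already committed, Delayer answers $\phi(i)_j$ with no point; (2) otherwise she checks whether both the set $U_0$ of unused holes with $j$-th bit $0$ and the set $U_1$ of unused holes with $j$-th bit $1$ are non-empty---if so she says ``you choose'' (earning a point) and upon Prover's reply $b$ commits $\phi(i)$ to any hole in $U_b$; if only one class is non-empty she answers the forced bit and commits silently with no point. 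So long as fewer than $n/2$ pigeons are committed, at most $n/2-1$ holes of any single bit-class are in use, hence both $U_0, U_1$ are non-empty and case~(2) applies. Moreover no clause is ever falsified while $\phi$ remains injective, since a violation would force two pigeons into the same hole. Therefore Delayer collects at least $n/2$ points, giving treelike size $\geq 2^{n/2}$.

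The main hurdle is the lower-bound analysis: verifying that Delayer's commitment can always be extended to an unused hole of the required bit-class, that her answers remain consistent with the queries already made, and, above all, that the ``at most $n/2-1$ per bit-class'' count holds uniformly against Prover's adaptive choice of the bit index $j$ across queries. This last point reduces to the simple observation that each commitment fills exactly one hole, contributing $1$ to the used count of exactly one bit-class at every position $j$. The upper-bound construction, in contrast, is a direct exploitation of the binary encoding: in the unary case, ruling out $n-1$ holes for a single pigeon requires $n-1$ queries, which is precisely the source of the additional $\log n$ factor in the unary treelike lower bound $2^{\Omega(n\log n)}$ of \cite{BeyersdorffGL10,DantchevR01}.
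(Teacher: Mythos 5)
Your proof is correct and takes essentially the same route as the paper. The lower bound is the Adversary strategy of Lemma~\ref{lem:unwritten} (with $t=0$) recast as a Pudl\'ak--Impagliazzo Delayer game, and the upper bound is the paper's halving of a pigeon set by majority bit value; the one point worth making explicit (which the paper also glosses over) is that after each round one should retain only $\lceil |S|/2\rceil$ pigeons from the majority class, discarding any surplus, since otherwise the depth bound $\sum_r \lceil (n+1)/2^r\rceil = O(n)$ is not justified along the branch in which all pigeons keep agreeing on every bit.
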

\begin{proof}
For the lower bound, one can follow the proof of Lemma~\ref{lem:unwritten} with $t=0$ and find $n$ free choices on each branch of the tree. Following the method of Riis \cite{SorenGap}, we uncover a subtree of the decision tree of size $2^n$.

For an upper bound of $2^{2n}$ we pursue the following strategy. First we choose some $n+1$ pigeons to question. We then question all of them on their first bit and separate these into two sets $T_1$ and $F_1$ according to whether this was answered true or false. If $n$ is a power of $2$, choose the larger of these two sets (if they are the same size then choose either). If $n$ is not a power of two, the matter is mildly complicated, and one must look at how many holes are available with the first bit set to $1$, say $h^1_1$; versus $0$, say $h^0_1$. At least one of $|T_1|>h^1_1$ or $|F_1|>h^0_1$ must hold and one can choose between $T_1$ and $F_1$ correspondingly.
Now question the second bit, producing two sets $T_2$ and $F_2$, and iterate this argument. We will reach a contradiction in $\log n$ iterations since we always choose a set of maximal size. The depth of our tree is bound above by $n+\frac{n}{2} + \frac{n}{4} + \cdots < 2n$ and the result follows. 
\end{proof}

\section{The \SA\ size lower bound for the binary Pigeonhole Principle}
\label{sec:SAbinPHP}

In this section we study the inequalities derived from the binary encoding of the Pigeonhole principle, whose axioms we remind the reader of now. {\color{black} $\BinPHP_{n}^{m}$ has, for each two distinct pigeons $ i \not = i' \in [m]$ and each hole $a \in [n]$, the axiom $\sum_{j=1}^{\log n}  \omega_{i,j}^{(1 - a_j)}  +  \sum_{j=1}^{\log n} \omega_{i',j}^{(1 - a_j)} \geq 1$, where $a_1 \ldots a_{\log n}$ is the binary representation of $a$}. We first prove a certain \SA\ rank lower bound for a version of
the binary \PHP, in which only a subset of the holes is available.
\begin{lemma}
\label{lemma:degree-bound}Let $H\subseteq\left[n\right]$ be a subset
of the holes and let us consider  $\BinPHP_{|H|}^{m}$ where each
pigeon can go to a hole in $H$ only. Any \SA\ refutation of $\BinPHP_{|H|}^{m}$
involves a term that mentions at least $\left|H\right|$ pigeons.
\end{lemma}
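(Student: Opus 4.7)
The plan is to argue by contradiction: suppose there is an \SA\ refutation of $\BinPHP_{|H|}^{m}$ in which every term $Z_D$ mentions at most $|H|-1$ distinct pigeons, and construct a feasible point of the subset of the \SA\ polytope actually used by the refutation, contradicting the fact that the subset defines an empty polytope.

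Given a term $D$, let $S(D)\subseteq[m]$ denote the set of pigeons mentioned by $D$; by hypothesis $|S(D)|<|H|$, so the set of injective maps $\pi\colon S(D)\to H$ is nonempty. For each $D$ occurring in the refutation, I would set
\[
Z_D := \Pr_{\pi\colon S(D)\hookrightarrow H}\bigl[\pi\text{ satisfies every literal of }D\bigr],
\]
with $\pi$ uniform on injections $S(D)\to H$, where $\pi$ satisfies $\omega_{i,j}^{b}$ iff the $j$th bit of $\mathrm{bin}(\pi(i))$ equals $b$. The crucial technical fact is a marginalisation identity: whenever $D$ does not mention pigeon $i$ and $|S(D)\cup\{i\}|\leq |H|$,
\[
\Pr_{\pi\colon S(D)\cup\{i\}\hookrightarrow H}[D] \;=\; \Pr_{\pi\colon S(D)\hookrightarrow H}[D],
\]
proved by counting the $|H|-|S(D)|$ extensions of each $D$-consistent injection on $S(D)$ to $S(D)\cup\{i\}$. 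This identity yields $0\leq Z_D\leq 1$, monotonicity $Z_{D\wedge l}\leq Z_D$, and the equality $Z_{D\wedge l}+Z_{D\wedge\bar l}=Z_D$ for every instance that actually appears in the refutation; the case where $l$ introduces a new pigeon so that $|S(D\wedge l)|=|H|$ simply does not occur, by the standing hypothesis.

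Then I would verify the lifted clause inequalities. For the ``pigeon $i$ avoids hole $a\notin H$'' axiom $\bigvee_{j}\omega_{i,j}^{(1-a_j)}$, a random $\pi$ sends $i$ into $H$ so $\pi(i)\neq a$, whence $\mathrm{bin}(\pi(i))_j\neq a_j$ for some $j$, and the union bound gives $\sum_{j}Z_{D\wedge\omega_{i,j}^{(1-a_j)}}\geq Z_D$. For the collision axiom at hole $a$ involving $i\neq i'$, injectivity of $\pi$ forces $\pi(i)$ and $\pi(i')$ not both to equal $a$; after using marginalisation to evaluate every summand under the single distribution on $S(D)\cup\{i,i'\}$ (which has size at most $|H|-1$ whenever this axiom instance occurs in the refutation), the union bound yields the required inequality.

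The main obstacle is the bookkeeping needed to make the pseudo-assignment coherent: $Z_D$ must take the same value whether interpreted relative to $S(D)$ or to some larger pigeon-set used elsewhere in a constraint. This coherence is exactly what the marginalisation identity guarantees, and it relies crucially on having strictly fewer than $|H|$ pigeons in every term so that the uniform distribution on injections is always well-defined. Once coherence is established, every constraint of the alleged refutation is satisfied by the $Z_D$, contradicting that the chosen subset of inequalities has empty solution set and thereby proving the lemma.
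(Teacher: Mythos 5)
Your proposal is correct and follows essentially the same route as the paper: both build a pseudo-expectation by evaluating each term as the probability that a uniformly random (partial) matching of the mentioned pigeons into $H$ is consistent with it, with the coherence across different pigeon-sets guaranteed by marginalisation. The paper phrases this by first extending the mentioned set arbitrarily to exactly $|H|$ pigeons and averaging over perfect matchings, and then remarking the value is independent of the extension — which is exactly your marginalisation identity stated the other way around. One small slip: in the collision-axiom verification, $S(D)\cup\{i,i'\}$ can have size $|H|$ (not $\leq |H|-1$ as you claim), since only $D\wedge\omega_{i,j}^{\cdot}$ and $D\wedge\omega_{i',j}^{\cdot}$, not $D\wedge\omega_{i,j}^{\cdot}\wedge\omega_{i',j'}^{\cdot}$, appear among the refutation's terms; this is harmless because a uniform injection from a set of size $|H|$ into $H$ is still well-defined.
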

\begin{proof}
We get a valuation $v$ from a partial matching in an obvious way. \textcolor{black}{That is, if a pigeon $i$ is assigned to hole $a$, whose representation in binary is $a_1 \ldots a_{\log n}$, then we set each $\omega_{i,j}^{a_j}$ to $a_j$.}
We say that a product term $P=\text{\ensuremath{\prod_{j \in J} \omega_{i_{j},k_{j}}^{b_{j}}}}$
mentions the set of pigeons $M = \left\{ i_{j} : j \in J\right\} $. Let us denote
the number of available holes by $n' := |H|$. Every product term that mentions at
most $n'$ pigeons is assigned a value $v\left(P\right)$ as follows.
The set of pigeons mentioned in $M$ is first extended arbitrarily to a
set $M'$ of exactly $n'$ pigeons. $v\left(P\right)$ is then the
probability that a matching between $M'$ and $H$ taken uniformly
at random is consistent with the product term $P$. In other words, $v\left(P\right)$
is the number of perfect matchings between $M'$ and $H$ that are
consistent with $P$, divided by the total, $(n')$!. Obviously, this
value does not depend on how $M$ is extended to $M'$. Also, it is
symmetric, i.e. if $\pi$ is a permutation of the pigeons, $v\left(\ensuremath{\prod \omega_{i_{j},k_{j}}^{b_{j}}}\right)=\ensuremath{v\left(\prod \omega_{\pi\left(i_{j}\right),k_{j}}^{b_{j}}\right)}$.

All lifts of axioms of equality $\omega_{j,k} + \textcolor{black}{\neg}\omega_{j,k} = 1$ are automatically satisfied since a
matching consistent with $P$ is consistent either with $P\omega_{j,k}^{b}$
or with $P\omega_{j,k}^{1-b}$ but not with both, and thus
\[
v\left(P\right)=v\left(P\omega_{j,k}^{b}\right)+v\left(P\omega_{j,k}^{1-b}\right).
\]
Regarding the lifts of the disequality of two pigeons $i \neq j$ in one hole, that is, the inequalities coming from the only clauses in $\BinPHP^m_{|H|}$, it
is enough to observe that it is consistent with any perfect matching,
\mbox{i.e.} at least one variable on the LHS is one under such a matching.
Thus, for a product term $P$, any perfect matching consistent with $P$ will
also be consistent with $P\omega_{i,k}^{1-b_{k}}$ or with $P\omega_{j,k}^{1-b_{k}}$
for some $k$.
\end{proof}



\subsection{The ordinary Pigeonhole Principle}

The proof of the size lower bound for the  $\BinPHP_{n}^{n+1}$
is then by a standard random restriction argument combined with the
rank lower bound above. Assume, without loss of generality, that $n$ is a power
of two. For the random restrictions $\mathcal{R}$, we consider the
pigeons one by one and with probability $\nicefrac{1}{4}$ we assign
the pigeon uniformly at random to one of the holes still available.
We first need to show that the restriction is ``good" with high probability, \mbox{i.e.} neither too big nor too small.
The former is needed so that in the restricted version we have a good lower bound,
while the latter will be needed to show that a good restriction coincides well with any reasonably big term, in the sense that they have in common a sufficiency of pigeons.

We will make use of the following version of the Chernoff Bound as appears in \cite{ChernoffSource}.
\begin{lemma}[Theorem 4.4 in \cite{ChernoffSource}]
Let $X_{1},X_{2},\dots, X_{n}$ be independent 0/1 random variables with $\mbox{Pr}\left[X_{i}=1\right]=p_{i}$.
Let $X=\sum_{i=1}^{n}X_{i}$ and $\mu=E\left[X\right]$. Then, for
every $\delta$, $0<\delta\leq1$, the following bound holds\[
\mbox{Pr}\left[X\geq\left(1+\delta\right)\mu\right]\leq e^{\frac{-\mu\delta^{2}}{3}}.\]
\label{lem:ch}
\end{lemma}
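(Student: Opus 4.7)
The plan is to use the standard Bernstein/Chernoff technique via the moment generating function. First I would fix a parameter $t>0$ and apply Markov's inequality to the nonnegative random variable $e^{tX}$, observing that
\[
\Pr\!\left[X\geq(1+\delta)\mu\right] \;=\; \Pr\!\left[e^{tX}\geq e^{t(1+\delta)\mu}\right] \;\leq\; \frac{E\!\left[e^{tX}\right]}{e^{t(1+\delta)\mu}}.
\]
This converts a tail bound into a question about the MGF, which is the right quantity because independence turns it into a product.

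The key computational step uses the independence of the $X_i$: $E[e^{tX}]=\prod_{i=1}^{n} E[e^{tX_i}]$. For each Bernoulli summand, $E[e^{tX_i}] = 1-p_i + p_i e^{t} = 1 + p_i(e^t-1) \leq \exp(p_i(e^t-1))$ by the elementary inequality $1+x\leq e^x$. Multiplying over $i$ and recalling $\mu=\sum_i p_i$, one gets $E[e^{tX}]\leq \exp(\mu(e^t-1))$. Substituting back yields
\[
\Pr\!\left[X\geq(1+\delta)\mu\right]\;\leq\;\exp\!\bigl(\mu(e^t-1)-t(1+\delta)\mu\bigr).
\]
Now I would optimise over $t>0$; setting the derivative with respect to $t$ to zero gives the optimal choice $t=\ln(1+\delta)$, producing the canonical Chernoff form
\[
\Pr\!\left[X\geq(1+\delta)\mu\right] \;\leq\; \left(\frac{e^{\delta}}{(1+\delta)^{1+\delta}}\right)^{\!\mu}.
\]

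The remaining step is to derive the cleaner $e^{-\mu\delta^{2}/3}$ form for the restricted range $0<\delta\leq 1$. This reduces to the analytic inequality $(1+\delta)\ln(1+\delta)-\delta \geq \delta^{2}/3$ on $(0,1]$. I would verify this by setting $f(\delta):=(1+\delta)\ln(1+\delta)-\delta-\delta^{2}/3$, noting $f(0)=0$, and checking $f'(\delta)=\ln(1+\delta)-2\delta/3\geq 0$ on $(0,1]$ (which follows, e.g., from the Taylor series $\ln(1+\delta)=\delta-\delta^{2}/2+\delta^{3}/3-\cdots$ bounded below by $\delta-\delta^{2}/2$, and then comparing with $2\delta/3$ at $\delta=1$ to handle the endpoint). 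The main obstacle is this last pointwise inequality, where the constant $1/3$ is slightly loose in the interior but tight enough at $\delta=1$; once it is established the rest of the argument is mechanical and the lemma follows.
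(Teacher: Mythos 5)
The paper does not actually prove this lemma: it is quoted verbatim as Theorem 4.4 of Mitzenmacher and Upfal \cite{ChernoffSource} and invoked as a black box (Lemma~\ref{lem:ch}), so there is no in-paper argument to compare yours against. Your derivation is the standard Chernoff/MGF route and is correct in outline; the only issue is a small gap in the final analytic step. The Taylor bound $\ln(1+\delta)\geq \delta-\delta^{2}/2$ gives $\ln(1+\delta)-2\delta/3\geq \delta/3-\delta^{2}/2$, which is nonnegative only for $\delta\leq 2/3$, so the remark about ``comparing at $\delta=1$'' does not by itself cover $(2/3,1]$. A clean way to close the gap: set $g(\delta)=\ln(1+\delta)-2\delta/3$. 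Then $g(0)=0$, $g(1)=\ln 2 - 2/3>0$, and $g''(\delta)=-(1+\delta)^{-2}<0$, so $g$ is concave on $[0,1]$; concavity then gives $g(\delta)\geq (1-\delta)g(0)+\delta g(1)=\delta\,g(1)\geq 0$ for all $\delta\in[0,1]$. Hence $f(\delta)=(1+\delta)\ln(1+\delta)-\delta-\delta^{2}/3$ satisfies $f(0)=0$ and $f'=g\geq 0$ on $[0,1]$, so $f\geq 0$ there, which is exactly the inequality you need. With that patch the argument is complete and standard.
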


		\begin{lemma}
			\label{fact:good-restrictions}
			If $\mathcal{\left|R\right|}$ is the
			number of pigeons (or holes) assigned by $\mathcal{R}$, the probability that {\color{black}  $|\mathcal{R} | > \frac{3(n+1)}{8}$} is at
			most {\color{black} $e^{-\frac{(n+1)}{48}}$}.
		\end{lemma}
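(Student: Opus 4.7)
The plan is to apply the Chernoff bound (Lemma~\ref{lem:ch}) directly. Let $X_1,\ldots,X_{n+1}$ be the independent indicator random variables where $X_i = 1$ iff the $i$-th pigeon is ``selected'' to be assigned, so that $\Pr[X_i = 1] = \nicefrac{1}{4}$. Setting $X := \sum_i X_i$, we have $\mu = E[X] = \nicefrac{(n+1)}{4}$. Note that $|\mathcal{R}| \leq X$ always (a selected pigeon may fail to be actually assigned only if no hole remains, which can only decrease the count), so it suffices to bound $\Pr[X > \nicefrac{3(n+1)}{8}]$ from above.

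Now choose $\delta = \nicefrac{1}{2}$, so that $(1+\delta)\mu = \tfrac{3}{2}\cdot \tfrac{n+1}{4} = \tfrac{3(n+1)}{8}$. Applying Lemma~\ref{lem:ch}, we obtain
\[
\Pr[X \geq (1+\delta)\mu] \leq e^{-\mu\delta^2/3} = \exp\!\left(-\frac{1}{3}\cdot\frac{n+1}{4}\cdot\frac{1}{4}\right) = e^{-(n+1)/48},
\]
which gives the desired bound.

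I do not expect any real obstacle: the statement is essentially a textbook Chernoff tail estimate, and the only minor subtlety is observing that the actual $|\mathcal{R}|$ is dominated by the i.i.d.\ Bernoulli sum $X$, which justifies the application of the Chernoff bound despite the (mild) dependence introduced by holes becoming unavailable. The parameters have been chosen precisely so that $(1+\delta)\mu$ matches $\nicefrac{3(n+1)}{8}$ and $\delta^2/3 = \nicefrac{1}{12}$, yielding the exponent $\nicefrac{(n+1)}{48}$.
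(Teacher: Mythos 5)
Your proof is correct and uses essentially the same argument as the paper: a direct application of the Chernoff bound (Lemma~\ref{lem:ch}) with $p_i = \nicefrac{1}{4}$, $\mu = \nicefrac{(n+1)}{4}$, and $\delta = \nicefrac{1}{2}$. Your added remark that $|\mathcal{R}|$ is dominated by the i.i.d.\ Bernoulli sum $X$ (because a selected pigeon can only fail to be assigned when no hole remains) is a small point of extra care that the paper leaves implicit, but the substance and the numbers are identical.
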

		\begin{proof}We use the Chernoff Bound from Lemma~\ref{lem:ch}. We have $p_{i}=\frac{1}{4}$ (and thus
			{\color{black} $\mu=\frac{n+1}{4}$}) and $\delta=\frac{1}{2}$. Thus, the probability {\color{black} the restriction assigns more than $\frac{3(n+1)}{8}$ pigeons to holes} is at most {\color{black} $e^{-\nicefrac{(n+1)}{48}}$}.
		\end{proof}

		\noindent We first prove that any given wide product term, \mbox{i.e.} a term that mentions a
		constant fraction of the pigeons, survives the random restrictions
		with exponentially small probability.
		\begin{lemma}
			\label{lemma:restriction-kill}Let $P$ be a product term that mentions at least
			{\color{black} $\frac{n+1}{2}$} pigeons. The probability that $P$ does not evaluate
			to zero under the random restrictions is at most  {\color{black} $\left(\frac{5}{6}\right)^{\nicefrac{n}{16}}$ (for $n$ large enough)}.
		\end{lemma}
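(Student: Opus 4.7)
The plan is to view the random restriction in two equivalent stages: first sample a marked set $M \subseteq [n+1]$ by including each pigeon independently with probability $\tfrac14$; then draw a uniformly random injection $\sigma : M \to [n]$. Setting $S := M \cap \{\text{pigeons mentioned in } P\}$, note that $P$ survives the restriction if and only if $\sigma(i) \in H_i$ for every $i \in S$, where $H_i \subseteq [n]$ is the set of holes consistent with every literal of $P$ on pigeon $i$. Since $P$ mentions at least one bit of each such $i$, we have $|H_i| \le n/2$.

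The key estimate is to bound $\Pr_\sigma[\, \sigma(i) \in H_i \ \forall i \in S \mid M\,]$ by a counting argument. Pick $\sigma|_S$ first (at most $\prod_{i\in S}|H_i| \le (n/2)^{|S|}$ choices, dropping the injectivity constraint as an upper bound), then extend injectively to $M\setminus S$ in $(n-|S|)!/(n-|M|)!$ ways. Dividing by the total number of injections $M \to [n]$, which is $n!/(n-|M|)!$, and using $n(n-1)\cdots(n-|S|+1) \ge (n-|S|+1)^{|S|}$, gives
\[
\Pr_\sigma[P \text{ survives} \mid M] \le \left(\frac{n/2}{n-|S|+1}\right)^{|S|}.
\]

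I would then condition on two high-probability good events. First, by Lemma~\ref{fact:good-restrictions}, $|M| \le 3(n+1)/8$ with failure probability at most $e^{-(n+1)/48}$. Second, since $|S|$ is a sum of at least $(n+1)/2$ independent Bernoulli$(\tfrac14)$ random variables, its mean is at least $(n+1)/8$, so a standard Chernoff lower-tail bound gives $|S| \ge (n+1)/16$ with failure probability $e^{-\Omega(n)}$. Under both events, $|S| \le 3(n+1)/8$ forces $n - |S| + 1 \ge 5n/8$, and the counting bound simplifies to $(4/5)^{|S|} \le (4/5)^{(n+1)/16}$. A union bound then yields
\[
\Pr[P \text{ survives}] \le (4/5)^{(n+1)/16} + e^{-\Omega(n)},
\]
which is bounded by $(5/6)^{n/16}$ for $n$ sufficiently large, since $4/5 < 5/6$ and the failure probabilities decay at faster exponential rates.

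The main obstacle is making the injection counting tight enough while managing the two layers of randomness in the restriction. The bound $\left(\frac{n/2}{n-|S|+1}\right)^{|S|}$ is only useful once $|S|$ is safely below $n/2$, which is precisely why Lemma~\ref{fact:good-restrictions} is invoked to cap $|M|$ (and hence $|S|$). The two-stage reformulation is what makes this clean: trying to analyze survival pigeon-by-pigeon under the original sequential process forces one to track the changing pool of remaining holes, whereas the uniform-injection view decouples the combinatorics from the Bernoulli marking and reduces the problem to a direct counting estimate.
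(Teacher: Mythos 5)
Your proof is correct. The overall strategy matches the paper's: condition on $|\mathcal{R}|\leq 3(n+1)/8$ via Lemma~\ref{fact:good-restrictions}, apply a Chernoff lower-tail bound to get at least $(n+1)/16$ pigeons of $P$ assigned, and then show each such assignment kills $P$ with probability at least $1/5$, yielding the $(4/5)^{(n+1)/16}$ term and the same final sum. Where you diverge is in the technical core: the paper argues pigeon-by-pigeon during the sequential assignment process, observing that whenever a marked pigeon mentioned by $P$ is placed there are at least $5(n+1)/8$ holes available of which at most $(n+1)/2$ are consistent with a given literal, so the conditional kill probability is $\geq 1/5$ at each step. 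You instead reformulate $\mathcal{R}$ as independent Bernoulli marking followed by a uniform injection $M\to[n]$ and bound the survival probability by a direct count of consistent injections, $\bigl(\tfrac{n/2}{n-|S|+1}\bigr)^{|S|}$. These are two formalizations of the same combinatorial fact, but yours is cleaner: it avoids the slightly informal conditioning implicit in ``going through the assigned pigeons one by one,'' and makes the role of $|S|$ and $|M|$ in the denominator explicit, at the small cost of justifying that the sequential process really is equivalent to a uniform injection (which you correctly assert). Both give the identical numerical bound, so there is no loss.
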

		
		\begin{proof}
			We will desire {\color{black} $|\mathcal{R}|\leq  \frac{3(n+1)}{8}$} to ensure that at least {\color{black} $\frac{5(n+1)}{8}$} holes remain unused in $\mathcal{R}$ {\color{black} (for $n$ large enough)}. This will involve the probability {\color{black} $e^{-\nicefrac{(n+1)}{48}}$} from Lemma~\ref{fact:good-restrictions}.
			
			A further application of the Chernoff Bound from Lemma~\ref{lem:ch} ( {\color{black} $\mu=\frac{n+1}{8}$}, $\delta=-\frac{1}{2}$) gives the probability that fewer
			than {\color{black} $\frac{n+1}{16}$} pigeons mentioned by $P$ are assigned by $\mathcal{R}$
			is at most {\color{black} $e^{-\nicefrac{(n+1)}{96}}$}.
			
			For each of these {\color{black} assigned} pigeons the
			probability that a single bit-variable in $P$ belonging to the pigeon
			is set by $\mathcal{R}$ to zero is at least {\color{black} $\frac{1}{5}$}. This
			is because when $\mathcal{R}$ sets the pigeon, and thus the bit-variable,
			there were at least {\color{black} $\frac{5(n+1)}{8}$} holes available, while at most
			{\color{black} $\frac{n+1}{2}$} choices set the bit-variable to one. The difference {\color{black} -- which will be a lower bound on the number of holes available setting the selected bit to 0 --} is {\color{black} $\frac{n+1}{8}$} which when divided by {\color{black} $\frac{5(n+1)}{8}$} {\color{black} (to normalise the probability)} gives {\color{black} $\frac{1}{5}$}. Thus $P$ survives
			under $\mathcal{R}$ with probability at most {\color{black} 
				$e^{-\nicefrac{(n+1)}{48}} + e^{-\nicefrac{(n+1)}{96}}+\left(\frac{4}{5}\right)^{\nicefrac{n+1}{16}} < \left(\frac{5}{6}\right)^{\nicefrac{n}{16}}$ }.
		\end{proof}
		Finally, we can prove that
		\begin{theorem}
			\label{theorem:size-bound}Any \SA\ refutation of the  $\BinPHP_{n}^{n+1}$
			has to contain at least {\color{black} $\left(\frac{7}{6}\right)^{\nicefrac{n}{16}}$}
			terms.
		\end{theorem}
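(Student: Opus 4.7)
The plan is to argue by contradiction using a standard random restriction method, combining the kill probability of Lemma~\ref{lemma:restriction-kill} with the rank lower bound of Lemma~\ref{lemma:degree-bound}. Suppose, towards a contradiction, that there exists an \SA\ refutation $\Pi$ of $\BinPHP_n^{n+1}$ with fewer than $(7/6)^{n/16}$ terms. Call a product term in $\Pi$ \emph{wide} if it mentions at least $(n+1)/2$ pigeons, and call the restriction $\mathcal{R}$ \emph{good} if $|\mathcal{R}| \leq 3(n+1)/8$.

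First I would use the union bound over the (at most $(7/6)^{n/16}$) wide terms together with Lemma~\ref{lemma:restriction-kill} to conclude that the probability some wide term survives $\mathcal{R}$ is at most $(7/6)^{n/16}\cdot (5/6)^{n/16} = (35/36)^{n/16}$. By Lemma~\ref{fact:good-restrictions}, the probability that $\mathcal{R}$ is not good is at most $e^{-(n+1)/48}$. For $n$ sufficiently large, the sum of these two probabilities is strictly less than $1$, so with positive probability we obtain a good restriction $\mathcal{R}$ that simultaneously kills every wide term.

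Fix such an $\mathcal{R}$. Applying $\mathcal{R}$ to $\Pi$ yields an \SA\ refutation $\Pi\!\!\upharpoonright_\mathcal{R}$ of the residual binary pigeonhole instance $\BinPHP_{|H|}^{m'}$, where $H$ is the set of holes not used by $\mathcal{R}$ and $m'$ is the number of pigeons left unassigned. Since $\mathcal{R}$ is good, $|H| \geq n - 3(n+1)/8$ and $m' \geq (n+1) - 3(n+1)/8 = 5(n+1)/8$, so in particular $|H| > (n+1)/2$ for $n$ large enough, and the residual instance is still a legitimate binary \PHP\ instance (more pigeons than holes). Now Lemma~\ref{lemma:degree-bound} applied to $\Pi\!\!\upharpoonright_\mathcal{R}$ produces a term that mentions at least $|H|$ of the surviving pigeons. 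Its preimage in $\Pi$ is obtained by adjoining some (possibly none) of the bit-literals of pigeons assigned by $\mathcal{R}$, and hence mentions at least $|H| > (n+1)/2$ pigeons in $\Pi$. That preimage is therefore a wide term that survived $\mathcal{R}$, contradicting our choice of $\mathcal{R}$.

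The routine part is the probabilistic bookkeeping; the main content is already packaged in the two lemmas. The only place requiring care is verifying that the parameters line up: specifically, that the kill threshold $(n+1)/2$ chosen in Lemma~\ref{lemma:restriction-kill} is strictly less than the guaranteed lower bound on $|H|$ after a good restriction, so that Lemma~\ref{lemma:degree-bound} bites against precisely the wide terms we killed. This follows from the chosen constants $\tfrac{1}{4}$ (restriction probability), $\tfrac{3(n+1)}{8}$ (good-restriction threshold) and $\tfrac{n+1}{2}$ (width threshold), and the arithmetic goes through for all sufficiently large $n$, giving the claimed bound $(7/6)^{n/16}$.
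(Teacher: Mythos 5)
Your proposal is correct and follows essentially the same argument as the paper's own proof: a union bound over wide terms using Lemma~\ref{lemma:restriction-kill}, combined with the good-restriction probability from Lemma~\ref{fact:good-restrictions}, yields a good restriction $\mathcal{R}$ that kills every wide term and leaves an \SA\ refutation of a residual $\BinPHP_{|H|}^{m'}$ with $|H|>(n+1)/2$, contradicting Lemma~\ref{lemma:degree-bound}. Your unpacking of the final step (tracing the preimage of the $|H|$-pigeon term back to a wide term that should have been killed) is a slightly more explicit way of stating the contradiction that the paper leaves implicit, but the two proofs are the same.
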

		\begin{proof}
			Assume for a contradiction, that there is a smaller refutation. We wish to argue that there is a random restriction with {\color{black} $|\mathcal{R}|\leq \frac{3(n+1)}{8}$} that evaluates to zero all terms that mention at least {\color{black} $\frac{n+1}{2}$} pigeons. There are at most {\color{black} $\left(\frac{7}{6}\right)^{\nicefrac{n}{16}}$} such terms so an application of the union-bound together with Lemma~\ref{fact:good-restrictions} {\color{black} and Lemma~\ref{lemma:restriction-kill}} gives a probability
			\[ {\color{black} \left(\frac{5}{6}\right)^{\nicefrac{n}{16}}} \times {\color{black} \left(\frac{7}{6}\right)^{\nicefrac{n}{16}}} + {\color{black} e^{-(n+1)/48}} < 1.\]
			Now we apply the random restriction which we know must exist to leave no terms mentioning at least {\color{black} $\frac{n+1}{2}$} pigeons
			in an \SA\ refutation of the  binary $\text{PHP}_{n'}^{m'}$, where $m'>n'\geq  {\color{black} \frac{5(n+1)}{8}}$. However, since $n'> {\color{black} \frac{n+1}{2}}$, this contradicts Lemma \ref{lemma:degree-bound}.
		\end{proof}
\begin{corollary}
\label{cor:size-bound} {\color{black} Any \SA\ refutation of the  $\BinPHP_{n}^{n+1}$
must have size $2^{\Theta(n)}$.}
\end{corollary}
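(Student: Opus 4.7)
The plan is to establish the corollary by combining the matching lower and upper bounds, since $\Theta$ requires both. The lower bound $2^{\Omega(n)}$ is already in hand: Theorem~\ref{theorem:size-bound} shows that any \SA\ refutation of $\BinPHP_{n}^{n+1}$ must contain at least $(\nicefrac{7}{6})^{\nicefrac{n}{16}}$ terms, which is $2^{\Omega(n)}$. So the substantive remaining task is the upper bound $2^{O(n)}$.

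For the upper bound, I would leverage results already mentioned in the paper. From Theorem~\ref{thm:phpubtreelike}, the treelike Resolution complexity of $\BinPHP_{n}^{n+1}$ is $2^{\Theta(n)}$, so in particular there is a (general, not merely treelike) Resolution refutation of size $2^{O(n)}$. Combining this with the fact noted earlier in the paper that \SA\ polynomially simulates Resolution (see the discussion preceding the definition of \SA), one obtains an \SA\ refutation of size $\mathrm{poly}(2^{O(n)}) = 2^{O(n)}$, matching the lower bound.

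The only minor obstacle is checking that the polynomial simulation of Resolution by \SA\ applies to the binary encoding rather than just the unary one, but since the simulation is generic at the level of CNFs (and $\BinPHP_{n}^{n+1}$ is presented as a CNF over the $\omega_{i,j}$ variables), this is immediate. Thus the whole proof amounts to citing Theorem~\ref{theorem:size-bound} for the lower bound and stringing together Theorem~\ref{thm:phpubtreelike} with the polynomial simulation of Resolution by \SA\ for the upper bound.
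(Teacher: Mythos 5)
Your proposal is correct and matches the paper's proof essentially verbatim: the lower bound is quoted from Theorem~\ref{theorem:size-bound}, and the upper bound comes from the $2^{O(n)}$ treelike Resolution refutation of Theorem~\ref{thm:phpubtreelike} combined with the polynomial simulation of Resolution by \SA\ from \cite{TCS2009}. Nothing to add.
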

\begin{proof}
The size lower bound comes from the previous theorem. We know that there is a  $2^{\Theta(n)}$ upper bound in treelike Resolution from Theorem~\ref{thm:phpubtreelike} and the result follows from the standard simulation of Resolution by \SA\ which increases refutations by no more than a factor which is a polynomial in $n$ \cite{TCS2009}.
\end{proof}

\subsection{The weak Pigeonhole Principle}

We now consider the so-called weak  binary PHP, $\BinPHP_{n}^{m}$, where
$m$ is potentially much larger than $n$. The weak unary $\PHP_{n}^{m}$ is interesting because it admits (significantly) subexponential-in-$n$ refutations in Resolution when $m$ is sufficiently large \cite{BussP97}. It follows that this size upper bound is mirrored in \SA. However, as proved earlier in this article the weak binary $\BinPHP_{n}^{m}$ remains almost-exponential-in-$n$ for minimal refutations in Resolution. We will see here that the weak binary $\BinPHP_{n}^{m}$ remains almost-exponential-in-$n$ for minimally sized refutations in \SA. In this weak binary case, the random restrictions
$\mathcal{R}$ above do not work, so we apply quite different restrictions
$\mathcal{R}'$ that are as follows: for each pigeon select independently
a single bit uniformly at random and set it to $0$ or $1$ with probability
of $\nicefrac{1}{2}$ each.

We can easily prove the following
\begin{lemma}
\label{lemma:restriction-kill-weak}
A product term $P$ that mentions $n'$
pigeons does not evaluate to zero under $\mathcal{R}'$ with probability
at most $e^{-\nicefrac{n'}{2\log n}}$.
\end{lemma}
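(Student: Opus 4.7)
The plan is a direct independence/union style computation. For each of the $n'$ pigeons $i$ mentioned in the product term $P$, let $J_i \subseteq [\log n]$ be the set of bit positions such that $\omega_{i,j}^{b}$ appears in $P$ for some $b \in \{0,1\}$; by assumption $|J_i| = k_i \geq 1$. The restriction $\mathcal{R}'$ acts on pigeon $i$ by choosing a single bit position $j^\star \in [\log n]$ uniformly at random and then setting $\omega_{i,j^\star}$ to $0$ or $1$ with probability $\tfrac12$ each, and its action on distinct pigeons is independent.

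First I would observe that $\mathcal{R}'$ forces $P$ to $0$ via pigeon $i$ whenever (a) $j^\star \in J_i$, and (b) the value assigned to $\omega_{i,j^\star}$ disagrees with the unique literal of $P$ on that bit. Condition (a) holds with probability $k_i/\log n$ and, conditioned on (a), condition (b) holds with probability exactly $\tfrac12$. Hence the probability that pigeon $i$ alone kills $P$ is $k_i/(2\log n) \geq 1/(2\log n)$, so the probability that it fails to kill $P$ via pigeon $i$ is at most $1 - 1/(2\log n)$.

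Next, since the choices of $\mathcal{R}'$ on different pigeons are mutually independent, the events ``$P$ is not killed through pigeon $i$'' (for the $n'$ pigeons mentioned in $P$) are independent, and therefore
\[
\Pr[P \not\equiv 0 \text{ under } \mathcal{R}'] \;\leq\; \left(1 - \frac{1}{2\log n}\right)^{n'} \;\leq\; e^{-n'/(2\log n)},
\]
using the standard inequality $1 - x \leq e^{-x}$. There is no real obstacle here; the only thing to be careful about is that ``killing through pigeon $i$'' uses only the action of $\mathcal{R}'$ on pigeon $i$, which is what guarantees the required independence across pigeons.
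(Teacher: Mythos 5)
Your proof is correct and takes essentially the same approach as the paper's: a per-pigeon killing probability of at least $1/(2\log n)$, followed by independence across pigeons and the bound $1-x \leq e^{-x}$. Your treatment is marginally more careful in handling pigeons with multiple bits appearing in $P$ (via $k_i \geq 1$), a point the paper's prose glosses over, but the argument is identical in substance.
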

\begin{proof}
For each pigeon mentioned in $P$, the probability that the bit-variable present
in $P$ is set by the random restriction is $\frac{1}{\log n}$, and
if so, the probability that the bit-variable evaluates to zero is
$\frac{1}{2}$. Since this happens independently for all $n'$ mentioned
pigeons, the probability that they all survive is at most $\left(1-\frac{1}{2\log n}\right)^{n'}$.
\end{proof}

\begin{lemma}
\label{lemma:degree-bound-weak}The probability that $\mathcal{R}'$ fails to have, for each $k \in [\log n]$ and $b \in \{0,1\}$, at least $\frac{m}{4\log n}$ pigeons with the $k$th bit set to $b$, is at most $e^{-\nicefrac{n}{48\log n}}$.
\end{lemma}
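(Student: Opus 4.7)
The plan is to apply a standard Chernoff concentration bound to each of the $2\log n$ events indexed by $(k,b) \in [\log n]\times\{0,1\}$, and then conclude by a union bound. The key observation is that, under the restriction $\mathcal{R}'$, the pigeons behave independently: for each pigeon $i$, the event that bit $k$ gets the value $b$ has probability $\tfrac{1}{2\log n}$ (probability $\tfrac{1}{\log n}$ that position $k$ is the one selected, times $\tfrac{1}{2}$ that the assigned value equals $b$), and these events are mutually independent across pigeons.

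First I would fix a pair $(k,b)$ and set $X_{k,b}:=\sum_{i=1}^m X_{i,k,b}$, where $X_{i,k,b}$ is the indicator that pigeon $i$ ends up with its $k$th bit set to $b$. Then $\mu:=\mathbb{E}[X_{k,b}]=\tfrac{m}{2\log n}$, which is at least $\tfrac{n}{2\log n}$ since $m>n$. Applying the lower-tail Chernoff bound (the matching companion of Lemma~\ref{lem:ch}) with $\delta=\tfrac12$, one gets
\[
\Pr\!\left[X_{k,b}\leq \tfrac{m}{4\log n}\right] \;\leq\; e^{-\mu/12} \;\leq\; e^{-n/(24\log n)}.
\]

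Next I would apply the union bound over the $2\log n$ relevant pairs $(k,b)$: the probability that some $X_{k,b}$ falls below $\tfrac{m}{4\log n}$ is at most $2\log n\cdot e^{-n/(24\log n)}$. For $n$ large enough we have $2\log n\leq e^{n/(48\log n)}$, so the total failure probability is bounded by $e^{-n/(24\log n)+n/(48\log n)}=e^{-n/(48\log n)}$, as required.

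The step that needs the most care is matching the constants so that the union-bound blow-up of $2\log n$ is absorbed by the Chernoff decay; otherwise everything is routine. No obstacle beyond bookkeeping is expected: the independence across pigeons is immediate from the definition of $\mathcal{R}'$, and the use of $m>n$ is the only place where the weakness of the PHP being considered enters, ensuring that the lower bound on $\mu$ is in terms of $n$ rather than $m$.
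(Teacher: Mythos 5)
Your proposal is correct and follows essentially the same approach as the paper: apply the (lower-tail) Chernoff bound with $\mu=\tfrac{m}{2\log n}$ and $\delta=\tfrac12$ to each of the $2\log n$ pairs $(k,b)$, then take a union bound and absorb the $2\log n$ factor using $m>n$. The only minor difference is that you spell out the absorption $2\log n\leq e^{n/(48\log n)}$ explicitly, whereas the paper states the final inequality directly.
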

\begin{proof}
We apply the Chernoff Bound of Lemma~\ref{lem:ch} to deduce that for each bit position
$k$, $1\leq k\leq (\log n)$ and a value $b$, $0$ or $1$, the probability
that there are fewer than $\frac{m}{4\log n}$ pigeons for which the
$k$th bit is set to $b$ is at most $e^{-\nicefrac{m}{24\log n}}$. This uses $\mu=\frac{m}{2\log n}$ and $\delta=-\frac{1}{2}$.
Since $m>n$, by the union bound, the probability that this holds for some position
$k$ and some value $b$ is at most $(2\log n) e^{-\nicefrac{m}{24\log n}}\leq e^{-\nicefrac{n}{48\log n}}$.
\end{proof}
In order to conclude our result, we will profit from a graph-theoretic treatment of Hall's Marriage Theorem \cite{HallsMarriage}.
Suppose $G$ is a finite bipartite graph with bipartitions $X$ and $Y$, then an $X$-saturating matching is a matching which covers every vertex in $X$. For a subset $W$ of $X$, let $N_{G}(W)$ denote the neighborhood of $W$ in $G$, \mbox{i.e.} the set of all vertices in $Y$ adjacent to some element of $W$. 
\begin{theorem}[\cite{HallsMarriage} (see Theorem 5.1 in \cite{Lint1992ACI})]
Let $G$ be a finite bipartite graph with bipartitions $X$ and $Y$.  There is an $X$-saturating matching if and only if for every subset $W$ of $X$, $|W|\leq |N_{G}(W)|$.
\label{thm:Hall}
\end{theorem}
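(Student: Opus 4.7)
The plan is to establish Hall's condition's necessity and sufficiency separately. The forward direction ($\Rightarrow$) is immediate: if $M$ is an $X$-saturating matching and $W\subseteq X$, then $M$ restricted to $W$ is an injection from $W$ into $N_G(W)$, hence $|W|\leq |N_G(W)|$.

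For the backward direction ($\Leftarrow$), I would proceed by strong induction on $|X|$. The base case $|X|=1$ is trivial since Hall's condition yields $N_G(\{x\})\neq\emptyset$, and we match $x$ to any of its neighbors. For the inductive step, I would distinguish two cases based on whether the bound $|N_G(W)|\geq|W|$ is slack or tight on some proper nonempty subset.

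\textbf{Case A:} Every nonempty proper subset $W\subsetneq X$ satisfies $|N_G(W)|>|W|$ strictly. Pick any $x\in X$ and any $y\in N_G(x)$, include the edge $xy$ in the matching, and pass to the induced subgraph $G'$ on $(X\setminus\{x\},Y\setminus\{y\})$. For every nonempty $W\subseteq X\setminus\{x\}$ one has $|N_{G'}(W)|\geq |N_G(W)|-1\geq |W|$, so Hall's condition still holds on $G'$ and induction applies.

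\textbf{Case B:} Some nonempty proper $W_0\subsetneq X$ satisfies $|N_G(W_0)|=|W_0|$. Apply the inductive hypothesis to the bipartite subgraph $G_1$ on $(W_0,N_G(W_0))$ (which inherits Hall's condition from $G$) to obtain a $W_0$-saturating matching $M_1$. Then consider the bipartite subgraph $G_2$ on $(X\setminus W_0, Y\setminus N_G(W_0))$; for any $W'\subseteq X\setminus W_0$, the computation $|W_0|+|N_{G_2}(W')|\geq|N_G(W_0\cup W')|\geq|W_0\cup W'|=|W_0|+|W'|$ shows that Hall's condition holds on $G_2$, so induction yields an $(X\setminus W_0)$-saturating matching $M_2$; the union $M_1\cup M_2$ is the required $X$-saturating matching.

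The main obstacle is the inequality transfer in Case B: one has to see that because the tight set $W_0$ uses up precisely its own neighborhood, no subset on the other side can steal from $N_G(W_0)$, which is exactly what permits the recursive decomposition. An alternative I would consider, if a cleaner write-up is desired, is to derive the theorem from the max-flow min-cut theorem applied to the natural unit-capacity network with source adjacent to $X$ and sink adjacent to $Y$, since the min-cut then directly encodes a violating set $W$ whenever no saturating matching exists.
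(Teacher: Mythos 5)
The paper does not prove this statement; it invokes Hall's Marriage Theorem as a black box, with citations to Hall's 1935 paper and to van Lint--Wilson, and then uses it as an off-the-shelf tool in the argument that the weak binary Pigeonhole Principle requires exponential size in \SA. There is therefore no ``paper proof'' to compare against.

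Your proposed proof is correct, and it is essentially the standard critical-set induction due to Hall (often presented following Halmos--Vaughan). The forward direction is immediate. In the backward direction, Case A is sound because removing one vertex from each side drops each $|N_G(W)|$ by at most one, and strictness gives the slack needed to preserve Hall's condition. In Case B, the key set identity you rely on is the disjoint decomposition $N_G(W_0 \cup W') = N_G(W_0) \sqcup N_{G_2}(W')$ when $W' \subseteq X \setminus W_0$ and $G_2$ is the subgraph on $(X\setminus W_0,\, Y\setminus N_G(W_0))$; from this $|N_{G_2}(W')| = |N_G(W_0\cup W')| - |W_0| \geq |W_0\cup W'| - |W_0| = |W'|$, so Hall's condition descends to both $G_1$ and $G_2$, and both have strictly fewer left-vertices than $G$, so strong induction closes. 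The two resulting matchings are disjoint on both sides, so their union saturates $X$. Your remark that the theorem also follows from max-flow/min-cut on the unit-capacity network is accurate and perhaps more natural from the LP-relaxation perspective the paper adopts elsewhere, but either route suffices.
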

\begin{corollary}
Any \SA\ refutation of the  $\BinPHP_{n}^{m}$, $m > n$, has to
contain at least $e^{\nicefrac{n}{32\log^{2}n}}$ terms.
\end{corollary}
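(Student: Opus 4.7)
The plan is to mirror the argument of Theorem~\ref{theorem:size-bound} but with the weak-PHP restrictions $\mathcal{R}'$ and the two accompanying lemmas. Suppose, for contradiction, that there is an \SA\ refutation $\pi$ of $\BinPHP^m_n$ with fewer than $e^{n/(32\log^2 n)}$ terms. Call a term \emph{wide} if it mentions at least $n/(16\log n)$ pigeons. By Lemma~\ref{lemma:restriction-kill-weak}, a wide term survives $\mathcal{R}'$ with probability at most $e^{-n/(32\log^2 n)}$, so by a union bound over the terms of $\pi$ the probability that some wide term survives is strictly less than $1$.

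Next I would combine this with the event of Lemma~\ref{lemma:degree-bound-weak}, which fails only with probability at most $e^{-n/(48\log n)}$. For $n$ large enough the sum of the two failure probabilities is still strictly less than $1$, so there exists a concrete restriction $\mathcal{R}'$ that simultaneously kills every wide term of $\pi$ \emph{and} leaves, for every bit position $k\in[\log n]$ and every $b\in\{0,1\}$, at least $m/(4\log n)$ pigeons with bit $k$ set to $b$. Fix such an $\mathcal{R}'$. What survives of $\pi$ is now an \SA\ refutation of $\BinPHP^m_n\!\!\!\upharpoonright_{\mathcal{R}'}$ in which every term mentions fewer than $n/(16\log n)$ pigeons.

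Now I would isolate a sub-instance that fits the hypothesis of Lemma~\ref{lemma:degree-bound}. Pick any type $(k,b)$; the at least $m/(4\log n)$ pigeons of that type have their $k$-th coordinate pinned to $b$ and hence can only go to the set $H$ of $|H|=n/2$ holes whose $k$-th bit equals $b$. Dropping the $k$-th bit (now constant) and restricting attention to these pigeons, the remaining axioms are exactly those of $\BinPHP^{m/(4\log n)}_{n/2}$ on $\log n -1$ bits, which is unsatisfiable as soon as $m/(4\log n) > n/2$, i.e. essentially for any $m$ large enough in terms of $n$. Lemma~\ref{lemma:degree-bound} then forces \emph{any} \SA\ refutation of this sub-instance to contain a term mentioning at least $|H|=n/2$ pigeons. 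But every surviving term of $\pi$ mentions strictly fewer than $n/(16\log n) < n/2$ pigeons, which is the desired contradiction.

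The only delicate point is the arithmetic: one must verify that $n/(16\log n)$ is small enough for Lemma~\ref{lemma:restriction-kill-weak} to give a survival probability smaller than $e^{-n/(32\log^2 n)}$, and simultaneously large enough that killing all wide terms leaves every surviving term with fewer than $|H|=n/2$ pigeons, so that Lemma~\ref{lemma:degree-bound} genuinely bites. The main obstacle is really the edge range of very small $m$ (just above $n$), where one should either assume $m$ is at least of order $n\log n$ so that the type $(k,b)$ sub-instance is itself a proper PHP, or else combine this argument with Theorem~\ref{theorem:size-bound} (which gives a much stronger bound) to cover the gap.
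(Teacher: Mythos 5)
Your first half is essentially the paper's own opening move: apply the weak\mbox{-}PHP restriction $\mathcal{R}'$, use Lemma~\ref{lemma:restriction-kill-weak} plus a union bound to kill all wide terms, and use Lemma~\ref{lemma:degree-bound-weak} to ensure that each type $(k,b)$ retains at least $m/(4\log n)$ pigeons; the paper merely uses a more generous wideness threshold of $n/(4\log n)$, which makes the union bound comfortably slack rather than razor-thin as with your $n/(16\log n)$.

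The second half has a genuine gap. You cannot pick a single type $(k,b)$, ``restrict attention to these pigeons,'' and then invoke Lemma~\ref{lemma:degree-bound}. The surviving \SA\ refutation of $\BinPHP^m_n\!\!\!\upharpoonright_{\mathcal{R}'}$ involves variables for \emph{every} pigeon and disequality axioms between \emph{every} pair; discarding the pigeons not of type $(k,b)$ does not yield a valid \SA\ refutation of a smaller $\BinPHP$ instance, because those other pigeons may still land in holes of $H$ and their variables are woven through the refutation. Moreover you correctly flag a second obstruction: the sub-instance you name is only a genuine Pigeonhole principle when $m/(4\log n) > |H| = n/2$, i.e.\ $m = \Omega(n\log n)$, which fails in the canonical regime $m = n+1$ where the corollary is also asserted, and the remedy of falling back on Theorem~\ref{theorem:size-bound} for small $m$ is not spelled out. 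The paper sidesteps both issues by \emph{not} reducing to a sub-instance at all. Instead, it builds a valuation directly on $\BinPHP^m_n\!\!\!\upharpoonright_{\mathcal{R}'}$: fix a set $S$ of precisely $n/2$ pigeons containing exactly $n/(4\log n)$ pigeons of each type (possible since every type has $\geq m/(4\log n)\geq n/(4\log n)$ available); relabel the pigeons of any surviving term injectively into $S$ while preserving types (possible because each surviving term mentions at most $n/(4\log n)$ pigeons, hence at most $n/(4\log n)$ of any one type); and set the value of the term to the probability that a uniform perfect matching between $S$ and some fixed set of $n/2$ holes consistent with $\mathcal{R}'$ is consistent with the relabelled term. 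Hall's Theorem is invoked only to establish that such a set of $n/2$ holes and a matching onto it exist -- each pigeon in $S$ still has $n/2$ admissible holes after $\mathcal{R}'$. This yields a nonempty valuation contradicting the existence of the refutation, for all $m>n$, without ever extracting a sub-instance.
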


\begin{proof}
Assume for a contradiction, that there is a refutation with fewer than $e^{\nicefrac{n}{32\log^{2}n}}$ product terms. We want to argue that there is a random restriction that evaluates all terms that mention at least $\frac{n}{4\log n}$ pigeons to zero while satisfying the condition of Lemma~\ref{lemma:degree-bound-weak}. Using {\color{black} a union bound and \Cref{lemma:restriction-kill-weak} we upper bound the probability this fails to happen as} $e^{-n/8\log^2 n}\cdot e^{\nicefrac{n}{32\log^{2}n}}+e^{-\nicefrac{n}{48\log n}}<1$ so such a random restriction $\mathcal{R}'$ does exist.

Then, $\mathcal{R}'$
leaves at least $\frac{m}{4\log n}$  pigeons of each type $\left(k,b\right)$,
\mbox{i.e.} the $k$th bit of the pigeon is set to $b$. Recalling $m\geq n$, we now pick a set
of pigeons $S$ that has $(*)$ precisely $\frac{n}{4\log n}$ pigeons of
each type and thus is of size $\nicefrac{n}{2}$.

We will give an evaluation of the restricted principle which contradicts that the original object was a refutation.
We evaluate any product term $P$ that mentions at most $\frac{n}{4\log n}$
pigeons by first \textcolor{black}{relabeling the mentioned pigeons, injectively, using the labels of pigeons in $S$ while preserving types}, which we can do due to property $(*)$, and then giving it a value
as before. That is, by taking the probability that a perfect matching
between $S$ and some set of $\nicefrac{n}{2}$ holes consistent with the random restriction, is consistent
with $P$.

To finish the proof, we need to show that such a set of $\nicefrac{n}{2}$ holes exists, that is, such a matching exists. But this follows trivially from Theorem~\ref{thm:Hall} as every pigeon has $\nicefrac{n}{2}$ holes available, so at least the same applies to any set of pigeons.
\end{proof}

\section{The \SA\ rank upper bound for Ordering Principle with equality}
\label{sec:SAeqLOP}

Let us remind ourselves of the Ordering Principle in both unary and binary.

\noindent\begin{minipage}{.45\linewidth}
\color{black}
	\begin{gather*}
	\OP: \mbox{\em\underline{Unary encoding}}\\
	\neg v_{i,i} \qquad \forall i \in [n] \\
	\neg v_{i,j} \vee \neg v_{j,k} \vee v_{i,k} \qquad \forall i,j,k \in [n] \\
	\neg w_{i,j} \vee v_{i,j}  \qquad \forall i,j \in [n]  \\
	\textstyle \bigvee_{i\in [n]} w_{i, j} \qquad \forall j \in [n] \\
	\end{gather*}
\end{minipage}%
\begin{minipage}{.45\linewidth}
\color{black}
	\begin{gather*}
	\bOP: \mbox{\em \underline{Binary encoding}} \\
	\neg \nu_{i,i} \qquad \forall i \in [n] \\
	\neg \nu_{i,j} \vee \neg \nu_{j,k} \vee \nu_{i,k} \qquad \forall i,j,k \in [n] \\
	\textstyle \bigvee_{i\in [\log n]} \omega^{1-a_i}_{i,j} \vee \nu_{j,a} \qquad \forall j, a \in [n]\\
	\mbox{where $a_1\ldots a_{\log n} = \mathrm{bin}(a)$} \\
	\end{gather*}
\end{minipage} \vskip\baselineskip 
\noindent Note that we placed the witness in the variables $w_{i,x}$ as the first argument and not the second, as we had in the introduction. This is to be consistent with the $v_{i,j}$ and the standard formulation of $\mathrm{OP}$ as the least, and not greatest, number principle. A more traditional form of the (unary encoding of the) $\OP$ has clauses $\bigvee_{i\in [n]} v_{i, j}$ which are consequent on $\bigvee_{i\in [n]} w_{i, j}$ and $\textcolor{black}{\neg w_{i,j}} \vee v_{i,j}$ (for all $i \in [n]$).

In \SA, we wish to discuss the encoding of the Ordering Principle (and Pigeonhole Principle) as ILPs \emph{with equality}. For this, we take the unary encoding but instead of translating the wide clauses (\mbox{e.g.} from the $\mathrm{OP}$) from  $\bigvee_{i\in [n]} w_{i,x}$ to $w_{1,x} + \ldots + w_{n,x} \geq 1$, we instead use $w_{1,x} + \ldots + w_{n,x} = 1$. This makes the constraint at-least-one into exactly-one (which is a priori enforced in the binary encoding).  A reader \textcolor{black}{favouring a specific example may} consider the Ordering Principle as the combinatorial principle of the following lemma.
\begin{lemma} \label{lemma:binconv}
Let $\mathrm{C}$ be any combinatorial principle expressible as a first order formula in  $\Pi_2$-form with no finite models.
Suppose the unary encoding of $\mathrm{C}$ with equalities has an \SA\ refutation of rank $r$ and size $s$. Then the binary encoding of $\mathrm{C}$ has an \SA\ refutation of rank at most $r \log n$ and size at most $s$.
\end{lemma}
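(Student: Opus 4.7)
The plan is to define a polynomial substitution $\phi$ from the unary-with-equalities variables to the binary variables, apply it to the given \SA\ refutation, and verify that the result is a valid \SA\ refutation of the binary encoding. Concretely, set $\phi(w_{i,j}) = \prod_{k=1}^{\log n} \omega^{i_k}_{k,j}$, where $\mathrm{bin}(i) = i_1 \ldots i_{\log n}$ and $\omega^{1}_{k,j}=\omega_{k,j}$, $\omega^{0}_{k,j}=\overline{\omega_{k,j}}$ (the twin variable), and let $\phi$ act identically on the relational variables (sending $v_{i,j}$ to $\nu_{i,j}$). Under $\phi$, any unary monomial of degree $d$ is sent to a binary monomial of degree at most $d \log n$, so any polynomial of degree $r$ becomes one of degree at most $r \log n$, and the total number of monomials is non-increasing. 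This accounts for both the rank bound $r \log n$ and the size bound $s$.

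Next I would verify, axiom by axiom, that each axiom of the unary-with-equalities encoding, under $\phi$, is either a polynomial identity or is derivable from the binary axioms by a rank-$\log n$ \SA\ derivation. Three cases arise. (i) The equality axioms $\sum_{i} w_{i,j} = 1$ translate to $\sum_{i}\prod_k \omega^{i_k}_{k,j} = \prod_k(\omega_{k,j}+\overline{\omega_{k,j}}) = 1$, a polynomial identity that follows by iterated use of the twin variable axioms $Z_{\omega}+Z_{\overline\omega}=Z_1$ that are built into \SA\ via (3.2); these axioms thus vanish from the translated refutation. (ii) Implication-style axioms linking witnesses to the relation, exemplified by $\neg w_{i,j}\vee v_{i,j}$, translate to the inequality $\nu_{i,j}\geq\prod_k \omega^{i_k}_{k,j}$. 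I would derive this from the binary axiom $\sum_k \omega^{1-i_k}_{k,j}+\nu_{i,j}\geq 1$ by multiplying through by the monomial $\prod_k \omega^{i_k}_{k,j}$ (raising the rank by $\log n$) and using $\omega^{i_k}_{k,j}\omega^{1-i_k}_{k,j}=0$ to annihilate cross-terms, which yields $\nu_{i,j}\prod_k\omega^{i_k}_{k,j}\geq \prod_k \omega^{i_k}_{k,j}$; chaining with monotonicity $\nu_{i,j}\geq\nu_{i,j}\prod_k\omega^{i_k}_{k,j}$ then gives $\nu_{i,j}\geq\prod_k\omega^{i_k}_{k,j}$ as required. (iii) Remaining axioms (those over relational variables only, such as irreflexivity and transitivity in \OP) map identically to the corresponding binary axioms.

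Finally I would apply $\phi$ to the refutation identity $c_0 + \sum_i c_i a_i = -1$, obtaining $\phi(c_0)+\sum_i \phi(c_i)\phi(a_i) = -1$, noting that each $\phi(c_i)$ still has non-negative coefficients since substituting a product of (twin) variables for a variable preserves positivity. Each $\phi(a_i)$ is either a binary axiom or (by case (ii)) a binary axiom multiplied by a monomial of degree at most $\log n$; in the latter case the extra monomial is absorbed into the corresponding multiplier, which remains of degree at most $r\log n - 1$, so the whole product stays within rank $r\log n$. This yields an \SA\ refutation of the binary encoding of rank at most $r\log n$ and size at most $s$. The $\Pi_2$ hypothesis on $\mathrm{C}$ is what keeps the substitution uniform: each existential witness per universal instantiation is encoded in the same schematic way in both encodings, so the same substitution pattern applies across all axioms. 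The main obstacle is the book-keeping in case (ii), namely checking carefully that the rank-$\log n$ auxiliary multiplications used to translate implication axioms can always be absorbed without inflating either the rank past $r\log n$ or the monomial count past $s$.
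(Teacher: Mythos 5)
Your approach is exactly the one the paper uses: substitute $w_{x,a}\mapsto\omega^{a_1}_{x,1}\cdots\omega^{a_{\log n}}_{x,\log n}$ in the unary-with-equalities refutation, observe that the \textit{lower} equalities $\sum_a w_{x,a}=1$ collapse to the polynomial identity $\prod_k(\omega_{x,k}+\overline\omega_{x,k})=1$ which follows from the twin-variable axioms (\ref{eq:LinSA2}), and observe that the implication inequalities translate to $\prod_k\omega^{a_k}_{x,k}\leq\nu_{x,a}$ using $\omega_{x,j}\overline\omega_{x,j}=0$. Your derivation in case (ii) — multiplying the binary axiom by the witness monomial, cancelling cross-terms via $\omega\overline\omega=0$, and chaining with monotonicity — spells out what the paper states in a single sentence; no genuinely different route is taken, and the ``book-keeping'' concern you flag at the end is at the same level of informality as the published proof.
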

\begin{proof}
We take the \SA\ refutation of the unary encoding of $\mathrm{C}$ with equalities of rank $r$, in the form of a set of inequalities, and build an \SA\ refutation of the binary encoding of $\mathrm{C}$ of rank $r \log n$, by substituting terms $w_{x,a}$ in the former with  $\omega^{a_1}_{x,1}\cdots \omega^{a_{\log n}}_{x,\log n}$, where $a_1\ldots a_{\log n} = \mathrm{bin}(a)$, in the latter. Note that the equalities of the form
\[\sum_{a \in [n] : a_1 \ldots a_{\log n}=\mathrm{bin}(a)} \omega^{a_1}_{x,1}\cdots \omega^{a_{\log n}}_{x,\log n} = 1\] follow from the equalities (\ref{eq:LinSA2}). Further, inequalities of the form $\omega^{a_1}_{x,1}\ldots \omega^{a_{\log n}}_{x,\log n} \leq \nu_{x,a}$ follow since $\omega_{x,j}\overline{\omega}_{x,j}=0$ for each $j \in [\log n]$.
\end{proof}
The unary \emph{Ordering Principle ($\OP$) with equality} has the following set of \SA\ axioms:
\[
	\begin{array}{c}
	\mathit{self}: v_{i, i} = 0  \quad \forall \; i \in n\\
	\mathit{trans}: v_{i, k} - v_{i, j} - v_{j, k} + 1 \geq 0 \quad \forall \; i, j, k \in [n] \\
	\mathit{impl}: v_{i, j} - w_{i, j} \geq 0 \quad \forall \; i, j \in [n]  \\
	\mathit{lower}: \sum_{i \in [n]} w_{i, j} - 1 = 0 \quad \forall \; j \in [n]
	\end{array}
	\]
	Note that we need the $w$-variables since we use the equality form. Axioms of the form $\sum_{i \in [n]} x_{i, j} - 1 = 0$ made just from \textcolor{black}{$v$-}variables are plainly incompatible with, e.g., transitivity.
Strictly speaking Sherali-Adams is defined for inequalities only. An equality axiom $a = 0$ is simulated by the two inequalities $a \geq 0, -a \geq 0$, which we refer to as the 
	\emph{positive} and \emph{negative} instances of that axiom, respectively. 	
	 Also, note that we have used $v_{i,j}+\overline{v}_{i,j}=1$ to derive this formulation. We call 
		two product terms \emph{isomorphic} if one product term can be gotten from the other by relabelling the indices appearing in the subscripts \textcolor{black}{by a permutation}.
	
	\begin{theorem}
		The \SA\ rank of the $\OP$ with equality is at most $2$ and \SA\ size at most polynomial in $n$.
\label{theorem:8}
	\end{theorem}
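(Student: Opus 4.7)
The plan is to construct an explicit Sherali-Adams refutation whose polynomial combinations have degree at most three, and whose total number of monomials is polynomial in $n$. I would proceed in three layers of derivations, exploiting the equality $\sum_i w_{i,j} = 1$ at each step.

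First, I would warm up with consequences of low rank. At rank $0$, the axiom $\mathit{trans}$ applied with $k=i$ plus $\mathit{self}$ yields antisymmetry $v_{i,j}+v_{j,i}\leq 1$ for $i\neq j$; similarly, $\mathit{impl}$ combined with $\mathit{self}$ gives $w_{j,j}=0$, and hence $\sum_{i\neq j} w_{i,j}=1$ from $\mathit{lower}$. At rank $1$, lifting the equality $\mathit{lower}_j$ by the literal $w_{i,j}$ and invoking idempotency $w_{i,j}^2=w_{i,j}$ yields uniqueness of witnesses, $w_{i,j}w_{i',j}=0$ for $i\neq i'$. A parallel derivation, lifting $\mathit{trans}$ by $v_{j,k}$, yields the transitive closure identity $v_{a,b}v_{b,c}\leq v_{a,c}$.

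Second, I would establish the key rank-$2$ building block: $\sum_b w_{a,b}w_{b,j}\leq v_{a,j}$ for every pair $(a,j)$. The derivation proceeds by lifting $\mathit{trans}_{a,b,j}$ by the literal $w_{b,j}$, simplifying with $w_{b,j}v_{b,j}=w_{b,j}$ (which comes from $\mathit{impl}$ lifted by $w_{b,j}$ together with idempotency) to obtain $w_{a,b}w_{b,j}\leq v_{a,j}w_{b,j}$, and finally summing over $b$ and applying $\sum_b w_{b,j}=1$. Combined with $\mathit{impl}$, this delivers a bound on the contribution of $2$-step witnesses to the in-degree of $j$.

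Third, I would assemble the contradiction by combining the $1$-step witness sum $\sum_a w_{a,j}=1$ with the $2$-step witness sum $\sum_{a,b} w_{a,b}w_{b,j}=1$ (the latter obtained by inserting $\sum_a w_{a,b}=1$ into the inner sum). The two supports are disjoint by the rank-$2$ identity $w_{a,j}\cdot w_{a,b}w_{b,j}=0$, which follows from uniqueness of witnesses at $j$ (when $b\neq a$) together with $w_{a,a}=0$ (when $b=a$). A careful arithmetic combination of these identities, together with antisymmetry and transitivity, produces an inconsistent $\{0,1\}$-valued linear combination of the lifted axioms; the polynomial-size bound is immediate since every monomial involved has degree at most three over variables indexed by pairs from $[n]$, and one may further quotient by the action of the symmetric group on the indices so that only a constant number of isomorphism classes of monomials contribute (each class has $O(n^3)$ representatives).

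The main obstacle is the final arithmetic combination: the simple counting $\sum_a v_{a,j}\geq 2$ and antisymmetry bound $\sum_{a,j}v_{a,j}\leq \binom{n}{2}$ alone collide only for small $n$, so to produce the contradiction uniformly in $n$ one must squeeze more out of the equality $\mathit{lower}$ by mixing the $2$-step derivation above with a transitivity lift that propagates the in-degree bound along the order. This is where the equality (rather than merely at-least-one) form is essential, and where the rank jumps from $\Theta(n)$ in the $\mathit{lower}$-as-inequality setting down to a constant.
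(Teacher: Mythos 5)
Your proposal is not a complete proof, and you say so yourself in the last paragraph: the ``careful arithmetic combination'' that would produce the final contradiction is exactly the step you leave open, and the fallback counting you sketch (bounding $\sum_a v_{a,j}$ below by $2$, and $\sum_{a,j}v_{a,j}$ above by $\binom{n}{2}$) gives $2n\le\binom{n}{2}$, which is true for all $n\ge 5$, so there is no collision for any interesting $n$. There is also a latent degree problem: establishing $w_{a,j}+\sum_b w_{a,b}w_{b,j}\le v_{a,j}$ from $xy=0,\ x\le z,\ y\le z$ with $x=w_{a,j}$, $y=\sum_b w_{a,b}w_{b,j}$ (degree $2$), $z=v_{a,j}$ naturally goes through a lift $z(1-x)(1-y)\ge 0$ of degree $4$, i.e.\ rank $3$, so even if you closed it this route would not a priori certify rank $2$.

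The missing idea, and the heart of the paper's argument, is to exploit a \emph{symmetric} point of $\mathcal{P}^{\OP}_2$ (obtained by averaging over permutations of $[n]$) so that all isomorphic product values coincide. From the \textit{lower} lift $w_{i,j}v_{i,j}+(n-2)w_{k,j}v_{i,j}=v_{i,j}$, a further lift by $w_{k,j}$, and the vanishing $w_{k,j}w_{i,j}v_{i,j}=0$ (all of which you essentially have), the $(n-2)$ factor forces $w_{k,j}v_{i,j}=0$ for $i\neq k$; this is a genuinely stronger ``uniqueness spreads from $w$ to $v$'' fact that you never derive. Combining it with $w_{k,j}v_{k,j}=w_{k,j}$ and summing with \textit{lower} yields the \emph{equality} $\sum_m v_{m,j}=1$ at degree $\le 3$, and repeating the same symmetric manipulation with $v$'s gives $v_{i,k}v_{j,k}=0$ and hence $v_{i,j}v_{j,k}=0$; lifting $\sum_m v_{m,j}=1$ by $v_{j,x}$ then kills every $v_{j,x}$, flatly contradicting $\sum_m v_{m,j}=1$. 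So where you try to count $1$-step and $2$-step witnesses and find there is nothing to push against, the paper instead converts witness uniqueness into the hard constraint $\sum_m v_{m,j}=1$ and contradicts that directly; both routes agree that the equality form of \textit{lower} and the averaging over $S_n$ are what make the rank collapse, but only the second route actually closes.
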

\begin{proof}
	Note that if the polytope $\mathcal{P}^{\OP}_2$ is nonempty there must exist a point where any isomorphic variables are given the same value. We can find  such a point by averaging an asymmetric valuation over all permutations of $[n]$.\\
	So suppose towards a contradiction there is such a symmetric point. \textcolor{black}{First note $v_{i,i}=w_{i,i}=0$ by \textit{self} and \textit{impl}.}
	We start by lifting the $j$th instance of \textit{lower} by $v_{i, j}$ to get
	\[
		w_{i, j}v_{i, j} + \sum_{k \neq i, j}  w_{k, j} v_{i, j} = v_{i, j}.
	\]
	Equating (by symmetry {\color{black} with respect to $k$}) the product terms $w_{k, j} v_{i, j}$ this is actually
	\[
	w_{i, j}v_{i, j} + (n-2)  w_{k, j} v_{i, j} = v_{i, j}.
	\]
	Lift this by $w_{k, j}$ to get
	\[
	w_{k, j} w_{i, j}v_{i, j} + (n-2)  w_{k, j} v_{i, j} = w_{k, j} v_{i, j}.
	\]
	We can delete the leftmost product term by proving it must be $0$. Let us take an instance of \textit{lower} lifted by $w_{k, j}v_{i, j}$ for any $k \neq i,j$ along with an instance of monotonicity $w_{k, j}w_{m, j} v_{i, j} \geq 0$ for every $m \neq j,k$:
	\begin{gather}
	w_{k, j}v_{i, j} \left(1 - \sum_{m \neq j} w_{m, j}\right)  + \sum_{m \neq j,k,i} w_{k, j}w_{m, j} v_{i, j} \nonumber\\
	= - \sum_{m \neq k, j} w_{k, j}w_{m, j} v_{i, j} + \sum_{m \neq j,k,i} w_{k, j}w_{m, j} v_{i, j} \nonumber\\
	= - w_{k, j}w_{i, j} v_{i, j} \label{multwitnesses}. 
	\end{gather}
	The left hand side of this equation is greater than $0$ so we can deduce $w_{k, j}w_{i, j} v_{i, j}=0$.
	
	This results in 
	\[
	(n-2) w_{k, j} v_{i, j} = w_{k, j} v_{i, j} \quad \text{which is} \quad w_{k, j} v_{i, j} = 0.
	\]
	We lift \textit{impl} by $w_{i, j}$ to obtain $w_{i, j} \leq w_{i, j}v_{i, j}$. Monotonicity gives us the
	opposite inequality and we can proceed as if we had the equality $w_{k, j} v_{k, j} = w_{k, j}$ {\color{black} (as we are using equality as shorthand for inequality in both directions)} .\\
	So repeating the derivation of $w_{k, j} v_{i, j} = 0$ for every $i \neq k$ and then adding  $w_{k, j} v_{k, j} = w_{k, j}$ gets us $\sum_m w_{k, j} v_{m, j} = w_{k, j}$. Repeating this again for every $k$ and summing up gives
	\[
	 0 = \sum_{k,m} w_{k, j} v_{m, j} - \sum_k w_{k, j} = \sum_{k,m}  w_{k, j} v_{m, j} - 1
	\]
	with the last equality coming from the addition of the positive \textit{lower} instance $\sum_{k} w_{k, j} - 1 = 0$. Finally adding the lifted \textit{lower} instance $v_{m, j} - \sum_k w_{k, j} v_{m, j}\textcolor{black}{=0}$ for every $m$ gives
	\begin{equation} \label{ordersum}
	\sum_m v_{m, j} = 1.
	\end{equation}
	By lifting the \textit{trans} axiom $v_{i, k} - v_{i, j} - v_{j, k} + 1 \geq 0$ by $v_{j, k}$ we get
	\begin{equation} \label{zeros}
	v_{i, k}v_{j, k} - v_{i, j} v_{j, k}  \geq 0.
	\end{equation}
\textcolor{black}{
	Now, due to a manipulation similar to \Cref{multwitnesses} using \Cref{ordersum}
	\begin{gather}
	v_{k, j}v_{i, j} \left(1 - \sum_{m \neq j} v_{m, j}\right)  + \sum_{m \neq j,k,i} v_{k, j}v_{m, j} v_{i, j} \nonumber\\
	= - \sum_{m \neq k, j} v_{k, j}v_{m, j} v_{i, j} + \sum_{m \neq j,k,i} v_{k, j}v_{m, j} v_{i, j} \nonumber\\
	= - v_{k, j}v_{i, j} v_{i, j}\\
	= - v_{k, j}v_{i, j} \label{multwitnesses2}. 
	\end{gather}
	}

	 Thus,	$v_{i, k}v_{j, k}$ must be zero whenever $i \neq j$. Along with \Cref{zeros} we derive $v_{i, j} v_{j, k} = 0$.
	 {Noting $v_{i, j}v_{j, i}=0$ follows from \textit{trans} and \textit{self}, we lift}
	\Cref{ordersum} by $v_{j, x}$ for some $x$ \textcolor{black}{to} get
	\[
	 v_{j, x}\sum_m v_{m, j} = \sum_{m \neq x, j} v_{m, j}v_{j, x} = v_{j, x}
	\]
	where we know the left hand side is zero (\Cref{zeros}). Thus we can derive $v_{i, j} = 0$ for any $i$ and $j$, resulting in a contradiction when combined with \Cref{ordersum}.
	\end{proof}
	Before we derive our corollary, let us explicitly give the \SA\ axioms of $\bOP$.
	\[
	\begin{array}{c}
	\mathit{self}: \nu_{i, i} = 0  \quad \forall \; i \in n\\
	\mathit{trans}: \nu_{i, k} - \nu_{i, j} - \nu_{j, k} + 1 \geq 0 \quad \forall \; i, j, k \in [n] \\
	\mathit{impl}: \sum_{i\in [\log n]} \omega^{1-a_i}_{i,j} + \nu_{j,a} \geq 0 \quad \forall \; j \in [n] \\
		\mbox{where $a_1\ldots a_{\log n} = \mathrm{bin}(a)$} \\
	\end{array}
	\]
	\begin{corollary}
		The binary encoding of the Ordering Principle, $\bOP$, has \SA\ rank at most $2\log n$ and \SA\ size at most polynomial in $n$.
			\label{corollary:9}
	\end{corollary}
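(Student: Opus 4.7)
The proof is a direct consequence of Theorem~\ref{theorem:8} combined with Lemma~\ref{lemma:binconv}. I would begin by observing that the Ordering Principle can be written as a $\Pi_2$ first-order sentence with no finite models: its negation asserts the transitivity of a binary relation together with the existence, for every element $j$, of a strictly smaller element $i$, and both are expressible in $\Pi_2$ form. Hence Lemma~\ref{lemma:binconv} applies to this principle.

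Next, Theorem~\ref{theorem:8} supplies an \SA\ refutation of the unary encoding of $\OP$ with equalities of rank at most $2$ and size polynomial in $n$. I would then invoke Lemma~\ref{lemma:binconv} with $r=2$ and $s$ polynomial in $n$, which immediately yields an \SA\ refutation of $\bOP$ of rank at most $2\log n$ and polynomial size, which are exactly the two bounds claimed in the corollary.

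The only point requiring a moment's verification is that the axiomatization of $\bOP$ displayed immediately before the corollary matches the generic binary encoding assumed by Lemma~\ref{lemma:binconv}. This is routine: under the substitution of each witness variable $w_{a,j}$ of the unary-with-equalities encoding by the product $\omega^{a_1}_{1,j}\cdots\omega^{a_{\log n}}_{\log n,j}$, where $a_1\ldots a_{\log n}=\mathrm{bin}(a)$, the equality $\sum_{a\in[n]} w_{a,j}=1$ unfolds into an identity derivable at rank $\log n$ from the negation-of-equality lifts (\ref{eq:LinSA2}), while the implication $v_{a,j}\geq w_{a,j}$ becomes the $\mathit{impl}$ axiom of $\bOP$ modulo the square axioms $\omega_{\ell,j}\bar\omega_{\ell,j}=0$. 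No serious obstacle remains; essentially all the work has already been carried out in Theorem~\ref{theorem:8} and Lemma~\ref{lemma:binconv}.
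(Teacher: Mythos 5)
Your proof is correct and takes essentially the same route as the paper, whose argument for Corollary~\ref{corollary:9} is literally "Immediate from Lemma \ref{lemma:binconv}" applied to the rank-2, polynomial-size refutation from Theorem~\ref{theorem:8}. Your additional sanity check that the displayed $\bOP$ axioms match the output of the generic binary conversion is welcome but not required; note only a small notational slip in your last paragraph, where the subscripts on $w$ and the $\omega$-product are transposed relative to the paper's convention of $w_{x,a}\mapsto\omega^{a_1}_{x,1}\cdots\omega^{a_{\log n}}_{x,\log n}$.
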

	\begin{proof}
		Immediate from Lemma \ref{lemma:binconv}.
	\end{proof}

\section{\SA+Squares}
\label{sec:SA+AS}
In this section we consider a proof system, \SA+Squares, based on inequalities of multilinear polynomials. We now consider axioms as degree-1 polynomials in some set of variables and refutations as polynomials in those same variables. Then this system is gotten from \SA\ by allowing addition of (linearised) squares of polynomials. In terms of strength this system will be strictly stronger than \SA\ and at most as strong as Lasserre (also known as Sum-of-Squares), although we do not at this point see \textcolor{black}{an exponential} separation between \SA+Squares and Lasserre.
See \cite{Lasserre2001,laurent01comparison,SoS-survey} for more on the Lasserre proof system \textcolor{black}{and \cite{Lauria2017} for tight degree lower bound results}.

Consider the polynomial $w_{i,j} v_{i,j} - w_{i,j} v_{i,k}$. The square of this is 
\[
w_{i,j} v_{i,j}w_{i,j} v_{i,j} + w_{i,j} v_{i,k} w_{i,j} v_{i,k} - 2 w_{i,j} v_{i,j} w_{i,j} v_{i,k}.
\]
Using idempotence this linearises to $w_{i,j} v_{i,j} + w_{i,j} v_{i,k} - 2 w_{i,j} v_{i,j} v_{i,k}$. Thus we know that this last polynomial is non-negative for all $0/1$ settings of the variables.\\
A \emph{degree-$d$} \SA+Squares refutation of a set of linear inequalities (over terms) $q_1 \geq 0, \ldots, q_x \geq 0$ is an equation of the form
\begin{equation} \label{eq:SA+Sref}
 \sum_{i = 1}^x p_i q_i + \sum_{i = 1}^y r_i^2 = -1
\end{equation}
where the $p_i$ are polynomials with nonnegative coefficients and the degree of the polynomials $p_i q_i, r_i^2$ is at most $d$. 
We want to underline that we now consider a (product) term like $w_{i,j} v_{i,j} v_{i,k}$ as a product of its constituent variables, that is genuinely a term in the sense of part of a polynomial. This is opposed to the preceding sections in which we viewed it as a single variable {\color{black} $Z_{w_{i,j} \wedge v_{i,j} \wedge v_{i,k}}$}.
The translation from the degree discussed here to \SA\ rank previously introduced may be paraphrased by ``$\mathrm{rank}=\mathrm{degree}-1$''.

We note that the unary $\PHP^{n+1}_n$ becomes easy in this stronger
proof system (see, e.g., Example 2.1 in \cite{russians}) while \textcolor{black}{we shall see that} the $\LOP$ remains hard {\color{black} (in terms of degree)}. The following is based on Example 2.1 in \cite{russians}.
\begin{theorem}
	The $\BinPHP^{n+1}_n$ has an $\SA+\mathrm{Squares}$ refutation of degree $2 \log n + 1$ and size $O(n^3)$.
	\label{thm:bin-php-sa-squares}
\end{theorem}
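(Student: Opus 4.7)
My plan is to adapt the sum-of-squares refutation of the unary $\PHP^{n+1}_n$ from Example~2.1 of~\cite{russians} to the binary setting by identifying, for each pigeon $i\in[n+1]$ and each hole $a\in[n]$ with $\mathrm{bin}(a)=a_1\ldots a_{\log n}$, the event ``pigeon $i$ is placed in hole $a$'' with the degree-$\log n$ monomial $P_{i,a}:=\prod_{j=1}^{\log n}\omega_{i,j}^{a_j}$. These $P_{i,a}$ will play the role of the unary pigeon-in-hole variables, and the refutation will contradict a double count of $\sum_{i,a}P_{i,a}$, which I will force to be both exactly $n+1$ (from the pigeon side) and at most $n$ (from the hole side).

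The pigeon side is essentially a polynomial identity: using the negation axioms $\omega_{i,j}+\overline{\omega}_{i,j}=1$, the telescoping $\prod_j(\omega_{i,j}+\overline{\omega}_{i,j})=1$ gives $\sum_{a\in[n]}P_{i,a}=1$ for every pigeon $i$, and summing over all $n+1$ pigeons yields $\sum_{i,a}P_{i,a}=n+1$. The hole side requires first deriving disjointness, namely $P_{i,a}P_{i',a}=0$ for every $i\neq i'$ and every $a$. To do this I multiply the binary PHP axiom $\sum_{j}\omega_{i,j}^{1-a_j}+\sum_{j}\omega_{i',j}^{1-a_j}\geq 1$ by $P_{i,a}P_{i',a}$; the left-hand side collapses to $0$ after multilinearisation, since every summand carries a factor of the form $\omega_{i,j}^{a_j}\omega_{i,j}^{1-a_j}=\omega_{i,j}^{a_j}(1-\omega_{i,j}^{a_j})=0$. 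This yields $-P_{i,a}P_{i',a}\geq 0$, and the reverse inequality comes for free from the linearised square $(P_{i,a}P_{i',a})^{2}$, which equals $P_{i,a}P_{i',a}$ under idempotence and is therefore non-negative.

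With disjointness in hand, the hole side follows the classical SoS pattern: for each hole $a$ I use the linearisation of $\bigl(\sum_{i}P_{i,a}-1\bigr)^{2}\geq 0$, namely $1-\sum_{i}P_{i,a}+2\sum_{i<i'}P_{i,a}P_{i',a}\geq 0$, and eliminate the cross-terms via the previous step to conclude $\sum_{i}P_{i,a}\leq 1$. Summing over all $n$ holes gives $\sum_{i,a}P_{i,a}\leq n$, contradicting the pigeon-side identity. The degree bound $2\log n+1$ is attained exactly once, at the axiom-by-$P_{i,a}P_{i',a}$ multiplications of the disjointness step; all other ingredients are of degree at most $2\log n$.

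The only item requiring real care is the size count, and this is where I expect to have to be most attentive. Each of the $n$ squares $\bigl(\sum_{i}P_{i,a}-1\bigr)^{2}$ contributes $O(n^{2})$ linearised monomials; the $\binom{n+1}{2}\cdot n=O(n^{3})$ axiom-multiplications of the disjointness step each collapse, after idempotence, to the single monomial $-P_{i,a}P_{i',a}$ on the left-hand side; and the $n+1$ pigeon identities together contribute $O(n^{2})$ monomials. Summing these gives the advertised size of $O(n^{3})$, provided one verifies that the formal equation of the shape~\eqref{eq:SA+Sref} can indeed be assembled with precisely these ingredients and non-negative coefficients.
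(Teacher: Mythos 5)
Your proposal is correct and follows essentially the same route as the paper: identify pigeon-in-hole events with the monomials $P_{i,a}=\prod_j\omega_{i,j}^{a_j}$, square $1-\sum_i P_{i,a}$ for each hole, cancel the cross-terms by lifting the binary PHP axiom by $P_{i,a}P_{i',a}$, and play this off against the per-pigeon telescoping identity $\sum_a P_{i,a}=1$. The only cosmetic difference is that you derive the full equality $P_{i,a}P_{i',a}=0$ (also invoking the square $(P_{i,a}P_{i',a})^2$ for the reverse direction) where the paper uses only the one-sided inequality $0\geq P_{i,a}P_{i',a}$, which is already enough to kill the non-negative cross-terms.
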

\begin{proof}
	For short let $m = n + 1$ denote the number of pigeons. We begin by squaring the polynomial
	\[
	1-\sum_{i=1}^{m} \prod_{j = 1}^{\log n} \omega_{i,j}^{a_j}
	\]
	to get the degree $2 \log n$, size quadratic in $m$ inequality
	\begin{equation}
	1 - 2 \sum_{i=1}^{m} \prod_{j = 1}^{\log n} \omega_{i,j}^{a_j} + \sum_{1\leq i, i' \leq m} \left( \prod_{j = 1}^{\log n} \omega_{i,j}^{a_j} \right) \left( \prod_{j = 1}^{\log n} \omega_{i',j}^{a_j} \right)  \geq0\label{eq:hole-squared}
	\end{equation}
	for every hole $a \in [n]$. On the other
	hand, by lifting each axiom 
	\[
	\sum_{j=1}^{\log n} \omega_{i,j}^{1 - a_j} + \sum_{j=1}^{\log n}\omega_{i',j}^{1 - a_j} \geq 1 \qquad \text{(whenever $i \neq i'$)}
	\]
	by $\left( \prod_{j = 1}^{\log n} \omega_{i,j}^{a_j} \right) \left( \prod_{j = 1}^{\log n} \omega_{i',j}^{a_j} \right)$ we find $0 \geq \left( \prod_{j = 1}^{\log n} \omega_{i,j}^{a_j} \right) \left( \prod_{j = 1}^{\log n} \omega_{i',j}^{a_j} \right)$, in degree $2 \log n + 1$. Adding these inequalities to (\ref{eq:hole-squared}) gives
	\begin{equation*}
	1 -  \sum_{i=1}^{m} \prod_{j = 1}^{\log n} \omega_{i,j}^{a_j} \geq 0
	\end{equation*}
	in size again quadratic in $m$. Iterating this for every hole $a \in [n]$ we find 
	\begin{equation}\label{eq:holeineq}
	n - \sum_{a = 1}^n \sum_{i=1}^{m} \prod_{j = 1}^{\log n} \omega_{i,j}^{a_j} \geq 0
	\end{equation}
	in cubic size.
	
	Note that for any pigeon $i \in [m]$, we can find in SA the linearly sized equality
	\begin{equation} \label{eq:pigeneqmac}
	\sum_{a = 1}^n \prod_{j = 1}^{\log n} \omega_{i, j}^{a_j} = 1.
	\end{equation}
	in size linear in $n$.
	
	This is done by induction on the number of bits involved (the range of $j$ in the summation). For the base case of just $j = 1$ we clearly have
	\[
	\omega_{i, 1} + (1 - \omega_{i, 1}) = 1.
	\]
	{\color{black}  Now suppose that for} $k < \log n$, we have $\sum_{a \in [2^k]} \prod_{j = 1}^k \omega_{i, j}^{a_j} = 1$. Multiplying both sides by $1 = \omega_{i, (k + 1)} + (1 - \omega_{i, (k + 1)})$ gets the inductive step. The final term is of size $O(2^{\log n}) = O(n)$.
	
	Summing \ref{eq:pigeneqmac} for every such hole $i$ we find
	\begin{equation}\label{eq:pigineq}
	\sum_{i=1}^{m} \sum_{a = 1}^n \prod_{j = 1}^{\log n} \omega_{i, j}^{a_j} \geq m.
	\end{equation}
	Adding \ref{eq:pigineq} to \ref{eq:holeineq}, we get the desired contradiction,
	$n - m \geq 0$.
\end{proof}

This last theorem, combined with the exponential SA size lower bound given in Theorem \ref{theorem:size-bound}, shows us that SA+Squares is exponentially separated from SA in terms of size.\\

We now turn our attention to $\LOP$, whose \SA\ axioms we reproduce to refresh the reader's memory.
\[
	\begin{array}{c}
	\mathit{self}: v_{i, i} = 0  \quad \forall \; i \in n\\
	\mathit{trans}: v_{i, k} - v_{i, j} - v_{j, k} + 1 \geq 0 \quad \forall \; i, j, k \in [n] \\
	\mathit{impl}: v_{i, j} - w_{i, j} \geq 0 \quad \forall \; i, j \in [n]  \\
	\mathit{total}: v_{i,j} + v_{j,i} -1 \geq 0 \quad   \forall \; i \neq j \in [n] \\
	\mathit{lower}: \sum_{i \in [n]} w_{i, j} - 1 \geq 0 \quad \forall \; j \in [n]
	\end{array}
	\]
We give our lower bound for the unary $\LOP$ by producing a linear function $\val$ (which we will call a \emph{valuation}) from terms into $\mathbb{R}$ such that
\begin{enumerate}
 \item for each axiom $p \geq 0$ and every term $X$ with $deg(Xp) \leq d$ we have $\val(X p) \geq 0$, and 
 \item we have $\val(r^2) \geq 0$ whenever $deg(r^2) \leq d$.
\item $\val(1)=1$.
\end{enumerate}
The existence of such a valuation clearly implies that a degree-$d$ \SA+Squares
refutation cannot exist, as it would result in a contradiction when applied to both sides of \cref{eq:SA+Sref}. 

To verify that $\val(r^2) \geq 0$ whenever $deg(r^2) \leq d$ we show that the so-called \emph{moment-matrix} $\mathcal{M}_{\val}$ is positive semidefinite. The degree-$d$ moment matrix is
defined to be the symmetric square matrix whose rows and columns are
indexed by terms of size at most $d/2$ and each entry is the valuation of the product of the two terms indexing that entry.
Given any polynomial $\sigma$ of degree at most $d/2$ let $c$ be its coefficient vector. Then if $\mathcal{M}_v$ is positive semidefinite:
\[
\val(\sigma^2) = \sum_{deg(T_1), deg(T_2) \leq d/2} {c}(T_1) {c}(T_2) v(T_1 T_2) = {c}^\top \mathcal{M}_v {c} \geq 0.
\]
(For more on this see e.g. \cite{Lasserre2001}, section 2.)\\
\begin{theorem}
\label{thm:SA2lbLOP}
	There is no $\SA+\mathrm{Squares}$ refutation of the (unary) $\LOP$ with degree at most $(n-3)/2$.
\label{theorem:11}
\end{theorem}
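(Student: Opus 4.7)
The plan is to construct a pseudo-expectation $\val$ on terms of degree at most $n-3$ satisfying the three conditions listed immediately before the theorem statement: $\val(1)=1$, non-negativity on lifted axioms, and positive semidefiniteness of the degree-$(n-3)$ moment matrix $\mathcal{M}_\val$. By that discussion this precludes any $\SA{+}\mathrm{Squares}$ refutation of degree $(n-3)/2$.

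I would construct $\val$ by the ``phantom minimum'' paradigm familiar from rank lower bounds for the ordering principle. For a term $T$, let $S(T)\subseteq [n]$ be the set of indices mentioned by $T$. The degree hypothesis gives $|S(T)|\leq n-3$, so there is at least one index outside $S(T)$. Define
\[
\val(T) \;=\; \mathbb{E}_{*,\pi}\bigl[\chi_{*,\pi}(T)\bigr],
\]
where $(*,\pi)$ ranges uniformly over pairs consisting of a ``phantom'' element $*\in[n]$ and a linear ordering $\pi$ of $[n]\setminus\{*\}$, and $\chi_{*,\pi}$ is the $\{0,1\}$-valuation that sets $v_{i,j}=1$ iff $i<_\pi j$ for $i,j\neq *$; makes $*$ the unique minimum by $v_{*,j}=1$, $v_{i,*}=0$, $v_{*,*}=0$; and picks $w_{*,j}=1$ and $w_{i,j}=0$ for $i\neq *$. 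Monomials that reference $*$ in a way forbidden by the above receive value $0$; the definition is genuine because the average is over all choices of $*$ regardless of whether $*\in S(T)$.

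Next I would verify the axiom condition. Each axiom of $\LOP$ (\emph{self}, \emph{trans}, \emph{impl}, \emph{total}, \emph{lower}) mentions at most three indices; for any term $X$ with $\deg(X\cdot A)\leq n-3$, the lifted expression $X\cdot A$ mentions at most $n-3$ indices. Condition on any $(*,\pi)$ with $*\notin S(X\cdot A)$: the $\{0,1\}$ assignment $\chi_{*,\pi}$ satisfies $A$ \emph{pointwise} on these indices (the crucial case is \emph{lower} at index $j$, witnessed by $w_{*,j}=1$), so $\chi_{*,\pi}(X\cdot A)\geq 0$. Pairs $(*,\pi)$ with $*\in S(X\cdot A)$ contribute $0$ by the convention above. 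Averaging preserves non-negativity, giving $\val(X\cdot A)\geq 0$; the property $\val(1)=1$ is immediate.

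The main obstacle is PSD-ness of $\mathcal{M}_\val$, and this is where the degree threshold $(n-3)/2$ is used sharply. For any two index terms $T_1,T_2$ of degree at most $(n-3)/2$ one has $|S(T_1)\cup S(T_2)|\leq n-3$, so there always exists a phantom $*\notin S(T_1)\cup S(T_2)$; for such $*$, the map $T\mapsto \chi_{*,\pi}(T)$ is multiplicative on terms supported outside $\{*\}$, hence the contribution of $(*,\pi)$ to $\mathcal{M}_\val$ is the rank-one PSD matrix $\mathbf{v}_{*,\pi}\mathbf{v}_{*,\pi}^{\top}$, where $\mathbf{v}_{*,\pi}$ is the vector of values $\chi_{*,\pi}(T)$ indexed by $T$ of degree $\leq(n-3)/2$. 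The pairs where $*$ lies in some $S(T_i)$ force the corresponding row/column entries to $0$, which does not destroy PSD-ness since the zero entries appear along whole rows/columns of the rank-one summand. Summing the contributions of all $(*,\pi)$ exhibits $\mathcal{M}_\val$ as a non-negative combination of PSD matrices, hence PSD. The delicate point to handle carefully is checking that the ``zero along forbidden rows'' bookkeeping is consistent across the averaging, which I would verify by an explicit block decomposition of $\mathcal{M}_\val$ indexed by the support sets of monomials, reducing PSD-ness to that of each block separately.
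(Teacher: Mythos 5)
Your overall framework (construct a valuation with $\val(1)=1$, non-negativity on axiom lifts, and a PSD moment matrix) matches the paper's, but the specific valuation you propose does not satisfy the axiom-lift condition, and this gap cannot be repaired by a simple bookkeeping convention.

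The problem is that your $\chi_{*,\pi}$ is a \emph{total} $\{0,1\}$ assignment, and since $\LOP$ is unsatisfiable, every total assignment must falsify at least one axiom. With your rule $w_{*,j}=1$ for all $j$, the falsified axiom is $\mathit{impl}_{*,*}\colon v_{*,*}-w_{*,*}\geq 0$, which evaluates to $-1$. Now take the lift $X\cdot\mathit{impl}_{x,x}$ with $X=w_{x,y}$ (or $v_{x,y}$), which has degree $2$:
\[
\val\bigl(w_{x,y}\,(v_{x,x}-w_{x,x})\bigr)
= -\Pr_{*,\pi}\bigl[\,*=x \text{ and } \chi_{*,\pi}(w_{x,y})=1\,\bigr]
= -\tfrac{1}{n} \;<\; 0,
\]
because $\chi_{x,\pi}(w_{x,y})=1$ for every $\pi$. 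The pairs $(*,\pi)$ with $*\in S(X\cdot A)$ do not ``contribute $0$ by convention''; they contribute genuinely negative mass, and you cannot discard them without breaking the moment-matrix structure you rely on elsewhere. If instead you set $w_{*,*}=0$, then $\mathit{lower}_*$ is falsified and $\val\bigl(v_{j,k}\cdot\mathit{lower}_j\bigr)=-1/n<0$ by the same reasoning. A related error is the claim that $X\cdot A$ mentions at most $n-3$ indices: $\mathit{lower}_j$ already mentions all $n$ indices, so the condition ``$*\notin S(X\cdot A)$'' is vacuous for that axiom family. The PSD part of your argument is actually fine (and simpler than you feared: any total $\{0,1\}$ assignment gives a rank-one $V_\sigma V_\sigma^\top$ contribution automatically, with no ``zeroed rows'' to worry about), but this is moot because the valuation itself is invalid.

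The paper's construction avoids the phantom element entirely: $\val(T)$ is the probability that $T$ is consistent with a uniformly random linear order on the actual $n$ elements, with $w_{x,y}$ identified with $v_{x,y}$. Then $\mathit{self}$, $\mathit{trans}$, $\mathit{total}$, $\mathit{impl}$ hold pointwise for every linear order, PSD follows from $\mathcal{M}_\val = \frac{1}{n!}\sum_\sigma V_\sigma V_\sigma^\top$, and the $\mathit{lower}$ lift is handled without any single privileged witness: since $T$ mentions $\leq n-3$ elements, there exist two indices $y_1\neq y_2$ outside $S(T)\cup\{x\}$, and conditioned on $T$ each lands below $x$ with probability exactly $\tfrac12$, giving $\val(Tw_{y_1,x})+\val(Tw_{y_2,x})=\val(T)$. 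The crucial difference is that the ``witness'' elements are chosen \emph{a posteriori} per lifted instance, whereas your phantom $*$ is chosen \emph{a priori} and forced into the domain, where it then violates an axiom at or about itself.
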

\begin{proof}
For each term $T$, let $\val\left(T\right)$ be the probability
that $T$ is consistent with a permutation on the $n$ elements taken
uniformly at random or, in other words, the number of permutations
consistent with $T$ divided by $n!$. Here we view $w_{x,y}$ as equal to $v_{x,y}$. This valuation trivially satisfies
the lifts of the \textit{self}, \textit{trans} and \textit{total} axioms as they are satisfied by each permutation (linear order). It satisfies the lifts of the \textit{impl} axioms by construction. We now claim that the lifts of the \textit{lower} axioms (those containing only $w$ variables) 
of degree up to $\frac{n-3}{2}$ are also satisfied
by $v\left(.\right)$. Indeed, let us consider the lifting by $T$
of the \textit{lower} axiom for $x$
\begin{equation}
\sum_{y=1}^{n}T\textcolor{black}{w}_{x,y}\geq T.\label{eq:there is smaller}
\end{equation}
Since $T$ mentions at most $n-3$ elements, there must be at least
two $y_{1}\neq y_{2}$ that are different from all of them and from
$x$. For any permutation that is consistent with $T$, the probability
that each of the $y_{1}$ and $y_{2}$ is smaller than $x$ is precisely
a half, and thus
\[
\val\left(T\textcolor{black}{w}_{x,y_{1}}\right)+\val\left(T\textcolor{black}{w}_{x,y_{2}}\right)=\val\left(T\right).
\]
Therefore the valuation of the LHS of (\ref{eq:there is smaller})
is always greater than or equal to the valuation of $T$.

Finally, we need to show that the valuation is consistent with the non-negativity of (the
linearisation of) any squared polynomial. It is easy to see that the moment matrix for $\val$ can be written as
\[
\frac{1}{n!}\sum_{\sigma}V_{\sigma}V_{\sigma}^{T}
\]
where the summation is over all permutations on $n$ elements and
for a permutation $\sigma$, $V_{\sigma}$ is its characteristic vector.
The characteristic vector of a permutation $\sigma$ is a Boolean
column vector indexed by terms and whose entries are $1$ or $0$
depending on whether the respective index term is consistent or not
with the permutation $\sigma$. Clearly the moment matrix is positive semidefinite being
a sum of (rank one) positive semidefinite matrices.
\end{proof}
The previous theorem is interesting because a {\color{black} degree} upper bound in Lasserre of order $\sqrt{n}\log n$ is known for $\LOP$ \cite{potechin2020sum}. It is proved for a slightly different formulation of $\LOP$ from ours, but it is {\color{black} readily seen to be equivalent to our formulation and we provide the translation in the appendix}. Thus, Theorem \ref{theorem:11}, together with \cite{potechin2020sum}, shows a quadratic rank separation between \SA+Squares and Lasserre.

\section{Contrasting unary and binary encodings}
\label{sec:unbin}

To work with a more general theory in which to contrast the complexity of refuting the binary and unary versions of combinatorial principles, following Riis \cite{SorenGap} we consider principles which are expressible as first order formulas with no finite model in $\Pi_2$-form,  i.e.  as $\forall \vec x \exists \vec w \varphi(\vec x,\vec w)$ where $\varphi(\vec x,\vec y)$ is a formula built on a family of relations $\vec R$. 
For example, we already met the Ordering Principle, \textcolor{black}{the version of which we will give here} states that a finite partial order has a maximal element. Its negation can be expressed in $\Pi_2$-form as:  
\[\forall x,y,z \exists w \ \neg R(x,x) \wedge (R(x,y) \wedge R(y,z) \rightarrow R(x,z)) \wedge R(x,w).\]
This can be translated into a unsatisfiable CNF using a {\em unary encoding} of the witness, as already discussed in Section~\ref{sec:SAeqLOP}.

As a second example we consider  the Pigeonhole Principle which  states  that a total mapping from $[m]$ to 
$[n]$ has necessarily a collision when $m$ and $n$ are integers with  $m>n$. Following Riis \cite{SorenGap}, for $m=n+1$, the negation of its  relational form can be expressed as a $\Pi_2$-formula as 
$$\forall x,y,z \exists w \ \neg R(x,0) \wedge (R(x,z) \wedge R(y,z) \rightarrow x=y) \wedge R(x,w)$$ 
and its usual unary and  binary propositional  encoding \textcolor{black}{have already been introduced}.
%
Notice that in the case of Pigeonhole Principle, the existential witness $w$ to the type \emph{pigeon} is of the distinct type \emph{hole}. Furthermore, pigeons only appear on the left-hand side of atoms $R(x,z)$ and holes only appear on the right-hand side. 
For the Ordering Principle instead, the transitivity axioms effectively enforce the type of $y$ appears on both the left- and right-hand side of atoms $R(x,z)$. This accounts for why, in the case of the Pigeonhole Principle, we did not need to introduce any new variables to give the binary encoding, yet for the Ordering Principle a new variable $w$ appears.

\subsection{Binary encodings of principles versus their \textcolor{black}{unary} functional encodings}
\label{subsec:func}
Recall the unary functional encoding of a combinatorial principle $\mathrm{C}$, denoted $\mathrm{Un}$-$\mathrm{Fun}$-$\mathrm{C}(n)$, replaces the big clauses from $\mathrm{Un}$-$\mathrm{C}(n)$, of the form $v_{i,1} \vee \ldots \vee v_{i,n}$, with $v_{i,1} + \ldots + v_{i,n} = 1$, where addition is made on the natural numbers. This is equivalent to augmenting the axioms $\neg v_{i,j} \vee \neg v_{i,k}$, for $j \neq k \in [n]$.
\begin{lemma}
\label{lem:bin-fun}
Suppose there is a Resolution refutation of $\mathrm{Bin}$-$\mathrm{C}(n)$ of size $S(n)$. Then there is a Resolution refutation of $\mathrm{Un}$-$\mathrm{Fun}$-$\mathrm{C}(n)$ of size at most $n^2\cdot S(n)$.
\end{lemma}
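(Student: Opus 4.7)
The plan is to translate each clause in the given Resolution refutation of $\mathrm{Bin}$-$\mathrm{C}(n)$ into a clause over the unary variables and simulate each proof step. I would define a literal-level translation by $\omega_{i,j}^{b} \mapsto \bigvee_{a \in [n]:\, a_j = b} v_{i,a}$ and extend to clauses by replacing each literal with its translation, writing $C \mapsto C^*$. Two things must be verified: (i) every translated axiom is derivable from the axioms of $\mathrm{Un}$-$\mathrm{Fun}$-$\mathrm{C}(n)$ in a polynomial number of steps, and (ii) every binary cut can be simulated by $O(n^2)$ unary cuts.

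For (i), the translation of a typical binary axiom $\bigvee_{\ell}\bigvee_{j}\omega_{i_{\ell},j}^{1-c_{\ell,j}}$ simplifies to $\bigvee_{\ell}\bigvee_{b\neq c_{\ell}} v_{i_{\ell},b}$, because every $b \neq c_\ell$ differs from $c_\ell$ in at least one bit. This clause can be obtained from the corresponding unary axiom $\bigvee_{\ell}\neg v_{i_{\ell},c_{\ell}}$ of $\mathrm{Un}$-$\mathrm{Fun}$-$\mathrm{C}(n)$ by successively cutting each literal $\neg v_{i_{\ell},c_{\ell}}$ with the functional ``big'' clause $v_{i_{\ell},1}\vee\ldots\vee v_{i_{\ell},n}$, using one cut per literal.

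The crux is (ii). Consider a binary cut on $\omega_{i,j}$ with premises $C_1 = D_1 \vee \omega_{i,j}$, $C_2 = D_2 \vee \neg \omega_{i,j}$ and conclusion $D_1 \vee D_2$. After translation, the premises become $C_1^* = D_1^* \vee \bigvee_{a:\, a_j = 1} v_{i,a}$ and $C_2^* = D_2^* \vee \bigvee_{b:\, b_j = 0} v_{i,b}$, and I wish to derive $D_1^* \vee D_2^*$. For each of the $n/2$ values $b$ with $b_j = 0$, I would first build a bridging clause $D_1^* \vee \neg v_{i,b}$ by cutting $C_1^*$ successively against the uniqueness axioms $\neg v_{i,a} \vee \neg v_{i,b}$, one for each of the $n/2$ values of $a$ with $a_j = 1$, at a cost of $n/2$ cuts per $b$. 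This yields all $n/2$ bridging clauses in $O(n^2)$ cuts. Starting then from $C_2^*$, I would cut successively with each bridging clause, eliminating $v_{i,b}$ for every $b$ with $b_j = 0$, arriving at $D_1^* \vee D_2^*$ in $O(n)$ further cuts. Weakening steps in the binary proof translate to weakening steps in the unary proof without additional cost.

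Summing over the $S(n)$ inferences of the original refutation yields a Resolution refutation of $\mathrm{Un}$-$\mathrm{Fun}$-$\mathrm{C}(n)$ of size $O(n^2) \cdot S(n)$, giving the claimed bound $n^2 \cdot S(n)$ after absorbing constants. The hard part is (ii): a single binary cut implicitly exploits the single-witness nature of the binary encoding, and reconstructing this information in the unary setting requires a quadratic number of uniqueness axiom instances per cut.
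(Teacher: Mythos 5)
Your proof is correct, but it takes a genuinely different route from the paper's. The paper works in the ``upside-down'' decision-DAG / Prover--Adversary view of Resolution (introduced in Section~3 of the paper): it takes the decision DAG for $\mathrm{Bin\mbox{-}C}(n)$ and, at each node querying a bit $\omega_{i,j}$, expands a mini-tree of $O(n^2)$ unary queries that first locates the true witness $v_{i,k}$ (a ``comb'' of at most $n$ queries, using the at-least-one clause to guarantee one is answered true), and then confirms that all $v_{i,b}$ with the opposite $j$th bit are false (at most $n/2$ further queries, using at-most-one), before forgetting intermediate information and merging into the two records that correspond to the two answers of the binary query. You instead work at the syntactic proof level: you define a literal translation $\omega_{i,j}^b \mapsto \bigvee_{a : a_j=b} v_{i,a}$, derive each translated axiom from $\mathrm{Un\mbox{-}Fun\mbox{-}C}(n)$ in a constant number of cuts against the big clauses, and simulate each binary cut by $O(n^2)$ unary cuts via bridging clauses and the uniqueness axioms. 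The two approaches are essentially dual to one another and give the same $n^2 \cdot S(n)$ bound; the paper's is arguably more intuitive once the decision-DAG machinery is set up, whereas yours is more self-contained at the cost of some clause-level bookkeeping.

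One small gap you should close to make the argument airtight: the bridging clause $D_1^* \vee \neg v_{i,b}$ may be tautological if $v_{i,b}$ already appears in $D_1^*$ (this can happen when $D_1$ contains some $\omega_{i,j'}^{b_{j'}}$ with $j' \neq j$), and in general, cuts can eat literals that you want preserved when translations of different binary literals overlap on the same unary variable. The standard fix is to argue by induction that at each stage you derive a \emph{subclause} of (i.e., a clause subsuming) the intended translation rather than the translation exactly, and to skip bridging for any $b$ with $v_{i,b}$ already in $D_1^*\cup D_2^*$. Since a subclause of the empty clause is the empty clause, this yields a valid refutation. You should also note, for completeness, that any non-witness variables present in both encodings (such as the $\nu$-variables of $\mathrm{Bin\mbox{-}OP}_n$) translate by the identity map, which requires no extra work.
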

\begin{proof}
Take a decision DAG $\pi'$ for $\mathrm{Bin}$-$\mathrm{C}(n)$, where, without loss of generality, $n$ is even, and consider the point at which some variable $\nu_{i,j}$ is questioned. Each node in $\pi'$ will be expanded to a small tree in $\pi$, which will be a decision DAG for $\mathrm{Un}$-$\mathrm{Fun}$-$\mathrm{C}(n)$. The question ``$\nu_{i,j}?$'' in $\pi$ will become a sequence of questions $v_{i,1},\ldots,v_{i,n}$ where we stop the small tree when one of these is answered true, which must eventually happen. Suppose $v_{i,k}$ is true. If the $j$th bit of $k$ is $1$ we ask now all $v_{i,b_1},\ldots,v_{i,b_{\frac{n}{2}}}$, where $b_1,\ldots,b_{\frac{n}{2}}$ are precisely the numbers in $[n]$ whose $j$th bit is $0$. All of these must be false. Likewise, if the $j$th bit of $k$ is $0$ we ask all $v_{i,b_1},\ldots,v_{i,b_{\frac{n}{2}}}$, where $b_1,\ldots,b_{\frac{n}{2}}$ are precisely the numbers whose $j$th bit is $1$. All of these must be false. We now unify the branches on these two possibilities, forgetting any intermediate information. (To give an example, suppose $j=2$. Then the two outcomes are $\neg v_{i,1} \wedge \neg v_{i,3} \wedge  \ldots \wedge \neg v_{i,n-1}$ and $\neg v_{i,2} \wedge \neg v_{i,4} \wedge  \ldots \wedge \neg v_{i,n}$.) Thus, $\pi'$ gives rise to $\pi$ of size $n^2\cdot S(n)$ and the result follows.
\end{proof}

\subsection{The Ordering Principle in binary}

Recall the Ordering Principle whose binary formulation $\mathrm{Bin}$-$\mathrm{OP}_n$ we met in Section~\ref{sec:SAeqLOP}.
\begin{lemma}
$\mathrm{Bin}$-$\mathrm{OP}_n$ has refutations in Resolution of \textcolor{black}{size $O(n^3)$}.
\label{lem:bi-op-res}
\end{lemma}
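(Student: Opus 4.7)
The plan is to first translate the binary axioms into a unary-style witness clause, and then run a standard $O(n^3)$-size Resolution refutation of $\OP$ on top. Concretely, for each $j \in [n]$ I derive
\[
D_j \;:=\; \bigvee_{a \in [n]} \nu_{j,a},
\]
the analogue of the ``$j$ has a witness'' axiom, from the $n$ binary axioms $C_{j,a} = \bigvee_{i \in [\log n]} \omega^{1-a_i}_{i,j} \vee \nu_{j,a}$ indexed by $a \in [n]$. The derivation pairs up those $C_{j,a}, C_{j,a'}$ whose values $a, a'$ differ only in the last bit and resolves on $\omega_{\log n, j}$, halving the count while adding one $\nu$-literal; then pairs on the next bit; and so on for $\log n$ rounds, after which all $\omega_{\cdot, j}$-literals have vanished and $D_j$ remains. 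This costs $n/2 + n/4 + \cdots + 1 = n-1$ resolutions per $j$, hence $O(n^2)$ in total.

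Once $D_1, \ldots, D_n$ are in hand, together with the transitivity and irreflexivity axioms of $\bOP$ we are in the familiar setting of a Stalmarck-style $O(n^3)$-size Resolution refutation of the unary $\OP$. Fix an enumeration $a_1, \ldots, a_n$ of $[n]$ and set $T_m := \{a_1, \ldots, a_m\}$. By induction on $m$, for every $\ell \in T_m$ I derive
\[
K_{T_m, \ell} \;:=\; \bigvee_{b \in [n] \setminus T_m} \nu_{\ell, b},
\]
asserting that $\ell$ still admits a witness outside $T_m$. When $m = n$ this disjunction is empty, so $K_{[n], \ell} = \bot$ is the desired contradiction. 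The base case is $K_{\{a_1\}, a_1} = D_{a_1}$. For the inductive step from $T_{m-1}$ to $T_m = T_{m-1} \cup \{a_m\}$, first derive $K_{T_m, a_m}$ starting from $D_{a_m}$ by eliminating, for each $\ell \in T_{m-1}$, the literal $\nu_{a_m, \ell}$, using the already-derived $K_{T_{m-1}, \ell}$ together with the transitivity axioms $\neg \nu_{a_m, \ell} \vee \neg \nu_{\ell, b} \vee \nu_{a_m, b}$; each such elimination costs $O(n)$ resolutions. Then for each $\ell \in T_{m-1}$, upgrade $K_{T_{m-1}, \ell}$ to $K_{T_m, \ell}$ by the mirror-image move: eliminate $\nu_{\ell, a_m}$ using $K_{T_m, a_m}$ and transitivity, again in $O(n)$ steps.

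Step $m$ therefore costs $O(nm)$, and summing $m = 1, \ldots, n$ gives $O(n^3)$ for the $K$-derivations on top of the earlier $O(n^2)$ for the $D_j$-derivations, for a grand total of $O(n^3)$. The only minor bookkeeping is discarding spurious literals of the form $\nu_{a_m, a_m}$ that may appear inside the transitivity chains, which are absorbed at unit cost by the self axioms $\neg \nu_{i,i}$. I do not anticipate any conceptual obstacle; the main real work is keeping the index sets straight.
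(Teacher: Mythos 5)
Your proof is correct and reaches the same $O(n^3)$ bound via essentially the Stalmarck strategy, but it handles the binary encoding differently from the paper. The paper argues directly with a decision DAG (branching $1$-program): at stage $i$ it maintains a CNF record of the form $\neg\nu_{j,1}\wedge\ldots\wedge\neg\nu_{j,j-1}\wedge\neg\nu_{j,j+1}\wedge\ldots\wedge\neg\nu_{j,i}$ certifying that $j$ is maximal among $[i]$, and only at the very end, when $i=n$, does it expand a depth-$\log n$ tree on the bits $\omega_{j,1},\ldots,\omega_{j,\log n}$ so that each leaf falsifies one of the binary witness axioms. You instead preprocess the binary aspect away at the start: you resolve the $\omega$-variables out of the $n$ binary witness clauses for element $j$ by a $\log n$-round tournament, at cost $n-1$ resolutions per $j$, producing the wide unary clause $D_j=\bigvee_a\nu_{j,a}$; you then run a pure Stalmarck-style refutation on the $\nu$-variables only, via the invariants $K_{T_m,\ell}$. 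Your route is more modular --- it factors through a cheap explicit Resolution derivation of the unary-functional witness clauses from the binary ones, so everything after the first $O(n^2)$ steps is oblivious to the encoding --- while the paper's version is slightly leaner as a DAG because the $\omega$-questioning happens exactly once per leaf instead of at the very top. Both counts come out to $O(n^3)$ dag nodes, and your handling of the stray $\nu_{a_m,a_m}$ literals via irreflexivity is exactly the right patch.
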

\begin{proof}
We follow the well-known proof for the unary version of the Ordering Principle, from \cite{DBLP:journals/acta/Stalmarck96}. Consider the domain to be $[n]=\{1,\ldots,n\}$. At the $i$th stage of the decision DAG we will find a maximal element, ordered by $R$, among $[i]=\{1,\ldots,i\}$. That is, we will have a CNF record of the \emph{special} form 
\[\textcolor{black}{\neg \nu_{j,1} \wedge \ldots \wedge \neg  \nu_{j,j-1} \wedge \neg  \nu_{j,j+1} \wedge \ldots \wedge \neg  \nu_{j,i}}\]
for some $j \in [i]$. The base case $i=1$ is trivial. Let us explain the inductive step. From the displayed CNF record above we ask the question $ \nu_{j,i+1}?$ If $ \nu_{j,i+1}$ is true, then ask the sequence of questions $ \nu_{i+1,1},\ldots, \nu_{i+1,i}$, all of which must be false by transitivity \textcolor{black}{(the case $i=j$ uses irreflexivity too)}. Now, by forgetting information, we uncover a new CNF record of the special form. Suppose now $ \nu_{j,i+1}$ is false. Then we equally have a new CNF record again in the special form. Let us consider the size of our decision tree so far. There are $n^2$ nodes corresponding to special CNF records and navigating between special CNF records involves a path of length $n$, so we have a DAG of size $n^3$. Finally, at $i=n$, we have a CNF record of the form
\[\textcolor{black}{\neg \nu_{j,1} \wedge \ldots \wedge \neg \nu_{j,j-1} \wedge \neg  \nu_{j,j+1} \wedge \ldots \wedge \neg  \nu_{j,n}}.\]
Now we expand a tree questioning the sequence $\omega_{j,1},\ldots,\omega_{j,\log n}$, and discover each leaf labels a contradiction of the clauses of the final type. We have now added $n \cdot 2^{\log n}$ nodes, so our final DAG is of size at most $n^3+n^2$. 
\end{proof}




\section{Final remarks}
\label{sec:fin-rem}
\color{black}

\medskip

In this paper we started a systematic study of binary encodings of combinatorial principles in proof complexity.    Various questions arise directly from our exposition.  
Primarily, there is the question as to the optimality of our lower bounds for the binary encodings of $k$-Clique and the (weak) Pigeonhole Principle. In terms of the strongest refutation system $\RES(s)$ (largest $s$) for which we can prove superpolynomial bounds, then it is not hard to see that our method can go no further than $s= o((\log\log n)^\frac{1}{3})$ for the former, and \textcolor{black}{$s=O(\log^{\frac{1}{2}-\epsilon}n)$} for the latter. This is because we run out of space with the random restrictions as they become nested in the induction. We have no reason, however to think that our results are truly optimal, only that another method is needed to improve them.

A second question about binary encodings  concerns width and rank.  From our work it holds that in $\SA$ the unary  encoding can be harder than binary with respect to rank.  One might question whether  the same hold for Resolution width.  Are there formulas that require large width in the unary encoding, but can refuted in small width in the binary encoding? Notice that  in the other direction a large separation is not  possible. In particular it is straighforward  to see that  if the unary version of a formula $F$ over $n$ variables has  Resolution refutations of size $S$  and width $w$, then the binary version of $F$ has Resolution refutations of size $Sw^{\log n}$ and width $w \log n$.   

Other questions concern to what extent the converses of our lemmas might hold. 
The converse of Lemma \ref{lem:bin-fun} (even for $n^2$ replaced by some sublinear polynomial) is false. For example, consider the very weak Pigeonhole Principle of \cite{BussP97}. However, this example is somewhat disingenuous as the parameter $n$ is no longer polynomially related to the number of pigeons $m$ and the size of the clause set.

Finally an important question, not strictly regarding binary encodings, is the relative efficiency of SA+Squares with respect to Lasserre.
Is there a meaningful  size separation between $\SA+\mathrm{Squares}$ and Lasserre? Is Lasserre strictly stronger? At present we know only  the quadratic rank separation implied by our $\Omega(n)$ (Theorem \ref{thm:SA2lbLOP}) lower bound in  $\SA+\mathrm{Squares}$ and Potechin's upper $O(\sqrt{n})$ upper bound in Lasserre  for $\LOP$ .

\section*{Acknowledgments}
We are grateful to Ilario Bonacina for reading a preliminary version of this work and addressing  us some useful comments and observations. We are further grateful to several anonymous reviewers for  detailed corrections and comments. 

\color{black}

\section{Appendix}

\subsection{Potechin's encoding of $\LOP$}

Potechin provides a $O(\sqrt{n}\log n)$ upper bound in Lasserre for the following formulation of the linear ordering principle, which we purposefully give in the variables $x_{i,j}$ instead of our $v_{i,j}$.
\begin{gather*}
x_{i, j} + x_{j, i} = 1 \qquad \text{ for all distinct } i, j \in [n] \\
x_{i, j} x_{j, k} (1 - x_{i, k}) = 0 \qquad \text{ for all distinct } i, j, k \in [n] \\
\sum_{i \in [n], i \neq i} x_{i, j} = 1 + z_j^2
\end{gather*}
\noindent Note that anything we can prove using transitivity of the form $x_{i, j} x_{j, k} (1 - x_{i, k}) = 0$ we can prove using $v_{i, k} - v_{i, j} - v_{j, k}  \geq -1$.
That $v_{i, j} v_{j, k} \geq v_{i, j} v_{j, k} v_{i, k}$ comes from monotonicity, and the opposite inequality comes from lifting by $v_{i, j} v_{j, k}$:
\[
- v_{i, j} v_{j, k} \leq v_{i, j} v_{j, k} v_{i, k} - 2 v_{i, j} v_{j, k} \implies v_{i, j} v_{j, k} \leq v_{i, j} v_{j, k} v_{i, k}. 
\]
Potechin's proof moves along the following lines. Define an operator $E$ on terms that behaves the same
as the $\val$ used in \Cref{theorem:11}, but
\begin{enumerate}
	\item If some $z_j$ appears with degree $1$ in $T$, then $E[T] = 0$, and
	\item If $T$ is of the form $z_j^2 T'$ for some $j$ and $T'$, $E[T] = E\left[ \left( \sum_{i \in [n], i \neq i} x_{i j} - 1\right) T'\right]$
\end{enumerate}
Potechin proves the following.
\begin{lemma}[Lemma 4.2 in \cite{potechin2020sum}]
 \label{lem:badpoly}
	There exists a polynomial $g$, only in the variables $x_{i, j}$ and of degree $O(\sqrt{n} \log n)$ such that
	\[
	E\left[\left( \sum_{i \neq j} x_{i, j} - 1 \right) g^2 \right] = \val \left(\left( \sum_{i \neq j} x_{i, j} - 1 \right) g^2 \right)  < 0.
	\]
\end{lemma}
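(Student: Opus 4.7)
The plan is essentially to invoke Potechin's construction directly, since the statement is literally Lemma 4.2 of \cite{potechin2020sum}. The earlier material in this appendix has already done the required translation work: his transitivity axiom $x_{i,j}x_{j,k}(1 - x_{i,k}) = 0$ follows from our $v_{i,k} - v_{i,j} - v_{j,k} \geq -1$ (together with monotonicity), and the definition of $E$ bakes in the relation $z_j^2 = \sum_{i \neq j} x_{i,j} - 1$. In particular, $E$ restricted to polynomials in the $x_{i,j}$ variables agrees with the operator $\val$ from Theorem \ref{theorem:11}, namely the pseudo-expectation induced by the uniform distribution over linear orders of $[n]$. Since Potechin's $g$ lives entirely in the $x_{i,j}$ variables, it transfers to our setting without any modification, giving the stated equality and the negativity.

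If I had to reconstruct Potechin's argument from scratch, I would proceed as follows. The key observation is that for a distinguished element $j^*$,
\[
E\Bigl[\bigl(\textstyle\sum_{i \neq j^*} x_{i,j^*} - 1\bigr) g^2\Bigr]
\;=\; E[g^2]\cdot\bigl(\mathbb{E}_\mu[\mathrm{rank}(j^*)] - 1\bigr),
\]
where $\mu$ is the $g^2$-reweighted distribution on permutations and $\mathrm{rank}(j^*)$ counts how many elements lie below $j^*$. So to make this negative we want $g^2$ to concentrate on permutations in which $j^*$ sits near the very bottom of the order, so that $\mathbb{E}_\mu[\mathrm{rank}(j^*)] < 1$. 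A natural candidate is a low-degree polynomial approximation to the indicator that $j^*$ has rank at most $r$, built out of the elementary symmetric polynomials in $\{x_{i,j^*}\}_{i \neq j^*}$.

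The degree bound is the crux. A na\"{\i}ve indicator-style construction produces $g$ of degree linear in $r$, which is too large. The trick is to replace the indicator by a Chebyshev-type polynomial approximation tuned so that its square still localizes $j^*$ near the bottom while having degree only $O(\sqrt{r}\log r)$. Choosing $r$ so that the savings from localizing $j^*$ overcome the approximation error, and optimizing, yields $r = \Theta(n)$ and final degree $O(\sqrt{n}\log n)$.

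The main obstacle will be verifying that the Chebyshev-type approximation error does not swamp the gain. This boils down to moment estimates of the form $E[\prod_{i \in S} x_{i,j^*}]$ for $|S|$ up to the degree of $g$, which by definition of $\val$ equal $(n-1-|S|)!/(n-1)!$; these are standard hypergeometric quantities and the remaining analysis is a careful bookkeeping of error terms against the negative ``mass'' captured by pushing $j^*$ to the bottom.
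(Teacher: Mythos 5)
Your proposal matches the paper exactly: the paper offers no proof of this lemma, merely citing it as Lemma~4.2 of \cite{potechin2020sum} after verifying that the translation of Potechin's variables, transitivity axioms, and the operator $E$ into our formulation is faithful, and noting that $g$ lives only in the $x_{i,j}$ variables so it transfers verbatim. Your further sketch of how one might reconstruct Potechin's argument is supplementary material not present in the paper, but the core of your argument --- invocation by citation plus the translation already laid out earlier in the appendix --- is precisely what the paper relies on.
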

\noindent Potechin then proves the following Lasserre identity using only the totality and transitivity axioms (which exist also in our formulation). Note $S_k$ is the symmetric group on the elements of $[k]$.
\begin{lemma}[Lemma 4.7 in \cite{potechin2020sum}]
	For all $A = \{i_1, i_2, \ldots, i_{k} \}\subseteq [n]$, there exists a degree $k + 2$ proof that
	\[
	\sum_{\pi \in S_{k}} \prod_{j = 1}^{k - 1} x_{i_{\pi(j)} i_{\pi(j + 1)}} = 1.
	\]
\end{lemma}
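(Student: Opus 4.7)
My plan is to induct on $k$. The base cases $k=1$ (trivial; the empty product equals $1$) and $k=2$ (totality: $x_{i_1 i_2} + x_{i_2 i_1} = 1$) are immediate. For the inductive step, I would observe that every permutation $\pi \in S_k$ is uniquely specified by a pair $(\pi', \ell)$, where $\pi' \in S_{k-1}$ is the restriction of $\pi$ to the positions not containing $i_k$ and $\ell \in \{0, 1, \ldots, k-1\}$ is the position at which $i_k$ is inserted. Writing $a_j := i_{\pi'(j)}$ and $P_{\pi'} := \prod_{j=1}^{k-2} x_{a_j a_{j+1}}$, I would first show that modulo transitivity the product for the permutation $\pi$ corresponding to $(\pi', \ell)$ equals $P_{\pi'}$ times the \emph{insertion monomial} $x_{i_k a_1}$ (if $\ell = 0$), $x_{a_\ell i_k} x_{i_k a_{\ell+1}}$ (if $1 \leq \ell \leq k-2$), or $x_{a_{k-1} i_k}$ (if $\ell = k-1$). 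For the middle case, one reinstates the missing factor $x_{a_\ell a_{\ell+1}}$ using the transitivity identity $x_{a_\ell i_k} x_{i_k a_{\ell+1}}(1 - x_{a_\ell a_{\ell+1}}) = 0$.

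The crux is then the telescoping identity
\[
P_{\pi'} \cdot \Bigl( x_{i_k a_1} + \sum_{\ell=1}^{k-2} x_{a_\ell i_k} x_{i_k a_{\ell+1}} + x_{a_{k-1} i_k} \Bigr) = P_{\pi'},
\]
which I would prove as follows. Set $y_j := x_{a_j i_k}$ and $z_j := x_{i_k a_j}$. Totality gives $z_j = 1 - y_j$, so the parenthesised expression rewrites as $1 + \sum_{\ell=2}^{k-1} y_\ell - \sum_{\ell=1}^{k-2} y_\ell y_{\ell+1}$. The transitivity axiom $x_{a_\ell a_{\ell+1}} x_{a_{\ell+1} i_k} (1 - x_{a_\ell i_k}) = 0$, multiplied by $P_{\pi'}/x_{a_\ell a_{\ell+1}}$, yields $P_{\pi'} y_{\ell+1} = P_{\pi'} y_\ell y_{\ell+1}$ for each $\ell \in \{1, \ldots, k-2\}$, and these relations make the two sums in the bracket cancel exactly.

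Summing the telescoped identity over $\pi' \in S_{k-1}$ and applying the inductive hypothesis $\sum_{\pi'} P_{\pi'} = 1$ then gives the desired equality $\sum_{\pi \in S_k} \prod_j x_{i_{\pi(j)} i_{\pi(j+1)}} = 1$. For the degree bookkeeping, each use of transitivity is a degree-$3$ axiom multiplied by polynomials of degree $\leq k-2$, contributing at most degree $k+1$; totality, being degree $1$, contributes at most $k-1$; and the inductive hypothesis supplies degree $\leq k+1$ by the $k-1$ case of the lemma. The overall proof therefore fits within the claimed degree bound $k+2$.

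The main technical nuisance will be making the boundary insertion positions ($\ell = 0$ and $\ell = k-1$) fit uniformly into the decomposition and telescoping, and verifying that the indexing degenerates correctly for $k = 2$ (where the middle sum is empty). The combinatorial heart of the argument, once set up, is the clean two-line telescoping above.
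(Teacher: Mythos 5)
The paper does not reprove this lemma; it simply states Potechin's Lemma 4.7 and cites it, so there is no ``paper's own proof'' to compare against. Judged on its own merits, your argument is correct and almost certainly mirrors Potechin's: the insertion decomposition of $S_k$ over $S_{k-1}\times\{0,\dots,k-1\}$, reinstating the missing edge $x_{a_\ell a_{\ell+1}}$ in the interior insertions via the transitivity axiom $x_{a_\ell i_k}x_{i_k a_{\ell+1}}(1-x_{a_\ell a_{\ell+1}})=0$, rewriting via totality $z_j = 1-y_j$, and cancelling the shifted sum against the pair sum using $P_{\pi'}\,y_{\ell+1} = P_{\pi'}\,y_\ell y_{\ell+1}$ (derived from $x_{a_\ell a_{\ell+1}}x_{a_{\ell+1}i_k}(1-x_{a_\ell i_k})=0$ times the remaining $k-3$ factors of $P_{\pi'}$) all check out. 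One small remark on the bookkeeping: the transitivity axiom is multiplied by a monomial of degree $k-3$ (not $k-2$, since $P_{\pi'}$ already contains the edge appearing in the axiom), so those lifts live at degree $k$ rather than $k+1$; together with the degree-$(k+1)$ inductive hypothesis your proof comfortably sits inside the stated bound $k+2$. The boundary cases $\ell\in\{0,k-1\}$ and the degenerate range at $k=2$ do work out exactly as you anticipate.
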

Finally, Potechin proves that the `symmetric group average' of a polynomial can be shown to be equal to its valuation.
\begin{lemma}[Lemma 4.8 in \cite{potechin2020sum}]
	For any polynomial $p$ of degree $d$ in the variables $x_{i j}$, there exists a proof of at most degree $3d + 2$ that
	\[
	\frac{1}{n!} \sum_{\pi \in S_n} \pi(p) = \val(p)
	\]
	(where the action of $S_n$ is to permute the indices in the monomials of $p$).
\label{lem:potechin-4.8}
\end{lemma}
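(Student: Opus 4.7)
My plan is to prove the identity one monomial at a time, using the preceding Lemma 4.7 of \cite{potechin2020sum} twice: first to decompose the monomial into a sum over linear orders on its index set, and then, after symmetrizing, on each $k$-subset of $[n]$ to count orbits. By $\mathbb{R}$-linearity of both sides in $p$, it suffices to fix a monomial $T$ of degree $d$ and establish $\frac{1}{n!}\sum_{\pi \in S_n} \pi(T) = \val(T)$. Let $A = \{a_1, \ldots, a_k\} \subseteq [n]$ be the set of indices mentioned by $T$, so $k \leq 2d$, and call a linear order $\sigma$ on $A$ \emph{consistent with $T$} if $a$ precedes $b$ in $\sigma$ for every variable $x_{ab}$ of $T$. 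If $c_T$ denotes the number of consistent orders, then $\val(T) = c_T / k!$ by definition. Write $\mathcal{O}_\sigma := \prod_{j=1}^{k-1} x_{a_{\sigma(j)} a_{\sigma(j+1)}}$ for the monomial encoding $\sigma$.

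The first step is the ``collapse'' identity $T = \sum_{\sigma \text{ consistent}} \mathcal{O}_\sigma$. Lemma 4.7 of \cite{potechin2020sum} yields $\sum_{\sigma \in S_k} \mathcal{O}_\sigma = 1$ in degree $k+2$, and multiplying by $T$ gives $T = \sum_\sigma T \cdot \mathcal{O}_\sigma$ in degree at most $d + k + 2 \leq 3d + 2$. I then show $T \cdot \mathcal{O}_\sigma = \mathcal{O}_\sigma$ when $\sigma$ is consistent with $T$ and $T \cdot \mathcal{O}_\sigma = 0$ otherwise. In the consistent case, iterated multiplicative transitivity (in the form $x_{ij}x_{jk} = x_{ij}x_{jk}x_{ik}$, whose Lasserre derivation is spelled out just before Lemma \ref{lem:badpoly}) shows that each variable of $T$ already appears among the transitive consequences of $\mathcal{O}_\sigma$, so each $x_{ab}$ is absorbed into $\mathcal{O}_\sigma$. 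In the inconsistent case, some $x_{ab}$ of $T$ has $b$ preceding $a$ in $\sigma$; the same transitive chain applied along $\sigma$ from $b$ to $a$ gives $x_{ba}\mathcal{O}_\sigma = \mathcal{O}_\sigma$, and multiplying the totality axiom $x_{ab} + x_{ba} = 1$ by $\mathcal{O}_\sigma$ then yields $x_{ab}\mathcal{O}_\sigma = 0$, killing the whole product.

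For the symmetrization, summing the collapse identity over $\pi \in S_n$ gives $\sum_\pi \pi(T) = \sum_{\sigma \text{ consistent}} \sum_\pi \pi(\mathcal{O}_\sigma)$. A direct double count shows that the $S_n$-orbit of $\mathcal{O}_\sigma$ meets each linear-order monomial on each $k$-subset $A' \subseteq [n]$ exactly $(n-k)!$ times: the restriction of $\pi$ to $A$ is uniquely determined by the order-preserving bijection to the target order on $A'$, leaving $(n-k)!$ free choices on $[n]\setminus A$. A second application of Lemma 4.7 to each of the $\binom{n}{k}$ $k$-subsets, summed, yields the Lasserre identity $\sum_{\tau'} \mathcal{O}_{\tau'} = \binom{n}{k}$ (summed over all linear-order monomials on all $k$-subsets), again in degree $k+2$. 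Combining, $\sum_\pi \pi(\mathcal{O}_\sigma) = (n-k)!\binom{n}{k} = n!/k!$, so $\frac{1}{n!}\sum_\pi \pi(T) = c_T/k! = \val(T)$, as desired.

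The main obstacle will be a clean accounting of the degree. The dominant cost is the product $T \cdot \mathcal{O}_\sigma$ of degree $d + k - 1 \leq 3d - 1$; the multiplicative transitivity lifts used to absorb or kill each variable of $T$ are lifts of a degree-$1$ axiom by a subterm of $T \cdot \mathcal{O}_\sigma$ of degree at most $d + k - 2$, so the whole derivation stays inside degree $3d + 2$, exactly matching the bound. It is crucial that in the inconsistent case $x_{ba}\mathcal{O}_\sigma = \mathcal{O}_\sigma$ be obtained by a single transitive chain from $b$ back to $a$ along the $\sigma$-order on $A$, rather than by nesting per-level chains; otherwise the degree could potentially inflate by $k$ instead of a constant.
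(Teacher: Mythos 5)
The paper does not actually prove this lemma itself; it appears in the appendix purely as a citation of Lemma 4.8 in \cite{potechin2020sum}, quoted to show that Potechin's Lasserre upper bound transfers to the paper's encoding of $\LOP$. So there is no in-paper argument to compare against, and your proposal must stand on its own as a reconstruction of Potechin's proof.

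As such a reconstruction, your argument is correct and is essentially the intended one. The reduction to a single monomial $T$, the collapse $T=\sum_{\sigma \text{ consistent}}\mathcal{O}_\sigma$ obtained by multiplying the Lemma-4.7 identity on the index set $A$ by $T$, and the kill-or-absorb dichotomy for $T\cdot\mathcal{O}_\sigma$ via multiplicative transitivity and totality are all sound. Your worry about degree inflation in the chain argument is the right thing to worry about, and your fix is the right one: you should realize $\mathcal{O}_\sigma x_{c_i c_j}=\mathcal{O}_\sigma$ by telescoping
$\mathcal{O}_\sigma(1-x_{c_i c_j})=\sum_{m=i+1}^{j-1}\mathcal{O}_\sigma(x_{c_i c_m}-x_{c_i c_{m+1}})$, each summand being a single lift of multiplicative transitivity by $\mathcal{O}_\sigma$ (degree $k+2$), rather than by iteratively lifting a previously derived inequality by $(1-x_{c_i c_{m+1}})$ (which would nest and cost an extra degree per step). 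With that done, absorbing the $d$ factors of $T$ one at a time costs a further lift of degree at most $d-1$, giving $k+d+1\le 3d+1$, and the Lemma-4.7 applications (once lifted by $T$, once plain) top out at $k+d+2\le 3d+2$; so the whole derivation sits inside degree $3d+2$ as claimed. The symmetrization count $\sum_{\pi\in S_n}\pi(\mathcal{O}_\sigma)=(n-k)!\binom{n}{k}=n!/k!$ and the identification $\val(T)=c_T/k!$ are both correct. One small remark: you should state explicitly that each permutation $\pi$ applied to the collapse proof again yields a valid proof of the same degree (the axiom set is $S_n$-invariant), which is what licenses summing the collapse identity over $\pi\in S_n$; this is implicit in your write-up but worth making explicit.
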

Lemma \ref{lem:badpoly} and~\ref{lem:potechin-4.8} together furnish a Lasserre refutation of the required form.

\color{black} 

\subsection{Recapitulation of the unary and binary encodings of the main principles}

\begin{figure}
\color{black}
\begin{tabular}{|c|c|c|}
\hline
principle & unary case & binary case \\
\hline
 & 
$\neg v_{i,a} \vee \neg v_{j,b}$ & $
(\omega^{1-a_1}_{i,1}\vee \ldots \vee \omega^{1-a_{\log n}}_{i,\log n})$
\\
& whenever $\neg E((i,a),(j,b))$ & $\vee$ \\
$\mathrm{(Bin\mbox{-})Clique^k_n}$ & and & $(\omega^{1-b_1}_{j,1}\vee \ldots \vee \omega^{1-b_{\log n}}_{j,{\log n}})$ \\
& $\bigvee_{a \in [n]} v_{i,a}$ & whenever $\neg E((i,a),(j,b))$ \\
& for each block $i \in [k]$  & where binary representations are \\
& & $a = a_1 \ldots  a_{\log n}$ \\
& & $b = b_1 \ldots  b_{\log n}$ \\
\hline
& $\neg v_{i,a} \vee \neg v_{j,a}$ & $
(\omega^{1-a_1}_{i,1}\vee \ldots \vee \omega^{1-a_{\log n}}_{i,\log n})$
\\
& whenever $i \neq j$ & $\vee$ \\
$\mathrm{(Bin\mbox{-})PHP}^m_n$  & and & $(\omega^{1-a_1}_{j,1}\vee \ldots \vee \omega^{1-a_{\log n}}_{j,{\log n}})$ \\
& $\bigvee_{a \in [n]} v_{i,a}$ & whenever $i \neq j$ \\
& for each pigeon $i\in [m]$  & where binary representation is \\
& & $a = a_1 \ldots  a_{\log n}$ \\
\hline
& $\neg v_{i,i}$ &
$\neg \nu_{i,i}$ for all $i \in [n]$
\\
&  for all $i \in [n]$ & $\neg \nu_{i,j} \vee \neg \nu_{j,k} \vee \nu_{i,k}$ \\
$\mathrm{(Bin\mbox{-})OP}n$  & $\neg v_{i,j} \vee \neg v_{j,k} \vee v_{i,k}$  & for all $i,j,k \in [n]$ \\
& for all $i,j,k \in [n]$ & $\bigvee_{i \in [n]} \nu_{i,j}$ for all $j\in [n]$\\
& and & and \\
& $\bigvee_{a \in [n]} v_{i,a}$  & $(\omega^{1-a_1}_{i,1}\vee \ldots \vee \omega^{1-a_{\log n}}_{i,\log n} \vee \nu_{a,i})$ \\
& for all $a \in [n]$ & for all $a \in [n]$ whose binary representation is \\
&  & $a_1 \ldots  a_{\log n}$ \\
\hline
\end{tabular}
\caption{Recapitulation of the unary and binary encodings of the main principles.}
\label{fig:last-recap}
\end{figure}

\begin{thebibliography}{10}

\bibitem{DBLP:journals/cc/Alekhnovich11}
{\sc M.~Alekhnovich}, {\em Lower bounds for k-{DNF} resolution on random
  3-{CNF}s}, Computational Complexity, 20 (2011), pp.~597--614,
  \url{https://doi.org/10.1007/s00037-011-0026-0},
  \url{https://doi.org/10.1007/s00037-011-0026-0}.

\bibitem{AlekhnovichBRW02}
{\sc M.~Alekhnovich, E.~Ben{-}Sasson, A.~A. Razborov, and A.~Wigderson}, {\em
  Space complexity in propositional calculus}, {SIAM} J. Comput., 31 (2002),
  pp.~1184--1211, \url{https://doi.org/10.1137/S0097539700366735},
  \url{https://doi.org/10.1137/S0097539700366735}.

\bibitem{DBLP:journals/tcs/Atserias03}
{\sc A.~Atserias}, {\em Improved bounds on the weak pigeonhole principle and
  infinitely many primes from weaker axioms}, Theor. Comput. Sci., 295 (2003),
  pp.~27--39, \url{https://doi.org/10.1016/S0304-3975(02)00394-8},
  \url{https://doi.org/10.1016/S0304-3975(02)00394-8}.

\bibitem{DBLP:conf/stoc/AtseriasBRLNR18}
{\sc A.~Atserias, I.~Bonacina, S.~F. de~Rezende, M.~Lauria, J.~Nordstr{\"{o}}m,
  and A.~A. Razborov}, {\em Clique is hard on average for regular resolution},
  in Proceedings of the 50th Annual {ACM} {SIGACT} Symposium on Theory of
  Computing, {STOC} 2018, Los Angeles, CA, USA, June 25-29, 2018,
  I.~Diakonikolas, D.~Kempe, and M.~Henzinger, eds., {ACM}, 2018, pp.~866--877,
  \url{https://doi.org/10.1145/3188745.3188856},
  \url{http://doi.acm.org/10.1145/3188745.3188856}.

\bibitem{DBLP:journals/corr/abs-2012-09476}
{\sc A.~Atserias, I.~Bonacina, S.~F. de~Rezende, M.~Lauria, J.~Nordstr{\"{o}}m,
  and A.~A. Razborov}, {\em Clique is hard on average for regular resolution},
  CoRR, abs/2012.09476 (2020), \url{https://arxiv.org/abs/2012.09476},
  \url{https://arxiv.org/abs/2012.09476}.

\bibitem{DBLP:journals/iandc/AtseriasBE02}
{\sc A.~Atserias, M.~L. Bonet, and J.~L. Esteban}, {\em Lower bounds for the
  weak pigeonhole principle and random formulas beyond resolution}, Inf.
  Comput., 176 (2002), pp.~136--152,
  \url{https://doi.org/10.1006/inco.2002.3114},
  \url{https://doi.org/10.1006/inco.2002.3114}.

\bibitem{DBLP:journals/jcss/AtseriasD08}
{\sc A.~Atserias and V.~Dalmau}, {\em A combinatorial characterization of
  resolution width}, J. Comput. Syst. Sci., 74 (2008), pp.~323--334,
  \url{https://doi.org/10.1016/j.jcss.2007.06.025},
  \url{https://doi.org/10.1016/j.jcss.2007.06.025}.

\bibitem{AtseriasLN14}
{\sc A.~Atserias, M.~Lauria, and J.~Nordstr{\"{o}}m}, {\em Narrow proofs may be
  maximally long}, CoRR, abs/1409.2731 (2014),
  \url{http://arxiv.org/abs/1409.2731}, \url{https://arxiv.org/abs/1409.2731}.

\bibitem{NarrowMaximallyLong}
{\sc A.~Atserias, M.~Lauria, and J.~Nordstr{\"{o}}m}, {\em Narrow proofs may be
  maximally long}, {ACM} Trans. Comput. Log., 17 (2016), pp.~19:1--19:30,
  \url{https://doi.org/10.1145/2898435}, \url{https://doi.org/10.1145/2898435}.

\bibitem{DBLP:journals/jacm/AtseriasM20}
{\sc A.~Atserias and M.~M{\"{u}}ller}, {\em Automating resolution is
  {NP}-hard}, Journal of the {ACM}, 67 (2020), pp.~31:1--31:17,
  \url{https://doi.org/10.1145/3409472}, \url{https://doi.org/10.1145/3409472}.

\bibitem{DBLP:journals/jsyml/AtseriasMO15}
{\sc A.~Atserias, M.~M{\"{u}}ller, and S.~Oliva}, {\em Lower bounds for
  {DNF}-refutations of a relativized weak pigeonhole principle}, J. Symb. Log.,
  80 (2015), pp.~450--476, \url{https://doi.org/10.1017/jsl.2014.56},
  \url{https://doi.org/10.1017/jsl.2014.56}.

\bibitem{SoS-survey}
{\sc B.~Barak and D.~Steurer}, {\em Sum-of-squares proofs and the quest toward
  optimal algorithms}, in Proceedings of International Congress of
  Mathematicians (ICM), vol.~IV, 2014, pp.~509--533.

\bibitem{DBLP:conf/coco/BeameIS01}
{\sc P.~Beame, R.~Impagliazzo, and A.~Sabharwal}, {\em Resolution complexity of
  independent sets in random graphs}, in Proceedings of the 16th Annual {IEEE}
  Conference on Computational Complexity, Chicago, Illinois, USA, June 18-21,
  2001, {IEEE} Computer Society, 2001, pp.~52--68,
  \url{https://doi.org/10.1109/CCC.2001.933872},
  \url{https://doi.org/10.1109/CCC.2001.933872}.

\bibitem{DBLP:conf/focs/BeameP96}
{\sc P.~Beame and T.~Pitassi}, {\em Simplified and improved resolution lower
  bounds}, in 37th Annual Symposium on Foundations of Computer Science, {FOCS}
  '96, Burlington, Vermont, USA, 14-16 October, 1996, {IEEE} Computer Society,
  1996, pp.~274--282, \url{https://doi.org/10.1109/SFCS.1996.548486},
  \url{https://doi.org/10.1109/SFCS.1996.548486}.

\bibitem{Ben-sasson99shortproofs}
{\sc E.~Ben-sasson and A.~Wigderson}, {\em Short proofs are narrow - resolution
  made simple}, in Journal of the ACM, 1999, pp.~517--526.

\bibitem{BeyersdorffGL10}
{\sc O.~Beyersdorff, N.~Galesi, and M.~Lauria}, {\em A lower bound for the
  pigeonhole principle in tree-like resolution by asymmetric prover-delayer
  games}, Inf. Process. Lett., 110 (2010), pp.~1074--1077,
  \url{https://doi.org/10.1016/j.ipl.2010.09.007},
  \url{http://dx.doi.org/10.1016/j.ipl.2010.09.007}.

\bibitem{Beyersdorff:2013:TOCL}
{\sc O.~Beyersdorff, N.~Galesi, and M.~Lauria}, {\em Parameterized complexity
  of {DPLL} search procedures}, ACM Trans. Comput. Logic, 14 (2013),
  pp.~20:1--20:21, \url{https://doi.org/10.1145/2499937.2499941},
  \url{http://doi.acm.org/10.1145/2499937.2499941}.

\bibitem{DBLP:journals/toct/BeyersdorffGLR12}
{\sc O.~Beyersdorff, N.~Galesi, M.~Lauria, and A.~A. Razborov}, {\em
  Parameterized bounded-depth frege is not optimal}, {TOCT}, 4 (2012),
  pp.~7:1--7:16, \url{https://doi.org/10.1145/2355580.2355582},
  \url{http://doi.acm.org/10.1145/2355580.2355582}.

\bibitem{BOLLO80}
{\sc B.~Bollob{\'{a}}s}, {\em Threshold functions for small subgraphs}, Math.
  Proc. Cambridge Philos. Soc., 90 (1980), pp.~197--206.

\bibitem{Bollob_s_2001}
{\sc B.~Bollob{\'{a}}s}, {\em Random Graphs}, Cambridge University Press, 2001,
  \url{https://doi.org/10.1017/cbo9780511814068},
  \url{https://doi.org/10.1017%2Fcbo9780511814068}.

\bibitem{DBLP:journals/jacm/BonacinaG15}
{\sc I.~Bonacina and N.~Galesi}, {\em A framework for space complexity in
  algebraic proof systems}, J. {ACM}, 62 (2015), pp.~23:1--23:20,
  \url{https://doi.org/10.1145/2699438},
  \url{http://doi.acm.org/10.1145/2699438}.

\bibitem{DBLP:journals/siamcomp/BonacinaGT16}
{\sc I.~Bonacina, N.~Galesi, and N.~Thapen}, {\em Total space in resolution},
  {SIAM} J. Comput., 45 (2016), pp.~1894--1909,
  \url{https://doi.org/10.1137/15M1023269},
  \url{https://doi.org/10.1137/15M1023269}.

\bibitem{DBLP:journals/cc/BonetG01}
{\sc M.~L. Bonet and N.~Galesi}, {\em Optimality of size-width tradeoffs for
  resolution}, Computational Complexity, 10 (2001), pp.~261--276,
  \url{https://doi.org/10.1007/s000370100000},
  \url{https://doi.org/10.1007/s000370100000}.

\bibitem{BussP97}
{\sc S.~R. Buss and T.~Pitassi}, {\em Resolution and the weak pigeonhole
  principle}, in Computer Science Logic, 11th International Workshop, {CSL}
  '97, Annual Conference of the EACSL, Aarhus, Denmark, August 23-29, 1997,
  Selected Papers, 1997, pp.~149--156,
  \url{https://doi.org/10.1007/BFb0028012},
  \url{http://dx.doi.org/10.1007/BFb0028012}.

\bibitem{Chvatal73}
{\sc V.~Chv{\'a}tal}, {\em Edmonds polytopes and a hierarchy of combinatorial
  problems}, Discrete Math., 4 (1973), pp.~305--337.

\bibitem{StefanGap}
{\sc S.~S. Dantchev}, {\em Rank complexity gap for {Lov{\'a}sz}-{Schrijver} and
  {Sherali}-{Adams} proof systems}, in STOC '07: Proceedings of the
  thirty-ninth annual ACM symposium on Theory of computing, New York, NY, USA,
  2007, ACM Press, pp.~311--317,
  \url{https://doi.org/http://doi.acm.org/10.1145/1250790.1250837}.

\bibitem{DantchevGalesiMartin}
{\sc S.~S. Dantchev, N.~Galesi, and B.~Martin}, {\em Resolution and the binary
  encoding of combinatorial principles}, in 34th Computational Complexity
  Conference, {CCC} 2019, July 18-20, 2019, New Brunswick, NJ, {USA.}, 2019,
  pp.~6:1--6:25, \url{https://doi.org/10.4230/LIPIcs.CCC.2019.6},
  \url{https://doi.org/10.4230/LIPIcs.CCC.2019.6}.
\newblock See http://arxiv.org/abs/1809.02843.

\bibitem{DBLP:conf/latin/DantchevGM20}
{\sc S.~S. Dantchev, A.~Ghani, and B.~Martin}, {\em Sherali-adams and the
  binary encoding of combinatorial principles}, in {LATIN} 2020: Theoretical
  Informatics - 14th Latin American Symposium, S{\~{a}}o Paulo, Brazil, January
  5-8, 2021, Proceedings, Y.~Kohayakawa and F.~K. Miyazawa, eds., vol.~12118 of
  Lecture Notes in Computer Science, Springer, 2020, pp.~336--347,
  \url{https://doi.org/10.1007/978-3-030-61792-9\_27},
  \url{https://doi.org/10.1007/978-3-030-61792-9\_27}.

\bibitem{TCS2009}
{\sc S.~S. Dantchev, B.~Martin, and M.~N.~C. Rhodes}, {\em Tight rank lower
  bounds for the {Sherali-Adams} proof system}, Theor. Comput. Sci., 410
  (2009), pp.~2054--2063, \url{https://doi.org/10.1016/j.tcs.2009.01.002},
  \url{https://doi.org/10.1016/j.tcs.2009.01.002}.

\bibitem{DantchevR01}
{\sc S.~S. Dantchev and S.~Riis}, {\em Tree resolution proofs of the weak
  pigeon-hole principle}, in Proceedings of the 16th Annual {IEEE} Conference
  on Computational Complexity, Chicago, Illinois, USA, June 18-21, 2001, 2001,
  pp.~69--75, \url{https://doi.org/10.1109/CCC.2001.933873},
  \url{http://doi.ieeecomputersociety.org/10.1109/CCC.2001.933873}.

\bibitem{DantchevR03}
{\sc S.~S. Dantchev and S.~Riis}, {\em On relativisation and complexity gap},
  in Computer Science Logic, 17th International Workshop, {CSL} 2003, 12th
  Annual Conference of the EACSL, and 8th Kurt G{\"{o}}del Colloquium, {KGC}
  2003, Vienna, Austria, August 25-30, 2003, Proceedings, M.~Baaz and J.~A.
  Makowsky, eds., vol.~2803 of Lecture Notes in Computer Science, Springer,
  2003, pp.~142--154, \url{https://doi.org/10.1007/978-3-540-45220-1\_14},
  \url{https://doi.org/10.1007/978-3-540-45220-1\_14}.

\bibitem{DBLP:journals/eccc/GoosNPRSR20}
{\sc S.~F. de~Rezende, M.~G{\"{o}}{\"{o}}s, J.~Nordstr{\"{o}}m, T.~Pitassi,
  R.~Robere, and D.~Sokolov}, {\em Automating algebraic proof systems is
  {NP}-hard}, Electron. Colloquium Comput. Complex. ( To appear in STOC 2021),
  27 (2020), p.~64, \url{https://eccc.weizmann.ac.il/report/2020/064}.

\bibitem{EGM}
{\sc J.~L. Esteban, N.~Galesi, and J.~Messner}, {\em On the complexity of
  resolution with bounded conjunctions}, Theor. Comput. Sci., 321 (2004),
  pp.~347--370, \url{https://doi.org/10.1016/j.tcs.2004.04.004},
  \url{https://doi.org/10.1016/j.tcs.2004.04.004}.

\bibitem{DBLP:journals/siamcomp/FilmusLNRT15}
{\sc Y.~Filmus, M.~Lauria, J.~Nordstr{\"{o}}m, N.~Ron{-}Zewi, and N.~Thapen},
  {\em Space complexity in polynomial calculus}, {SIAM} J. Comput., 44 (2015),
  pp.~1119--1153, \url{https://doi.org/10.1137/120895950},
  \url{https://doi.org/10.1137/120895950}.

\bibitem{DBLP:journals/tocl/GalesiL10}
{\sc N.~Galesi and M.~Lauria}, {\em Optimality of size-degree tradeoffs for
  polynomial calculus}, {ACM} Trans. Comput. Log., 12 (2010), pp.~4:1--4:22,
  \url{https://doi.org/10.1145/1838552.1838556},
  \url{https://doi.org/10.1145/1838552.1838556}.

\bibitem{Gomory1960}
{\sc R.~E. Gomory}, {\em Solving linear programming problems in integers}, in
  Combinatorial Analysis, Proceedings of Symposia in Applied Mathematics,
  R.~Bellman and M.~Hall, eds., vol.~10, Providence, RI, 1960.

\bibitem{russians}
{\sc D.~Grigoriev, E.~A. Hirsch, and D.~V. Pasechnik}, {\em Complexity of
  semi-algebraic proofs}, in STACS '02: Proceedings of the 19th Annual
  Symposium on Theoretical Aspects of Computer Science, London, UK, 2002,
  Springer-Verlag, pp.~419--430.

\bibitem{Haken}
{\sc A.~Haken}, {\em The intractability of resolution}, Theor. Comput. Sci., 39
  (1985), pp.~297--308.

\bibitem{HallsMarriage}
{\sc P.~Hall}, {\em {On Representatives of Subsets}}, Journal of the London
  Mathematical Society, s1-10 (1935), pp.~26--30,
  \url{https://doi.org/10.1112/jlms/s1-10.37.26},
  \url{https://doi.org/10.1112/jlms/s1-10.37.26},
  \url{https://arxiv.org/abs/https://academic.oup.com/jlms/article-pdf/s1-10/1/26/6471457/s1-10-37-26.pdf}.

\bibitem{hastad}
{\sc J.~H\r{a}stad}, {\em Computational Limitations for Small Depth Circuits},
  MIT Press, 1987.

\bibitem{DBLP:conf/focs/HrubesP17}
{\sc P.~Hrube{\v s} and P.~Pudl{\'{a}}k}, {\em Random formulas, monotone
  circuits, and interpolation}, in 58th {IEEE} Annual Symposium on Foundations
  of Computer Science, {FOCS} 2017, Berkeley, CA, USA, October 15-17, 2017,
  C.~Umans, ed., {IEEE} Computer Society, 2017, pp.~121--131,
  \url{https://doi.org/10.1109/FOCS.2017.20},
  \url{https://doi.org/10.1109/FOCS.2017.20}.

\bibitem{DBLP:journals/eccc/ItsyksonR20}
{\sc D.~Itsykson and A.~Riazanov}, {\em Proof complexity of natural formulas
  via communication arguments}, Electron. Colloquium Comput. Complex., 27
  (2020), p.~184, \url{https://eccc.weizmann.ac.il/report/2020/184}.

\bibitem{krabook95}
{\sc J.~Kraj{\'\i}{\v c}ek}, {\em Bounded arithmetic, propositional logic and
  complexity theory}, Cambridge University Press, 1995.

\bibitem{kraresk}
{\sc J.~Kraj{\'\i}{\v c}ek}, {\em On the weak pigeonhole principle}, Fundamenta
  Mathematicae, 170 (2001), pp.~123--140.

\bibitem{DBLP:journals/acta/Krishnamurthy85}
{\sc B.~Krishnamurthy}, {\em Short proofs for tricky formulas}, Acta Inf., 22
  (1985), pp.~253--275, \url{https://doi.org/10.1007/BF00265682},
  \url{https://doi.org/10.1007/BF00265682}.

\bibitem{DBLP:journals/eccc/ECCC-TR99-041}
{\sc O.~Kullmann}, {\em Investigating a general hierarchy of polynomially
  decidable classes of {CNF}'s based on short tree-like resolution proofs},
  Electronic Colloquium on Computational Complexity {(ECCC)},  (1999),
  \url{http://eccc.hpi-web.de/eccc-reports/1999/TR99-041/index.html}.

\bibitem{Commander}
{\sc G.~Kwon and W.~Klieber}, {\em Efficient {{CNF}} encoding for selecting 1
  from n objects}, in Fourth Workshop on Constraints in Formal Verification
  (CFV '07), 2007.

\bibitem{Lasserre2001}
{\sc J.~B. Lasserre}, {\em An explicit exact {SDP} relaxation for nonlinear 0-1
  programs}, in Proceedings of the 8th International Conference on Integer
  Programming and Combinatorial Optimization (IPCO01), K.~Aardal and
  B.~Gerards, eds., vol.~2081 of Lecture Notes in Computer Science, Springer,
  Berlin, Heidelberg, 2001, pp.~293--303.

\bibitem{laurent01comparison}
{\sc M.~Laurent}, {\em A comparison of the {Sherali}-{Adams},
  {Lov\'asz}-{Schrijver} and {Lasserre} relaxations for $0-1$ programming},
  Tech. Report PNA--R0108, Amsterdam, 2001.

\bibitem{Lauria2017}
{\sc M.~Lauria and J.~Nordstr{\"o}m}, {\em Tight size-degree bounds for
  sums-of-squares proofs}, computational complexity, 26 (2017), pp.~911--948,
  \url{https://doi.org/10.1007/s00037-017-0152-4},
  \url{https://doi.org/10.1007/s00037-017-0152-4}.

\bibitem{DBLP:journals/combinatorica/LauriaPRT17}
{\sc M.~Lauria, P.~Pudl{\'{a}}k, V.~R{\"{o}}dl, and N.~Thapen}, {\em The
  complexity of proving that a graph is ramsey}, Combinatorica, 37 (2017),
  pp.~253--268, \url{https://doi.org/10.1007/s00493-015-3193-9},
  \url{https://doi.org/10.1007/s00493-015-3193-9}.

\bibitem{MassimoRamsey}
{\sc M.~Lauria, P.~Pudl{\'a}k, V.~R{\"o}dl, and N.~Thapen}, {\em The complexity
  of proving that a graph is ramsey}, Combinatorica, 37 (2017), pp.~253--268,
  \url{https://doi.org/10.1007/s00493-015-3193-9},
  \url{https://doi.org/10.1007/s00493-015-3193-9}.

\bibitem{LovaszS1991}
{\sc L.~Lov{\'a}sz and A.~Schrijver}, {\em Cones of matrices and set-functions
  and $0$-$1$ optimization}, SIAM J. Optimization, 1 (1991), pp.~166--190.

\bibitem{DBLP:journals/jcss/MacielPW02}
{\sc A.~Maciel, T.~Pitassi, and A.~R. Woods}, {\em A new proof of the weak
  pigeonhole principle}, J. Comput. Syst. Sci., 64 (2002), pp.~843--872,
  \url{https://doi.org/10.1006/jcss.2002.1830},
  \url{https://doi.org/10.1006/jcss.2002.1830}.

\bibitem{ChernoffSource}
{\sc M.~Mitzenmacher and E.~Upfal}, {\em Probability and Computing: Randomized
  Algorithms and Probabilistic Analysis}, Cambridge University Press, 2005.

\bibitem{JustynaBook}
{\sc J.~Petke}, {\em Bridging Constraint Satisfaction and Boolean
  Satisfiability}, Artificial Intelligence: Foundations, Theory, and
  Algorithms, Springer, 2015, \url{https://doi.org/10.1007/978-3-319-21810-6},
  \url{http://dx.doi.org/10.1007/978-3-319-21810-6}.

\bibitem{potechin2020sum}
{\sc A.~Potechin}, {\em Sum of squares bounds for the ordering principle}, in
  Proceedings of the 35th Computational Complexity Conference, 2020, pp.~1--37.

\bibitem{proofs_as_games}
{\sc P.~Pudl\'{a}k}, {\em Proofs as games}, American Mathematical Monthly,
  (2000), pp.~541--550.

\bibitem{DBLP:journals/jacm/Raz04}
{\sc R.~Raz}, {\em Resolution lower bounds for the weak pigeonhole principle},
  J. {ACM}, 51 (2004), pp.~115--138,
  \url{https://doi.org/10.1145/972639.972640},
  \url{http://doi.acm.org/10.1145/972639.972640}.

\bibitem{RazborovAnnals}
{\sc A.~Razborov}, {\em Pseudorandom generators hard for k-{DNF} resolution and
  polynomial calculus resolution}, Annals of Mathematics, 181 (2015),
  pp.~415--472.

\bibitem{10.1007/3-540-46011-X_8}
{\sc A.~A. Razborov}, {\em Proof complexity of pigeonhole principles}, in
  Developments in Language Theory, W.~Kuich, G.~Rozenberg, and A.~Salomaa,
  eds., Berlin, Heidelberg, 2002, Springer Berlin Heidelberg, pp.~100--116.

\bibitem{DBLP:journals/tcs/Razborov03}
{\sc A.~A. Razborov}, {\em Resolution lower bounds for the weak functional
  pigeonhole principle}, Theor. Comput. Sci., 1 (2003), pp.~233--243,
  \url{https://doi.org/10.1016/S0304-3975(02)00453-X},
  \url{https://doi.org/10.1016/S0304-3975(02)00453-X}.

\bibitem{Rhodes07}
{\sc M.~Rhodes}, {\em Rank lower bounds for the {Sherali}-{Adams} operator}, in
  CiE, S.~B. Cooper, B.~L{\"o}we, and A.~Sorbi, eds., vol.~4497 of Lecture
  Notes in Computer Science, Springer, 2007, pp.~648--659.

\bibitem{SorenGap}
{\sc S.~Riis}, {\em A complexity gap for tree resolution}, Computational
  Complexity, 10 (2001), pp.~179--209.

\bibitem{DBLP:journals/siamcomp/SegerlindBI04}
{\sc N.~Segerlind, S.~R. Buss, and R.~Impagliazzo}, {\em A switching lemma for
  small restrictions and lower bounds for k-{DNF} resolution}, {SIAM} J.
  Comput., 33 (2004), pp.~1171--1200,
  \url{https://doi.org/10.1137/S0097539703428555},
  \url{https://doi.org/10.1137/S0097539703428555}.

\bibitem{SheraliA90}
{\sc H.~D. Sherali and W.~P. Adams}, {\em A hierarchy of relaxations between
  the continuous and convex hull representations for zero-one programming
  problems}, SIAM J. Discrete Math., 3 (1990), pp.~411--430.

\bibitem{DBLP:journals/acta/Stalmarck96}
{\sc G.~St{\aa}lmarck}, {\em Short resolution proofs for a sequence of tricky
  formulas}, Acta Inf., 33 (1996), pp.~277--280,
  \url{https://doi.org/10.1007/s002360050044},
  \url{https://doi.org/10.1007/s002360050044}.

\bibitem{TS11}
{\sc N.~Thapen and A.~Skelley}, {\em The provably total search problems of
  bounded arithmetic}, Proceedings of the London Mathematical Society, 103
  (2011), pp.~106--138.

\bibitem{Lint1992ACI}
{\sc J.~H. van Lint and R.~M. Wilson}, {\em A course in combinatorics}, 1992.

\bibitem{Walsh00}
{\sc T.~Walsh}, {\em {SAT} v {CSP}}, in Principles and Practice of Constraint
  Programming - {CP} 2000, 6th International Conference, Singapore, September
  18-21, 2000, Proceedings, 2000, pp.~441--456,
  \url{https://doi.org/10.1007/3-540-45349-0\_32},
  \url{https://doi.org/10.1007/3-540-45349-0\_32}.

\end{thebibliography}
\end{document}